\newcommand{\figcaption}[1]{\def\@captype{figure}\caption{#1}}
\newcommand{\tblcaption}[1]{\def\@captype{table}\caption{#1}}
\def\bx{\bm{x}}
\def\bs{\bm{s}}
\def\F{\mathbb{F}}
\def\C{\mathbb{C}}
\def\Pr{{\rm Pr}}
\def\rE{{\rm E}}
\def\mix{\mathop{\hbox{\rm mix}}\nolimits}
\def\ideal{\mathop{\hbox{\rm ideal}}\nolimits}
\def\per{\mathop{\hbox{\rm per}}\nolimits}
\def\est{\mathop{\hbox{\rm est}}\nolimits}
\newtheorem{condition}    {Condition}
\newtheorem{theorem}{Theorem}
\newtheorem{definition}{Definition}
\newtheorem{rem}     {Remark}
 \newenvironment{proof}{\par \noindent
            {\bf Proof. \hspace{2mm}}}{\hfill$\Box$ \vspace*{3mm}}
\def\Label#1{\label{#1} \hbox{[ #1 ]} }
\def\Label{\label}
\begin{document}

\title[Finite-key analysis of the decoy method]{Security analysis of the decoy method with the Bennett-Brassard 1984 protocol for finite key lengths}

\author{Masahito Hayashi$^{1,2}$ and Ryota Nakayama$^{3,1}$}

\address{$^{1}$ Graduate School of Mathematics, Nagoya University, Furocho, Chikusa-ku, Nagoya, 464-860 Japan
}
\address{$^{2}$
Centre for Quantum Technologies, National University of Singapore, 3 Science Drive 2, Singapore 117542
}
\address{$^{3}$
Graduate School of Information Sciences, Tohoku University, Aoba-ku, Sendai, 980-8579, Japan}

\begin{abstract}
This paper provides a formula for the sacrifice bit-length for privacy amplification with the Bennett-Brassard
1984 protocol for finite key lengths when we employ the decoy method. 
Using the formula, we can guarantee the security parameter for realizable quantum key distribution system. 
The key generation rates with finite key lengths are numerically evaluated. 
The proposed method improves the existing key generation rate even in the asymptotic setting.
\end{abstract}

\pacs{03.67.Dd,03.67.Hk,03.67.-a,05.30.Jp}

\vspace{2pc}
\noindent{\it Keywords}: 
decoy method, 
finite-key length,
BB84 protocol,
phase error,
interval estimation,
percent point


\maketitle


\section{Introduction}\Label{s0}
\subsection{Background}
Quantum key distribution (QKD) protocol proposed by 
Bennett-Brassard \cite{BB84} is one of the most applicable protocols in quantum information.
The conventional BB84 QKD protocol generates keys with the 
matched bases\footnote{In this paper, when Alice's basis 
is the same as Bob's basis, the basis is called matched.}, 
which are called raw keys and are trivially shown to be secure 
with the noiseless channel and the perfect single photon source.
However, in the realistic setting, there are two obstacles for security.
One is the noise of the communication quantum channel.
Due to the presence of the noise, the eavesdropper can obtain a part of information of raw keys behind the noise.
The second one is the imperfection of the photon source.
If the sender sends the two-photon state instead of the single photon state,
the eavesdropper can obtain one photon so that she can obtain information perfectly.
Many realized QKD systems have been realized with weak coherent pulses.
In this case, the photon number of transmitted pulses 
obeys the Poisson distribution, whose average is given by the intensity $\mu$ of the pulse.
The first problem can be resolved by the application of 
the error correction and the random privacy amplification to raw keys \cite{SP00,M01,WMU06,Hayashi3}.
In the privacy amplification stage, we amplify the security of our raw keys by sacrificing a part of our raw keys.
The security of final keys depends on the decreasing number of keys in the privacy amplification stage,
which is called the {\it sacrifice bit-length}.
Shor-Preskill \cite{SP00} and Mayers \cite{M01} showed 
that this method gives the secure keys asymptotically
when the rate of the sacrifice bit-length is greater than a certain amount.
In order to solve the second problem, 
Gottesman-Lo-L\"{u}tkenhaus-Preskill (GLLP)\cite{GLLP} extended their result to the case when the photon source has the imperfection.
However, GLLP's result assumes 
the fractions of respective photon number pulses among received pulses.
Indeed, there is a possibility that the eavesdropper 
can control the receiver's detection rate dependently of the photon number
because pulses with the different photon number can be distinguished by 
the eavesdropper.
In order to solve this problem, we need to estimate the detection rate of the single photon pulses.
Hwang proposed the decoy method to estimate the detection rate \cite{decoy1}.
This method has been improved by many researchers\cite{decoy2,decoy3,Ma05,Wang05,H1,decoy4,Lo2,wang2,wang3}.
In this method, in order to estimate the detection rates,
the sender randomly chooses several kinds of pulses with different intensities.
The first kind of pulses are the signal pulses, which generate raw keys.
The other kind of pulses are the decoy pulses, which are used for estimating the operation by the eavesdropper and have a different intensity from the signal pulses.

However, we still cannot realize a truly secure QKD system in the real world
due to the finiteness of the coding length.
Most of the above results assume the asymptotic setting except for Mayers\cite{M01}.
Also, their privacy amplification requires many calculation times.
Renner \cite{Renner} proposed to use universal$_2$ hash functions for 
privacy amplification and 
showed the security under this kind of hash functions.
Universal$_2$ hash functions have been recognized as 
a fundamental tool for information theoretical security \cite{Hayashi2,Hayashi5,cq-security,WC81}.
His security proof is quite different from 
the traditional Shor-Preskill formalism
in the following points.
He focused on the trace norm of the difference between the true state and the ideal state as the security parameter
because the trace norm is universally composable \cite{uni2}.
In the following, we call the trace norm the universal composability criterion.
As another different point, he employed 
the left over hashing lemma (privacy amplification)
while the traditional Shor-Preskill formalism employs error correction.
On the other hand, in the context of the traditional Shor-Preskill formalism,
it was shown that
the leaked information can be evaluated only by the phase error probability\cite{Hayashi3,Miyadera,H2,Koashi,Renes10},
which implies that the phase error correction guarantees the security.
Using this fact,
a previous paper \cite{TH11} showed that 
the security under a wider class of hash function, which is called
$\varepsilon$-almost dual universal$_2$ hash function.

In order to treat the finiteness problem in the single photon case,
when $n$ is the block length of our code,
another previous paper\cite{Hayashi3} 
considers the asymptotic expansion of the coding length up to 
the order $\sqrt{n}$\footnote{Analysis of this type of asymptotic expansion is called the second order analysis
and has attracted attention among information theory community due to the relation with analysis of finite coding length\cite{strassen,Hayashi4,Hay1,Pol}.}
with Gaussian approximation by using the above phase error correction formalism.
Scarani et al.\cite{SR08}
and Sano et al. \cite{MU10}
also treated the finiteness problem only for collective attack.
Recently, using Renner's formalism,
Tomamichel et al \cite{TWGR} derived an upper bound formula for the security parameter with the finite coding length.
However, these results assume the single photon source.
Furrer et al.\cite{FFBLSTW} gave a finite-length analysis with 
continuous variable quantum key distribution, which works with weak coherent pulses.
While continuous variable quantum key distribution can be implemented with an inexpensive Homodyne detection,
the decoy method with BB84 protocol can achieve the longest distance 
with the current technology\cite{Sasa,Stucki}.
Hence, we treat the security of finite coding length of BB84 protocol
when we use weak coherent pulses and the decoy method.

In the single photon case,
using the phase error correction formalism,
another previous paper \cite{finite} derived better upper bound formulas 
for the security with the finite coding length,
which attain the key generation rate given in \cite{Hayashi3} up to the order $\sqrt{n}$.
They also treated the security with the universal composability 
based on the phase error correction formalism
when the coding length depends on the outcomes of Alice and Bob.
The phase error correction formalism provides an upper bound of the leaked information only from 
the decoding phase error probability.
Hence, we employ the phase error correction formalism
for our security analysis of finite coding length of BB84 protocol
when we use weak coherent pulses and the decoy method.

\subsection{Our formula for sacrifice bit-length with the finite-length setting}
When the raw keys are generated by BB84 protocol with the weak coherent pulses by the decoy method, 
we apply 
the error correction and the privacy amplification
to the raw keys.
The security of final keys can be evaluated by the amount of the sacrifice bit-length.
The aim of this paper is to provide 
a calculation formula of the sacrifice bit-length
guaranteeing a given security level 
with the universal composability.
Since the generated pulses contain the vacuum pulses, 
the single-photon pulses, and the multi-photon pulses,
we need to estimate these ratios among the pulses generating the raw keys.
Note that the vacuum pulses also generate a part of raw keys.
The flow of our analytical framework is illustrated as Fig. \ref{outline-zu2}.
First, 
using the relation between phase error and the security,
we give a formula of the sacrifice bit-length based on
the numbers of the detected pulses originated from the vacuum emissions by Alice, 
the detected pulses from the single-photon emissions, and 
the detected pulses from the multi-photon emissions 
among the detected pulses consisting of the raw keys.
In the following, we call these numbers the {\it partition} of the detected pulses generating the raw keys.
When a component of the partition are divided by total pulse number, we obtain the fractions.
For the finite-length analysis, we need the partition instead of the fractions.

In order to estimate the partition of the detected pulses generating the raw keys,
we need to estimate the detection rates of respective kinds of pulses
and the phase error probability of single photon pulses,
which 
characterize Eve's operations and can be regarded as parameters of the quantum communication channel.
For this purpose, Alice sends the pluses with different intensities.
This method is called the decoy method, and 
enables us to estimate the above detection rates and the phase error probability of single photon pulses.
This estimation part can be divided into two parts.
The first part is the derivation of channel parameters from 
the detection rates, the phase error rates, 
and the partition of respective transmitted pulses by solving joint inequalities, which are given from non-negativity of several channel parameters. 
The second part is the treatment of statistical fluctuation.
If we could treat infinite number of pulses, 
we had not had to deal with the statistical fluctuation.
However, our finite-length setting requires the treatment of the statistical fluctuation.
In contrast with the previous papers \cite{finite,Hayashi3},
this paper deals with the statistical fluctuation by
{\it interval estimation}\footnote{Interval estimation is a statistical method to give an interval of possible (or probable) values of 
an unknown parameter from  sample data, in contrast to point estimation, which is a single number.
The method of the binomial case is explained in \ref{as2}.} and 
{\it percent point}\footnote{Precisely, the percent point means the lower percent point or the upper percent point
dependently of the context.
When we focus on the $\varepsilon$ percent, 
the lower percent point of the random variable $X$ is the value $x_1$ satisfying the following.
The probability that the random variable $X$ is less than $x_1$ is $\varepsilon/100$.
For example, the lower 5\% point of a standard normal distribution is -1.645.}. 
The interval estimation is employed for deriving the detection rates and the phase error rates of transmitted pulses with respective intensities from the observed detection rates and the observed phase error rate.
The percent points are employed for deriving the partitions of transmitted pulses with respective intensities.
Similarly, we employ percent points for deriving the partition of the detected pulses generating the raw keys from the channel parameters.

\begin{figure}[h]
\begin{center}
\includegraphics[scale=0.6]{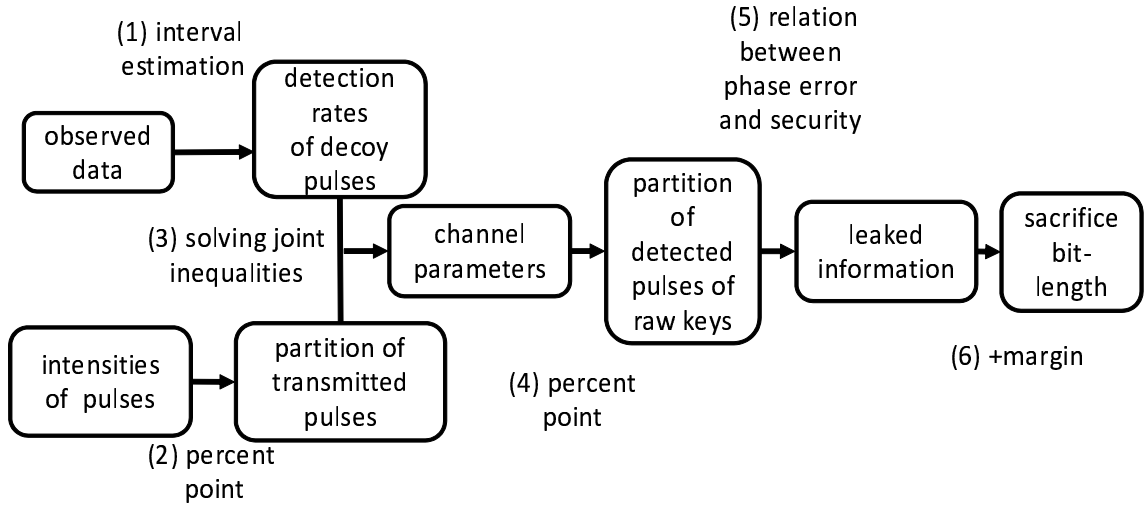}
\caption{Roles of percent points and interval estimation}
\Label{outline-zu2}
\end{center}
\end{figure}

In our analysis, we focus on the universal composability criterion.
Our calculation formula for the sacrifice bit-length employs only the basic formulas of
percent points and the interval estimation of the binomial distribution,
whose numerical calculations are possible by many computer software packages.
Hence, it does not contain any optimization process, and then 
it requires a relatively shorter calculation time.
Then, using our formula, we numerically calculate the key generation rate per pulses in several cases.
In our numerical calculations, 
we require that the universal composability criterion is less than $2^{-80}$.
Under this requirement, we have to require too small error probabilities 
to calculate the exact percent point and the exact interval estimation.

For this purpose,
we employ Chernoff bound, which is an upper bound of the error probability and requires a quite small amount of calculations, as summarized in Appendix. 
Using Chernoff bound, we can derive upper and lower estimates of the true parameter.
Since Chernoff bound is not the tight bound of the error probability,
these upper and lower estimates are looser than the exact interval estimation.
However,
even when the required error probabilities are very small, 
when the size of obtained data is sufficiently large,
these upper and lower estimates are sufficiently close to the exact interval estimation\footnote{The reason is the following.
The rate of Chernoff bound to the true error probability behaves polynomially with respect to the size of data.
In particular, in the binary case, the rate behaves linearly with respect to the size of data.
Hence, even when the required error probabilities are very small, 
when the size of obtained data is sufficiently large,
these upper and lower estimates are sufficiently close to the exact interval estimation.}.

Further, similar to Wang et al.\cite{wang2,wang3},
in Section \ref{s12},
we discuss our key generation rate with finite-length
when 
the intensities are not fixed
and obey certain probability distributions.
In Subsection \ref{s13-2},
we numerically calculate the above key generation rate 
when the intensities obey Gaussian distributions
because the fluctuations of intensities are usually caused by the thermal noise.

Here, we summarize the physical assumption.
The photon source generates the coherent state, and
the phase factor of the coherent state is completely randomized.
The receiver uses the threshold detector.
We do not care about other types of imperfection of devices.
In particular,
we assume no side-channel-attack, i.e.,
Eve cannot directly see the phase modulator in Alice's side.
Further, we do not assume the perfect vacuum pulses.
That is, we allow that 
a non-vacuum state comes to be mixed in the vacuum pulses
if the probability of erroneous emission of a non-vacuum state
is sufficiently small.
We do not assume the collective attack
while we employ the binary distribution.
That is, our security proof well works for the coherent attack.
The reason why the binary distribution can be used instead of
the hypergeometric distribution is given in Section \ref{s6}.

\subsection{Organization}
The organization of the remaining part is the following.
As a preparation, Section \ref{s1} reviews the result for 
the universal composability criterion of the final keys 
when we know the partition of the received pulses and the phase error probability among single photon pulses.
Then, Section \ref{s1} derives
the leaked information from the partition of detected pulses of raw keys
by using the relation between the phase error and security, i.e., 
Step (5) in Fig. \ref{outline-zu2}. 
Section \ref{s2} describes a concrete protocol of the decoy method.
Section \ref{s3} explains how eavesdropper's operation can be described.
Section \ref{s5-5} gives two formulas of the sacrifice bit-length.
Subsection \ref{s-imp} gives a shorter sacrifice bit-length by improving 
the formula given in Subsection \ref{s-non-imp}.
So, we call the formula given in Subsection \ref{s-imp} the improved formula
and we call the formula given in Subsection \ref{s-non-imp} the non-improved formula.
Since the improved formula is too complicated,
we give the non-improved formula in Subsection \ref{s-non-imp}.
After describing the whole structure of the non-improved formula,
we give the improved formula in Subsection \ref{s-imp}.
Then, we present a numerical result with the improved formula.
In Section \ref{s12}, we treat the finite sacrifice bit-length
when the source intensity is not fixed.
Then, we present a numerical result with Gaussian distribution.


The remaining sections are devoted to the security proofs of the formulas given in
Section \ref{s5-5}.
For this purpose,
Section \ref{s6} summarizes a fundamental knowledge for random variables
because the notation explained in Section \ref{s6} will be used in latter sections.
Since the improved formula is too complicated,
we first show the security proof of the non-improved formula
in Sections \ref{s4}, \ref{s7}, and \ref{s9}.
After the security proof of the non-improved formula,
we give the security proof of the improved formula in Section \ref{s9-5}.
Section \ref{s4} briefly describes our security proof and the outline of 
discussions in the latter sections.
It also gives the sacrifice bit-length from the leaked information, i.e.,
Step (6) in Fig. \ref{outline-zu2}. 
Section \ref{s7} gives the estimate of channel parameters when 
the partition of the generated sources is given.
Subsection \ref{s7-1} gives the partition of detected pulses of raw keys
from the channel parameters by using percent points
i.e., Step (4) in Fig. \ref{outline-zu2}.
Subsection \ref{s5} estimates 
the channel parameters from the partitions and the detection rates of
several kinds of transmitted pulses
by solving joint inequalities,
i.e.,
Step (3) in Fig. \ref{outline-zu2},
and
Subsection \ref{s7-3} derives the detection rates of decoy pulses from the observed data
based on interval estimation, i.e., 
Step (1) in Fig. \ref{outline-zu2}.
In Section \ref{s9}, we treat statistical fluctuation of 
the photon number of the sources.
In particular, Subsection \ref{s93} gives 
the partitions of several kinds of transmitted pulses 
by using percent points, i.e.,
Step (2) in Fig. \ref{outline-zu2}.
Then, combining the discussions in Sections \ref{s4}, \ref{s7}, and \ref{s9},
we show the security under the sacrifice-length given by non-improved formula given in Subsection \ref{s-non-imp}.
In Section \ref{s9-5},
we give the security proof of the improved formula given in Subsection \ref{s-imp}
by putting out several probabilities from the square root.

In Appendices A and B, we summarize the basic knowledge for the tail probability and the interval estimation under the binary distribution.
In \ref{as3}, we summarize calculations required for the numerical calculation in Subsection \ref{s13-2}.

\section{Relation between security evaluation and decoding phase error probability}\Label{s1}
An evaluation method to use the trace norm of the difference 
between the true state and the ideal state is known as a universally composable security criterion in QKD \cite{uni2}.
Hence, we call it the universal composability criterion.
When the length $m$ of the final keys is not fixed,
we need a more careful treatment.
We denote the final state 
and Eve's final state
by $\rho_{AE|m}$ and $\rho_{E|m}$, respectively when the length of the final keys is $m$.
Our ideal Alice's state is the uniform distribution $\rho_{\mix|m}$ on $m$ bits.
Hence, the ideal composite state is $\rho_{\mix|m} \otimes \rho_{E|m}$.
We denote the state indicating that the length of final keys is $m$,
by $|m \rangle \langle m|$, and its probability by $P(m)$.
Then, the state of the composite system is
$\rho_{AE}:=\sum_{m}P(m) |m \rangle \langle m| \otimes \rho_{AE|m}$,
and its ideal state is
$\rho_{\ideal}:= \sum_{m}P(m) |m \rangle \langle m| \otimes \rho_{\mix|m} \otimes \rho_{E|m}$.
Hence, the averaged universal composability criterion of the obtained keys is written as the trace norm of the difference between the real state 
$\rho_{AE}$ of the composite system and its ideal state $\rho_{\ideal}$ as \cite{uni1}\footnote{The relation of this quantity with Eve's success probability is given in \ref{as4}}
\begin{eqnarray}
\| \rho_{A,E}-\rho_{\ideal} \|_1.
\end{eqnarray}
Thus, a smaller trace norm guarantees more secure final keys.

On the other hand,
when we apply surjective universal$_2$ linear hash functions as the privacy amplification
\cite{H2}, \cite[(10)]{finite} 
the above value is bounded by the averaged virtual decoding phase error probability $P_{ph}$ as
\begin{eqnarray}
\| \rho_{A,E}-\rho_{\ideal} \|_1
\le 2\sqrt{2}\sqrt{P_{ph}}
\Label{9-17-6}.
\end{eqnarray}
Then, the security analysis of QKD can be reduced to the evaluation of $P_{ph}$.

In the following, we consider the protocol 
containing the privacy amplification with the sacrifice bit-length
$S$ over the raw keys with length $M$.
When phase error occurs in $E$ bits among $M$-bit raw keys
and we apply the minimum length decoding,
the averaged virtual decoding phase error probability $P_{ph}$ is evaluated as\footnote{It is easy to see that Inequality 
(\ref{10-15-1}) 
holds when the completely
random matrices (a type of universal$_2$ hash functions) are used for PA, as in Koashi's case \cite{Koashi}.}
\begin{eqnarray}
P_{ph} \le 2^{M h(\min(\frac{E}{M},\frac{1}{2}) )-S}.
\end{eqnarray}
Hence,
we can guarantee the security of the final keys
when the sacrifice bit-length $S$ is sufficiently larger than
$M h(\min(\frac{E}{M},\frac{1}{2}))$. 
However, the number $E$ of bits having the phase error
does not take a deterministic value, and
it obeys a probability distribution $Q(E)$.
Then, when we apply the minimum length decoding,
the averaged virtual decoding phase error probability $P_{ph}$ is evaluated as
\begin{eqnarray}
P_{ph} \le \sum_E Q(E) \min (2^{M h(\min(\frac{E}{M},\frac{1}{2}))-S},1).
\end{eqnarray}

When we use an imperfect photon source, 
the $M$ transmitted pulses generate $M$-bit raw keys.
Then, each of the $M$ transmitted pulses takes the following three types of states. 
The first is the vacuum state,
the second is the single-photon state,
and the third is the multi-photon state.
In the following,
we assume that the $M$ transmitted pulses
consist of
$J^{(0)}$ pulses with the vacuum state,
$J^{(1)}$ pulses with the single-photon state,
and
$J^{(2)}$ pulses with the multi-photon state.
This assumption guarantees the relation $M=J^{(0)}+J^{(1)}+J^{(2)}$.
That is, the triplet $(J^{(0)},J^{(1)},J^{(2)})$ gives the partition of the $M$ transmitted pulses.
When we send the pulse with the vacuum state, no information can be leaked to Eve.
That is, the leaked information in this case equals the leaked information to Eve 
when we send single-photon pulses with phase error probability $0$.
On the other hand, in the multi-photon case,
we have to consider that all information is leaked to Eve.
Hence, the leaked information in the multi-photon case equals the leaked information to Eve 
when we send single-photon pulses with phase error probability $1/2$.
In the following, we assume that
the phase error occurs in $J^{(1)}_e$ bits among $J^{(1)}$ bits.
As is shown in \cite[(19)]{H2} and \cite{TH11},
when we apply a proper class of hash functions in the privacy amplification\footnote
{More precisely,
when we apply $\varepsilon$-almost dual universal$_2$ hash functions,
$P_{ph}$ is evaluated as
$P_{ph} \le \varepsilon \cdot 2^{\phi(J^{(0)},J^{(1)},J^{(1)}_e)-S}$.
As is explained in \cite{TH11}, 
several practical hash functions, e.g.,
the concatenation of 
Toeplitz matrix and the identity matrix, 
are $1$-almost dual universal$_2$.}, 
the averaged virtual decoding phase error probability $P_{ph}$ 
is evaluated as 
\footnote[1]{In the derivation\cite[(19)]{H2},
we considered that
the $J^{(1)}$ qubits have the phase error rate 
$\min(\frac{J^{(1)}_e}{J^{(1)}},\frac{1}{2})$
and 
the $J^{(2)}(=M-J^{(0)}-J^{(1)})$ qubits have the phase error rate $1/2$.}
\begin{eqnarray}
P_{ph} \le 2^{\phi(J^{(0)},J^{(1)},J^{(1)}_e)-S}\Label{10-15-1}
\end{eqnarray}
because $J^{(2)}=M-J^{(0)}-J^{(1)}$,
where we define
\begin{eqnarray}
\phi(J^{(0)},J^{(1)},J^{(1)}_e)
&:= J^{(1)} h(\min(\frac{J^{(1)}_e}{J^{(1)}},\frac{1}{2}))+(M-J^{(0)}-J^{(1)})h(\frac{1}{2}) 
\nonumber \\
&= J^{(1)} h(\min(\frac{J^{(1)}_e}{J^{(1)}},\frac{1}{2}))+(M-J^{(0)}-J^{(1)}),
\Label{1-23-5}
\end{eqnarray}
which provides Step (5) in Fig. \ref{outline-zu2}.
Due to Eq. (\ref{10-15-1}), we can regard $\phi(J^{(0)},J^{(1)},J^{(1)}_e)$ as a leaked information.

In the actual case,
the values $J^{(0)}$, $J^{(1)}$, and $J^{(1)}_e$ 
do not take deterministic values,
and obey a joint distribution $Q(J^{(0)},J^{(1)},J^{(1)}_e)$.
Hence, the averaged virtual decoding phase error probability $P_{ph}$ 
is evaluated by
\begin{eqnarray}
P_{ph} \le \sum_{J^{(0)},J^{(1)},J^{(1)}_e} Q(J^{(0)},J^{(1)},J^{(1)}_e) 
\min( 2^{\phi(J^{(0)},J^{(1)},J^{(1)}_e)-S},1).
\end{eqnarray}
In the general case,
the size of sacrifice bit-length $S$ 
also does not take a deterministic value,
and is stochastically determined.
In such a case, the values $J^{(0)}$, $J^{(1)}$, $J^{(1)}_e$, and $S$ obey 
a joint distribution
$Q(J^{(0)},J^{(1)},J^{(1)}_e,S)$,
and the averaged virtual decoding phase error probability $P_{ph}$ is evaluated by
\begin{eqnarray}
P_{ph} \le \sum_{J^{(0)},J^{(1)},J^{(1)}_e,S} Q(J^{(0)},J^{(1)},J^{(1)}_e,S) 
\min( 2^{\phi(J^{(0)},J^{(1)},J^{(1)}_e)-S},1).
\Label{9-17-5}
\end{eqnarray}
In the following, for a simplicity, 
we employ the notations $\bm{J}=(J^{(0)},J^{(1)},J^{(1)}_e)$ and $\phi(\bm{J}):=\phi(J^{(0)},J^{(1)},J^{(1)}_e)$.

\section{Protocol of decoy method}\Label{s2}
In the following, 
we assume that $M_s$-bit raw keys are 
generated by $N_s$ signal pulses generated by an imperfect photon source.
Now, we assume that
there are $N_s^{(0)}$ vacuum state pulses and $N_s^{(1)}$ single-photon pulses
among $N_s$ transmitted pulses.
Then, the remaining $N_s^{(2)}=N_s-N_s^{(0)}-N_s^{(1)}$ pulses take multi-photon states.
In the following discussion, 
the partition of $N_s$ signal pulses is described by the triplet $(N_s^{(0)},N_s^{(1)},N_s^{(2)})$, 
and plays an important role.

Now, we prepare three parameters $\bar{q}^{(0)}$, $\bar{q}^{(1)}$, and 
$\bar{b}^{(1)}_{\times}$ as follows.
The parameter $\bar{q}^{(0)}$ is the detection rate in the vacuum pulse, i.e.,
the rate of the vacuum pulses detected in Bob's side
to the vacuum pulses transmitted from Alice's side.
The parameter $\bar{q}^{(1)}$
is the detection rate in the single-photon pulse, i.e.,
the rate of the single-photon pulses detected in Bob's side
to the single-photon pulses transmitted from Alice's side.
The parameter $\bar{b}^{(1)}_{\times}$ is 
the rate of the single-photon pulses detected with phase error in Bob's side
to the single-photon pulses transmitted from Alice's side.
We call the rate $\bar{b}^{(1)}_{\times}$ 
the phase-error detection rate in the single-photon pulse.
Then, the numbers $J^{(0)}$, $J^{(1)}$, and $J^{(1)}_e$ can be estimated as
\begin{eqnarray}
J^{(0)} \sim  N_s^{(0)} \bar{q}^{(0)}, ~
J^{(1)} \sim  N_s^{(1)} \bar{q}^{(1)}, ~
J^{(1)}_e \sim  N_s^{(1)} \bar{b}^{(1)}_{\times} .
\Label{9-17-12}
\end{eqnarray}
However, it is not easy to estimate the partition
of $N_s$ pulses, i.e., $(N_s^{(0)},N_s^{(1)},N_s^{(2)})$. 
Now, we consider the case when the $N_s$ $\mu_1$-intensity weak coherent pulses are transmitted.

Then, we obtain the expansion with respect to the photon-number states.
\begin{eqnarray}
\fl \sum_{n=0}^\infty e^{-\mu_1}\frac{\mu_1^n}{n!}
|n \rangle\langle n|=
e^{-\mu_1}|0\rangle\langle 0|
+e^{-\mu_1}\mu_1|1\rangle\langle 1| 
+e^{-\mu_1}\mu_1^2 \omega_2\rho_2 \Label{1-23-1},
\end{eqnarray}
where
\begin{eqnarray}
\rho_2 := 
\frac{1}{\omega_2}
\sum_{n=2}^\infty 
\frac{\mu_1^{n-2}}{n!}
|n \rangle\langle n| , \quad
\omega_2 :=\frac{1}{\mu_1^2}
(e^{\mu_1}-(1+\mu_1)) \Label{4-9-7}.
\end{eqnarray}
Then, the partition can be estimated as
\begin{eqnarray}
N_s^{(0)} \sim N_s e^{-\mu_1}, ~
N_s^{(1)} \sim N_s e^{-\mu_1}\mu_1.
\end{eqnarray}

Hence, it is needed to estimate 
the parameters $\bar{q}^{(0)}$, $\bar{q}^{(1)}$, and $\bar{b}^{(1)}_{\times}$.
For this purpose, we shuffle $\mu_1$-intensity coherent pulses and $\mu_2$-intensity coherent pulses.
This method is called the decoy method \cite{decoy1,decoy2,decoy3,H1,decoy4} \footnote[2]{In a wider sense, we can regard 
the check bits estimating the phase error probability as another kind of decoy state.} 
because $\mu_2$-intensity pulses 
work as ``decoy'' for estimating the parameters
$\bar{q}^{(0)}$, $\bar{q}^{(1)}$, and $\bar{b}^{(1)}_{\times}$.
Hence, 
the intensity $\mu_1$ to be used to generating the raw keys 
is called the signal pulse,
and the other intensity $\mu_2$ is called the decoy pulse.
In the following, we assume that $\mu_1< \mu_2$.
Then, the 
$\mu_2$-intensity coherent pulse
has the following expansion:
\begin{eqnarray}
\sum_{n=0}^\infty e^{-\mu_2}\frac{\mu_2^n}{n!}
|n \rangle\langle n|
=&
e^{-\mu_2}|0\rangle\langle 0|
+e^{-\mu_2}\mu_2|1\rangle\langle 1| \nonumber \\
& +e^{-\mu_2}\mu_2^2\omega_2\rho_2
+e^{-\mu_2}\mu_2^2(\mu_2-\mu_1)
\omega_3 \rho_3,
\Label{4-9-4}
\end{eqnarray}
where
\begin{eqnarray}
\rho_3 &:= 
\frac{1}{\omega_3}
\sum_{n=3}^\infty \frac{\mu_2^{n-2}-\mu_1^{n-2}}{(\mu_2-\mu_1)n!}
|n \rangle\langle n|\nonumber \\
\omega_3 &:=
\frac{1}{\mu_2^2}(e^{\mu_2}-(1+\mu_2+\frac{\mu_2^2}{2}))
-
\frac{1}{\mu_1^2}
(e^{\mu_1}-(1+\mu_1+\frac{\mu_1^2}{2})).\Label{4-9-8}
\end{eqnarray}
Using the difference between the coefficients in two expansions (\ref{1-23-1}) and (\ref{4-9-4}), 
we can estimate the detection rates
$\bar{q}^{(0)}$ and $\bar{q}^{(1)}$ by the way explain in Section \ref{s12}.

In this paper, we use the superscript numbers and the subscript numbers 
in the following rules.
The superscript expresses the kind of state, i.e.,
the superscripts $0$, $1$, $2$, and $3$ correspond
to $|0\rangle \langle 0|$, 
$|1\rangle \langle 1|$, 
$\rho_2$, and $\rho_3$, respectively.
The subscript expresses the intensity except for
$\rho_2$, $\rho_3$, $\omega_2$, and $\omega_3$. 
That is,
the subscripts $0$, $1$, $2$, $3$ and $4$ correspond
to the vacuum pulse,
the $\mu_1$-intensity pulse,
the $\mu_2$-intensity pulse,
the $\mu_1$-intensity pulse with the phase error,
and
the $\mu_2$-intensity pulse with the phase error, respectively.

In the following, we give the detail of our protocol, 
in which, 
both $\mu_1$-intensity pulses with the bit basis
and $\mu_2$-intensity pulses with the bit basis
are used for generating the raw keys.
\begin{description}
\item[(1) Transmission:]
Alice (the sender) sends 
the pulses with the vacuum,
the 
$\mu_1$-intensity coherent pulses
and
the $\mu_2$-intensity coherent pulses,
randomly with a certain rate.
Here, she chooses the bit basis and the phase basis 
with the ratio $1-\lambda:\lambda$
among the $\mu_1$-intensity coherent pulses
and the $\mu_2$-intensity coherent pulses.

\item[(2) Detection:]
Bob (the receiver) chooses the bit basis and the phase basis 
with the ratio $1-\lambda:\lambda$
and measures the pulses in the received side.
Then, he records existence or non-existence of the detection,
his basis, and the measured bit.
For the detail, see Remark \ref{r-det}.

\item[(3) Verification of basis:]
Using the public channel,
Alice sends Bob all information with respect to the basis and the intensity 
for all pulses.
Using the public channel,
Bob informs Alice what pulses has the matched basis.
Then, as is illustrated in Table \ref{zu1},
they decide the numbers $N_0$, $N_1$, $N_2$, $N_{s,1}$ and $N_{s,2}$ as follows.
$N_0$ is the number of vacuum pulses,
$N_1$ is the number of $\mu_1$-intensity pulses with the phase basis in the both sides,
$N_2$ is the number of $\mu_2$-intensity pulses with the phase basis in the both sides,
$N_{s,1}$ is the number of $\mu_1$-intensity pulses with the bit basis in the both sides,
and 
$N_{s,2}$ is the number of $\mu_2$-intensity pulses with the bit basis in the both sides.

\item[(4) Parameter estimation:]
Alice and Bob announce all bit information with respect to 
$N_1+N_2$ pulses with the phase basis in the both sides.
Then, as is illustrated in Table \ref{zu2},
they decide the numbers $M_0$, $M_1$, $M_2$, $M_3$, $M_4$, 
$M_{s,1}$ and $M_{s,2}$ as follows.
$M_0$ is the number of vacuum pulses detected by Bob. 
For $i=1,2$,
$M_{i}$($M_{i+2}$) is
the number of $\mu_i$-intensity coherent pulses those 
are detected by Bob
and have the phase basis in the both sides and the agreement bit values (the disagreement bit values). 
(However, they will not use $M_4$.)
$M_{s,1}$ is
the number of $\mu_1$-intensity coherent pulses 
those are detected by Bob and have the bit basis in the both sides.
$M_{s,2}$ is the number of $\mu_2$-intensity coherent pulses 
those are detected by Bob and have the bit basis in the both sides.
\end{description}

In the following, we describe the key distillation protocol
for $M_{s,1}$-bit raw keys generated by the $\mu_1$-intensity coherent pulses.
The key distillation protocol
for $M_{s,2}$-bit raw keys generated by the $\mu_2$-intensity 
coherent pulses can be obtained 
when $N_{s,1}$ and $M_{s,1}$ are replaced by $N_{s,2}$ and $M_{s,2}$, respectively.

\begin{description}
\item[(5) Error correction:]
First, Alice and Bob choose a suitable $M_{s,1}$-bit classical code 
$C_1$ that can correct errors of the expected bit error rate $p_{+}$.
For decoding, they prepare a set $\{\bs_{[\bs]}^{(2)}\}_{[\bs] \in \F_2^{M_{s,1}}/C_1}$
of representatives for respective cosets $[\bs] \in \F_2^{M_{s,1}}/C_1$.
They also prepare another set $\{\bs_{[\bs]}^{(1)}\}_{[\bs] \in \F_2^{M_{s,1}}/C_1}$
of representatives
for respective cosets $[\bs] \in \F_2^{M_{s,1}}/C_1$.
Then, they exchange
their information $\F_2^{M_{s,1}}/C_2^{\perp}$.
Alice obtains
$\bx:=\bs-\bs_{[\bs]}^{(1)}$in $C_2^{\perp}$,
and Bob obtains $\bx':=\bs'-\bs_{[\bs]}^{(1)}-\bs_{[\bs'-\bs]}^{(2)}$ in $C_1$.

\item[(6) Privacy amplification:]
Using the method explained latter,
Alice and Bob define the sacrifice bit-length $S$ 
in the privacy amplification
from $N_{s,1},N_0,N_1,N_2$, $M_{s,1}, M_0,M_1,M_2,M_3$.
Then, they apply 
$\varepsilon$-almost dual universal$_2$ hash function 
from $C_1 \cong \F_2^{l}$ to $\F_2^{l-S}$\cite{TH11}.
Then, they obtain the final keys.

\item[(7) Error verification:]
Alice and Bob apply a suitable hash function to the final keys.
They exchange the exclusive OR between the above hash value and other prepared secret keys.
If the above exclusive OR agrees, 
their keys agree with a high probability\cite{FMC10,S91}.
\end{description}

\begin{table}[htb]
  \caption{Transmitted pluses}
\Label{zu1}
\begin{center}
  \begin{tabular}{|c|c|c|c|c|} \hline
    Alice's basis & Bob's basis & vacuum & $\mu_1$ & $\mu_2$ \\ \hline
       \protect{\multirow{2}{*}{bit basis}} & bit basis &  \multirow{4}{*}{$N_0$} & $N_{s,1}$ & $N_{s,2}$ \\ \cline{4-5} \cline{2-2}
    　　　　　　　　　　　& phase basis &  &  & \\ \cline{1-2} \cline{4-5} 
     \multirow{2}{*}{phase basis} & bit basis &  &  & \\ \cline{2-2} \cline{4-5}
                               & phase basis &  & $N_1$ & $N_2$\\ \hline
  \end{tabular}
\end{center}
\end{table}

\begin{table}[htb]
  \caption{Detected pluses}
\Label{zu2}
\begin{center}
  \begin{tabular}{|c|c |c|c|c|c|} \hline
    Alice's basis & \multicolumn{2}{|c|}{Bob's basis} & vacuum & $\mu_1$ & $\mu_2$ \\ \hline
    \multirow{2}{*}{bit basis} & \multicolumn{2}{|c|}{bit basis}   &  \multirow{5}{*}{$M_0$} & $M_{s,1}$   & $M_{s,2}$ \\ \cline{2-3}  \cline{5-6}
    　　　　　　　　　　　      & \multicolumn{2}{|c|}{phase basis}     &                         &       & \\  \cline{2-3}      \cline{5-6} \cline{1-1}    
     \multirow{3}{*}{phase basis} & \multicolumn{2}{|c|}{bit basis}    &                         &   　　& 　　　\\ \cline{2-3} \cline{5-6}
                               & \multirow{2}{*}{phase basis} & correct &                         & $M_1$ & $M_2$ \\ \cline{3-3} \cline{5-6}
                               &                           & incorrect &                         & $M_3$ & $M_4$\\ \hline
  \end{tabular}
\end{center}
\end{table}



In the error correction,
we lose more than $M_{s,1}h(p_{+})$ bits.
When we lose  
$\eta M_{s,1} h(p_{+})$ bits in the error correction,
the final key length is $M_{s,1}- \eta M_{s,1} h(p_{+})-S$.
In a realistic case, we choose $\eta$ to be $1.1$.
In the above protocol, it is possible to 
restrict 
the intensity to generate the raw keys to $\mu_1$ or $\mu_2$.
In this case, we restrict 
the intensity with the bit basis to $\mu_1$ or $\mu_2$.
When we restrict the intensity with the bit basis to $\mu_2$,
the numbers $N_{s,1}$ and $M_{s,1}$ become $0$.

In the following discussion, 
we denote 
the number of transmitted pulses for generation of raw keys,
the number of raw keys,
and the signal intensity 
by $N_s$, $M_s$, and $\mu_s$.
That is, when we discuss the security of final keys generated from raw keys with the intensity $\mu_i$,
the numbers $N_s$, $M_s$, and $\mu_s$ are chosen to be $N_{s,i}$, $M_{s,i}$, and $\mu_i$ for $i=1,2$.

\begin{rem}
In the above protocol, the raw keys are generated from the bit basis.
However, this assumption is not essential.
For example, our analysis can be applied to the case 
when the raw keys are generated from both bases as follows.
First, we replace Step {\bf (3)} by the following Step {\bf (3')}.
\rm

\begin{description}
\item[(3') Verification of basis:]
Using the public channel,
Alice sends Bob all information with respect to the basis and the intensity 
for all pulses.
Using the public channel,
Bob informs Alice what pulses has the matched basis.
Then, as is illustrated in Table \ref{zu1},
they decide the numbers $N_0$, $N_1'$, $N_2'$, $N_{s,1}'$ and $N_{s,2}'$ as follows.
$N_0'$ is the number of vacuum pulses,
$N_1'$ is the number of $\mu_1$-intensity pulses with the phase basis in the both sides,
$N_2'$ is the number of $\mu_2$-intensity pulses with the phase basis in the both sides,
$N_{s,1}'$ is the number of $\mu_1$-intensity pulses with the bit basis in the both sides,
and 
$N_{s,2}'$ is the number of $\mu_2$-intensity pulses with the bit basis in the both sides.

Then, we decide smaller numbers 
$N_1$, $N_2$, $N_{s,1}$, $N_{s,2}$ 
than
$N_1'$, $N_2'$, $N_{s,1}'$, $N_{s,2}'$, respectively.
Next, we randomly choose 
$N_1$, $N_2$, $N_{s,1}$, $N_{s,2}$ pulses
among
$N_1'$, $N_2'$, $N_{s,1}'$, $N_{s,2}'$ pulses, respectively. 
\end{description}

\it
After Step {\bf (7)}, 
we choose numbers 
$N_{s,1}$, $N_{s,2}$, $N_1$, $N_2$
to be $N_1'-N_1$, $N_2'-N_2$, $N_{s,1}'-N_{s,1}$, $N_{s,2}'-N_{s,2}$, 
respectively.
We apply Step {\bf (4)} and the following steps to the remaining 
$N_{s,1}+N_{s,2}+N_1+N_2$ pulses and $N_0$ vacuum pulses
with exchanging the roles of the bit and the phase bases. 
In this case, we may choose the classical error correcting code $C$
based on the observed error rate in Step {\bf (5)}.
\end{rem}

\begin{rem}\Label{r-det}
When the receiver uses the threshold detector,
in Step {\bf (2)} (Detection),
the receiver might detect the both events.
In this case,
we use the following type detector \cite{koashi2}. 
\begin{description}
\item[Detector]
When the receiver detects the both events,
the receiver chooses $0$ as the bit value definitely.
\end{description}
In fact, since the encoding does not depend on the choice of the detector,
the formula (\ref{9-17-6}) holds with the averaged virtual decoding phase error probability $P_{ph}$ 
based on any Bob's virtual decoder employing any Bob's detector
when Bob's detection event does not depend on the choice of the basis.
Hence, our security analysis is still valid even in the above detector.
\end{rem}

\section{Description of Eve}\Label{s3}
In the following, we describe the strategy of Eve.
For this purpose, we treat only the vacuum pulses 
and the pulses with matched bases,
i.e., 
$N_0+N_1+N_2+N_s$ pulses given in Table \ref{zu1}.
We do not treat other kinds of pulses.
Eve cannot distinguish pulses with the intensities $\mu_1$ and $\mu_2$ perfectly.
Alternatively, 
we assume that Eve can choose her strategy depending on the number of photons
because she can distinguish the number of photons.
That is, Eve is assumed to be able to distinguish the states
$|0\rangle \langle 0|$, $|1\rangle \langle 1|$, $\rho_2$, and $\rho_3$.

We assume the following partition of pulses given in Table \ref{zu1}
as follows:
\begin{itemize}
\item
There are $N^{(0)}_1$ pulses with the vacuum state
and $N^{(1)}_1$ pulses with the single-photon state
among $N_1$ $\mu_1$-intensity pulses with the phase basis.
\item
There are $N^{(0)}_2$ pulses with the vacuum state,
$N^{(1)}_2$ pulses with the single-photon state,
and
$N^{(2)}_2$ pulses with the state $\rho_2$
among $N_2$ $\mu_2$-intensity pulses with the phase basis.

\item
There are $N_s^{(0)}$ pulses with the vacuum state
and $N_s^{(1)}$ pulses with the single-photon state
among $N_s$ $\mu_s$-intensity pulses with the bit basis.
\end{itemize}

For a simplicity, we employ the notations 
$\bm{N}_s:=({N^{(0)}_s},{N^{(1)}_s},{N^{(2)}_s})$,
$\bm{N}_1:=(N^{(0)}_1,N^{(1)}_1)$, $\bm{N}_2:=(N^{(0)}_2,N^{(1)}_2,N^{(2)}_2)$,
and 
$\vec{\bm{N}}:=(\bm{N}_1,\bm{N}_2)$.
In the above partition, there are
$N_0+N^{(0)}_1+N^{(0)}_2+N_s^{(0)}$ pulses with the vacuum state,
$N^{(1)}_1+N^{(1)}_2+N_s^{(1)}$ pulses with the single-photon state,
$N^{(2)}_1+N^{(2)}_2$ pulses with the state $\rho_2$ and the phase basis,
and 
$N^{(3)}_2$ pulses with the state $\rho_3$ and the phase basis,
where
$N^{(2)}_1:=N_1-N^{(0)}_1-N^{(1)}_1$ and 
$N^{(3)}_2:=N_2-N^{(0)}_2-N^{(1)}_2-N^{(2)}_2$.
Note that 
the average state with the bit basis is not the same as
the average state with the phase basis
in the case of the multi-photon state. 

Then, Eve is assumed to be able to control
the detection rates 
$\bar{q}^{(0)}$,
$\bar{q}^{(1)}$,
$\bar{q}^{(2)}_{\times}$,
and 
$\bar{q}^{(3)}_{\times}$
in Bob's side
among
$N_0+N^{(0)}_1+N^{(0)}_2+N_s^{(0)}$ vacuum pulses,
$N^{(1)}_1+N^{(1)}_2+N_s^{(1)}$ single-photon pulses,
$N^{(2)}_1+N^{(2)}_2$ pulses of the state $\rho_2$ with the phase basis,
and $N^{(3)}_2$ pulses of the state $\rho_3$ with the phase basis, respectively.
Similarly,
Eve is assumed to be able to control
the phase-error detection rates
$\bar{b}^{(1)}_{\times}$, 
$\bar{b}^{(2)}_{\times}$, 
and
$\bar{b}^{(3)}_{\times}$
in Bob's side
among
$N^{(1)}_1+N^{(1)}_2+N_s^{(1)}$ single-photon pulses,
$N^{(2)}_1+N^{(2)}_2$ pulses of the state $\rho_2$ with the phase basis,
and $N^{(3)}_2$ pulses of the state $\rho_3$ with the phase basis, respectively.
In the following discussion, we use the parameters 
$\bar{a}^{(1)}_{\times}:=\bar{q}^{(1)}-\bar{b}^{(1)}_{\times}$,
$\bar{a}^{(2)}_{\times}:=\bar{q}^{(2)}_{\times}-\bar{b}^{(2)}_{\times}$,
$\bar{a}^{(3)}_{\times}:=\bar{q}^{(3)}_{\times}-\bar{b}^{(3)}_{\times}$,
instead of
$\bar{q}^{(1)}$,
$\bar{q}^{(2)}_{\times}$,
$\bar{q}^{(3)}_{\times}$.
For a simplicity,
we employ the notations 
$\bar{\bm{a}}:=(\bar{a}^{(1)}_{\times},\bar{a}^{(2)}_{\times},\bar{a}^{(3)}_{\times})$
and
$\bar{\bm{b}}:=(\bar{b}^{(1)}_{\times},\bar{b}^{(2)}_{\times},\bar{b}^{(3)}_{\times})$.
Eve is also assumed to be able to control the parameters
$\bar{q}^{(0)}$, $\bar{\bm{a}}$
and
$\bar{\bm{b}}$
dependently on the partition of the total $N_0+N_1+N_2+N_s$ pulses.
Further, Eve is assumed to choose these values stochastically.
Hence, the joint distribution conditioned with $\vec{\bm{N}}$ and $\bm{N}_s$
can be written as
$Q_e(\bar{q}^{(0)}, \bar{\bm{a}},\bar{\bm{b}}|\vec{\bm{N}},\bm{N}_s)$.
Since our analysis depends only on $\vec{\bm{N}}$,
we use the conditional distribution
$Q_e(\bar{q}^{(0)}, \bar{\bm{a}},\bar{\bm{b}}|\vec{\bm{N}})
:=
\sum_{\bm{N}_s}
P_s(\bm{N}_s)
Q_e(\bar{q}^{(0)}, \bar{\bm{a}},\bar{\bm{b}}|\vec{\bm{N}},\bm{N}_s)$,
where
$P_s$ is the distribution of $\bm{N}_s$
and cannot be controlled by Eve.

\section{Formulas of sacrifice bit-length}\Label{s5-5}
\subsection{Non-improved formula}\Label{s-non-imp}
The aim of this section is 
to give formulas of the sacrifice bit-length $S$ satisfying 
\begin{eqnarray}
\| \rho_{A,E}-\rho_{\ideal} \|_1 \le 2^{-\beta}\Label{35i}
\end{eqnarray}
as a function of $\beta,\mu_s,\mu_1,\mu_2,N_s,N_0,N_1,N_2$, and $\bm{M}$,
where $\rho_{A,E}$ is the final state and $\rho_{\ideal}$ is the ideal state.
This section gives two formulas, 
the non-improved formula and the improved formula.
While the improved formula gives a shorter sacrifice bit-length than the non-improved formula,
the non-improved formula is simpler than the improved formula.
Hence, we give the non-improved formula firstly.
In the next subsection, we give the improved formula.
For this purpose, we prepare fundamental definition for behavior of random variables.

\begin{definition}
When the random variable $k$ is subject to the distribution $P$,
we denote $k \sim P$.
When the true distribution is 
the $N$-trial binary distribution with success probability $p$, 
which is denoted by $Bin(N,p)$,
we denote 
the upper percent point with probability $\alpha$
by $X_{\per}^+(N,p,\alpha)$,
and 
denote the lower percent point with probability $\alpha$
by $X_{\per}^-(N,p,\alpha)$.
Then, we define
$p_{\per}^+(N,p,\alpha):=X_{\per}^+(N,p,\alpha)/N$,
and
$p_{\per}^-(N,p,\alpha):=X_{\per}^-(N,p,\alpha)/N$.
When we observe the value $k$ subject to the binomial distribution $Bin(N,p)$ with
$N$ trials  and probability $p$,
we denote the lower confidence limit of 
the lower one-sided interval estimation with the confidential level $1-\alpha$
by $p_{\est}^-(N,k,\alpha)$.
Similarly,
we denote 
the upper confidence limit of 
the upper one-sided interval estimation with the confidential level $1-\alpha$
by $p_{\est}^+(N,k,\alpha)$.
Then, we define
$X_{\est}^-(N,k,\alpha):=p_{\est}^-(N,k,\alpha)N$,
and
$X_{\est}^+(N,k,\alpha):=p_{\est}^+(N,k,\alpha)N$.
\end{definition}

When $N$ is not so large (e.g., 10,000) or $\alpha$ is not so small (e.g., 0.001),
the percent point $X_{\per}^{\pm}(N,p,\alpha)$
can be calculated by mathematical package in software (e.g., Mathematica).
As is summarized in Appendix B.1, 
the interval estimation $p_{\est}^{\pm}(N,k,\alpha)$ is described by F distribution, and 
can be calculated by mathematical package in software in this case, similarly.
However, when $N$ is too large and $\alpha$ is too small,
these calculation cannot be done by a usual mathematical package in software.
However, since $N$ is large enough, using formulas given in Appendices A and B,
we can calculate good lower and upper bounds of these values, which is enough close to the exact values for our purpose.
The calculation formulas can be implemented with small calculation amounts.

Indeed, in order to guarantee the unconditional security,
we have to use the hypergeometric distribution instead of the binomial distribution.
However, the hypergeometric distribution can be partially replaced by the binomial distribution.
Section \ref{s6} explains which case allows this replacement.
This replacement greatly simplifies the calculation of sacrifice bit-length.

\begin{figure}[h]
\begin{center}
\includegraphics[scale=0.6]{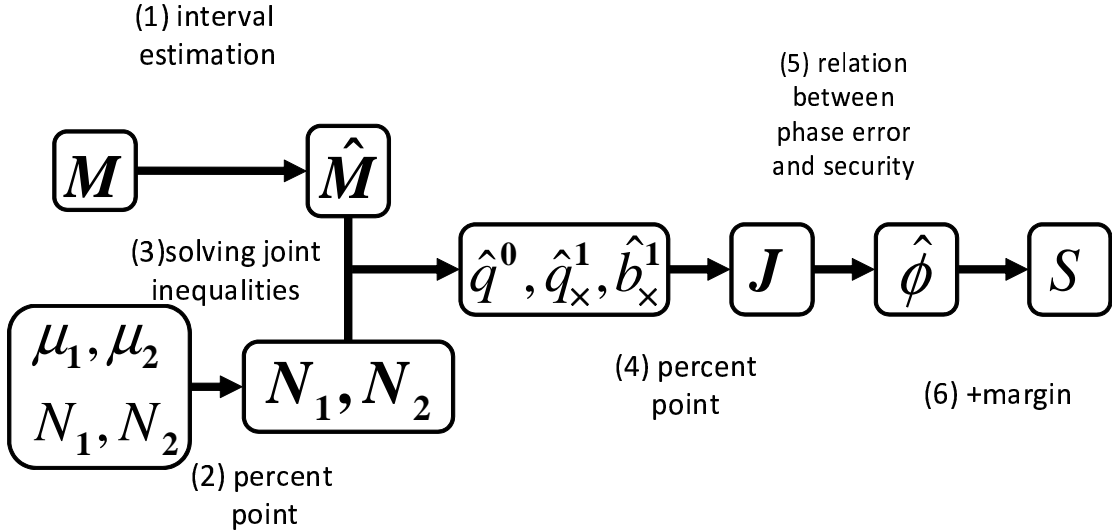}
\caption{Outline of our derivation of the sacrifice bit-length $S$.
}
\Label{outline-zu}
\end{center}
\end{figure}

Now, we give the non-improved formula of the sacrifice bit-length $S$ 
as a function of $\beta,\mu_s,\mu_1,\mu_2,N_s,N_0,N_1,N_2$, and $\bm{M}=(M_s,M_0,M_1,M_2,M_3)$.
The whole structure of our formulas is summarized as Fig. \ref{outline-zu}.
Then, as is shown latter,
when the sacrifice bit-length is given by the following way, the final key satisfies (\ref{35i}).
\begin{description}
\item[Step (1)]
We estimate the detection rates of decoy pulses from the observed data based on interval estimation:
\begin{eqnarray}
\hat{M}_0 &:=X_{\est}^-(N_0,M_0,2^{-2\beta-8}),\Label{10-4-0i} \\
\hat{M}_1 &:=X_{\est}^-(N_1,M_1,2^{-2\beta-8}),\Label{10-4-1i} \\
\hat{M}_2 &:=X_{\est}^+(N_2,M_2,2^{-2\beta-8}),\Label{10-4-2i} \\
\hat{M}_3 &:=X_{\est}^+(N_1,M_3,2^{-2\beta-8}).\Label{10-4-3i} 
\end{eqnarray}
\item[Step (2)]
We estimate the partitions of several kinds of transmitted pulses 
by using percent points:
\begin{eqnarray}
\hat{N}^{(0)}_1&:=X_{\per}^{-}(N_1,e^{-\mu_1},2^{-2\beta-8})\Label{3-28-1i}\\
\hat{N}^{(1)}_1&:=X_{\per}^{-}(N_1,\mu_1 e^{-\mu_1},2^{-2\beta-8})\Label{3-28-2i}\\
\hat{N}^{(0)}_2&:=X_{\per}^{-}(N_2,e^{-\mu_2},2^{-2\beta-8})\Label{3-28-3i}\\
\hat{N}^{(1)}_2&:=X_{\per}^{-}(N_2,\mu_2 e^{-\mu_2},2^{-2\beta-8})\Label{3-28-4i}\\
\hat{N}^{(2)}_2&:=X_{\per}^{-}(N_2,\omega_2 \mu_2^2 e^{-\mu_2},2^{-2\beta-8}) .\Label{3-28-5i}
\end{eqnarray}
\item[Step (3)]
We estimate the channel parameters from the partitions and the detection rates 
of several kinds of transmitted pulses by solving joint inequalities:
\begin{eqnarray}
\fl \hat{q}^{(0)} (\hat{M}_0)
:=& \frac{\hat{M}_0}{N_0} \\
\fl \hat{a}^{(1)}_{\times}
(\hat{\bm{M}},\vec{\hat{\bm{N}}})
:=&
\left[
\frac{ N^{(2)}_2(\hat{M}_1 - \hat{q}^{(0)} (\hat{M}_0) \hat{N}^{(0)}_1 / 2 )
-\hat{N}^{(2)}_1 (\hat{M}_2 - \hat{q}^{(0)} (\hat{M}_0) \hat{N}^{(0)}_2 / 2 )
}{\hat{N}^{(1)}_1 \hat{N}^{(2)}_2- \hat{N}^{(1)}_2 \hat{N}^{(2)}_1} 
\right]_+
\Label{1-23-4-2i} \\
\fl \hat{b}^{(1)}_{\times} 
(\hat{\bm{M}},\vec{\hat{\bm{N}}}) 
:=& 
\left[
\frac{\hat{M}_3 - \frac{1}{2} \hat{q}^{(0)} (\hat{M}_0) \hat{N}^{(0)}_1}{\hat{N}^{(1)}_1}
\right]_+
\Label{10-8-20i},
\end{eqnarray}
where $[x]_+:= \max (x,0).$
\item[Step (4)]
We estimate the partition of detected pulses of raw keys from the channel parameters by using percent points:
\begin{eqnarray}
\fl \hat{J}^{(0)} 
&:= X_{\per}^-(N_s,e^{-\mu_s} \hat{q}^{(0)},2^{-2\beta-8}) \Label{1-24-1i}\\
\fl \hat{J}^{(1)} 
&:= X_{\per}^-(N_s,e^{-\mu_s} \mu_s (\hat{a}^{(1)}_{\times}+\hat{b}^{(1)}_{\times}),2^{-2\beta-8}) \Label{1-24-2i} \\
\fl \hat{r}^{(1)}_{\times} 
&:= p_{\per}^+\Bigl(\hat{J}^{(1)} 
, 
\frac{\hat{b}^{(1)}_{\times}}{\hat{a}^{(1)}_{\times}+\hat{b}^{(1)}_{\times}}
, 2^{-2\beta-8}\Bigr).
\end{eqnarray}
\item[Step (5)]
We estimate the leaked information from the partition of detected pulses of raw keys by using the relation between the phase error and the security:
\begin{eqnarray}
\fl \hat{\phi}_2
:=  M_s-\hat{J}^{(0)}
- \hat{J}^{(1)}
(1- h(\min \{\hat{r}^{(1)}_{\times}
,1/2\}) ) .
\Label{10-26-1i}
\end{eqnarray}
\item[Step (6)]
We give the sacrifice bit-length from the leaked information:
\begin{eqnarray}
\fl S
 &:=
\left\{
\begin{array}{ll}
\hat{\phi}_2 
+2\beta+ 5& \hbox{if 
Conditions \ref{c6}, \ref{c5} and \ref{c15} below hold.} \\
\dim C_1
& \hbox{otherwise.}
\end{array}
\right.
\Label{3-28-6i}
\end{eqnarray}
That is, when one of
Conditions \ref{c6}, \ref{c5} and \ref{c15} does not hold,
we abort the protocol.
\end{description}

Conditions \ref{c6}, \ref{c5}, and \ref{c15} are given as follows.
In order to give these conditions,
we define the set $\Omega_1$ as the set of $\vec{\bm{N}}$ satisfying 
\begin{eqnarray}
{N}^{(0)}_1&\in [X_{\per}^{-}(N_1,e^{-\mu_1},2^{-2\beta-8}),X_{\per}^{+}(N_1,e^{-\mu_1},2^{-2\beta-8})] 
\Label{11-12-1}\\
{N}^{(1)}_1&\in [X_{\per}^{-}(N_1,\mu_1 e^{-\mu_1},2^{-2\beta-8}),X_{\per}^{+}(N_1,\mu_1 e^{-\mu_1},2^{-2\beta-8})] 
\Label{11-12-2}\\
{N}^{(0)}_2&\in [X_{\per}^{-}(N_2,e^{-\mu_2},2^{-2\beta-8}),X_{\per}^{+}(N_2,e^{-\mu_2},2^{-2\beta-8})] 
\Label{11-12-3}\\
{N}^{(1)}_2&\in [X_{\per}^{-}(N_2,\mu_2 e^{-\mu_2},2^{-2\beta-8}),X_{\per}^{+}(N_2,\mu_2 e^{-\mu_2},2^{-2\beta-8})] 
\Label{11-12-4}\\
{N}^{(2)}_2&\in [X_{\per}^{-}(N_2,\omega_2 \mu_2^2 e^{-\mu_2},2^{-2\beta-8}), X_{\per}^{+}(N_2,\omega_2 \mu_2^2 e^{-\mu_2},2^{-2\beta-8})] .
\Label{11-12-5}
\end{eqnarray}

\begin{condition}\Label{c6}
Any element $\vec{\bm{N}}\in\Omega_1$ satisfies
\begin{eqnarray*}
\fl & X_{\per}^{-}(N_1,\mu_1 e^{-\mu_1},2^{-2\beta-8})
X_{\per}^{-}(N_2,\omega_2 \mu_2^2 e^{-\mu_2},2^{-2\beta-8}) \\
\fl >& 
(N_1- X_{\per}^{-}(N_1,e^{-\mu_1},2^{-2\beta-8}) -X_{\per}^{-}(N_1,\mu_1 e^{-\mu_1},2^{-2\beta-8}))
X_{\per}^{+}(N_2,\mu_2 e^{-\mu_2},2^{-2\beta-8}), \\
\fl & X_{\per}^{-}(N_2,\omega_2 \mu_2^2 e^{-\mu_2},2^{-2\beta-8})
X_{\per}^{-}(N_1,e^{-\mu_1},2^{-2\beta-8}) \\
\fl >& 
(N_1- X_{\per}^{-}(N_1,e^{-\mu_1},2^{-2\beta-8}) -X_{\per}^{-}(N_1,\mu_1 e^{-\mu_1},2^{-2\beta-8}))
X_{\per}^{+}(N_2,e^{-\mu_2},2^{-2\beta-8}), \\
\fl &
\frac{X_{\per}^{-}(N_2,\omega_2 \mu_2^2 e^{-\mu_2},2^{-2\beta-8})}{N_1- X_{\per}^{-}(N_1,e^{-\mu_1},2^{-2\beta-8}) -X_{\per}^{-}(N_1,\mu_1 e^{-\mu_1},2^{-2\beta-8})}
+
\frac{X_{\per}^{-}(N_2,e^{-\mu_2},2^{-2\beta-8})}{X_{\per}^{+}(N_1,e^{-\mu_1},2^{-2\beta-8})} \\
\fl  >&
\frac{2 X_{\per}^{+}(N_2,\mu_2 e^{-\mu_2},2^{-2\beta-8})}{X_{\per}^{-}(N_1,\mu_1 e^{-\mu_1},2^{-2\beta-8})}.
\end{eqnarray*}
\end{condition}

\begin{condition}\Label{c5}
For any $\vec{\bm{N}} \in \Omega_1$, 
all of the following values are positive.
\begin{eqnarray*}
\fl A^{(0)}_1&:=
{ \hat{M}_2 - \hat{q}^{(0)} (N^{(0)}_2 +N^{(2)}_2)/ 2}
- N_2^{(1)} \frac{ N^{(2)}_2(\hat{M}_1 - \hat{q}^{(0)} N^{(0)}_1 / 2)
-N^{(2)}_1 (\hat{M}_2 - \hat{q}^{(0)} N^{(0)}_2 / 2)
}{N^{(1)}_1 N^{(2)}_2- N^{(1)}_2 N^{(2)}_1} ,\\
\fl A^{(1)}_1&:=
{ \hat{M}_2 - \hat{q}^{(0)} N^{(0)}_2 / 2}
- (N_2^{(2)}+N_2^{(1)}) \frac{ N^{(2)}_2(\hat{M}_1 - \hat{q}^{(0)} N^{(0)}_1 / 2)
-N^{(2)}_1 (\hat{M}_2 - \hat{q}^{(0)} N^{(0)}_2 / 2)
}{N^{(1)}_1 N^{(2)}_2- N^{(1)}_2 N^{(2)}_1} , \\
\fl A^{(1)}_2&:=
\frac{
N^{(2)}_1(
N^{(2)}_2
(\hat{M}_1 - \frac{\hat{q}^{(0)}}{2} N^{(0)}_1)
-
N^{(2)}_1 
(\hat{M}_2 - \frac{\hat{q}^{(0)}}{2} N^{(0)}_2) 
)
}{N^{(1)}_1 N^{(2)}_2- N^{(1)}_2 N^{(2)}_1}, \\
\fl A^{(2)}_2&:=
{ \hat{M}_1 - \frac{\hat{q}^{(0)}}{2} N^{(0)}_1}
-
\frac{
N^{(1)}_1(
N^{(2)}_2
(\hat{M}_1 - \frac{\hat{q}^{(0)}}{2} N^{(0)}_1)
-
N^{(2)}_1 
(\hat{M}_2 - \frac{\hat{q}^{(0)}}{2} N^{(0)}_2) 
)
}{N^{(1)}_1 N^{(2)}_2- N^{(1)}_2 N^{(2)}_1} \\
\fl B^{(1)}_1 &:=
{ \hat{M}_3 - \hat{q}^{(0)} N^{(0)}_1 /2}.
\end{eqnarray*}
\end{condition}

\begin{condition}\Label{c15}
Any element $\vec{\bm{N}} \in \Omega_1$ satisfies
\begin{eqnarray}
\frac{\hat{b}^{(1)}_{\times} (\hat{\bm{M}},
\vec{\bm{N}})
}{\hat{a}^{(1)}_{\times} (\hat{\bm{M}},
\vec{\bm{N}})
+\hat{b}^{(1)}_{\times} (\hat{\bm{M}},
\vec{\bm{N}})
}
\le \frac{1}{8}.
\end{eqnarray}
\end{condition}

\begin{rem}[Adjustment of $\hat{q}^{(0)}$ for non-improved formula]
When the vacuum pulse has a possibility to contain a non-vacuum state,
we cannot apply the above formula $\hat{q}^{(0)}$.
Hence, we need its adjustment.
Assume that 
the vacuum pulse becomes a non-vacuum state with a probability $q$.
In this case, we replace $\hat{q}^{(0)}$ by 
\begin{eqnarray}
\fl
\hat{q}^{(0)}(M_0):=
p_{\est}^+(N_0-X_{\per}^+(N_0, q, 2^{-2\beta-8}), M_0-X_{\per}^+(N_0, q, 2^{-2\beta-8}), 2^{-2\beta-8}).
\Label{5-1-2i}
\end{eqnarray}
\end{rem}

Here, we should remark that 
Condition \ref{c6} is given for 
the initial parameters $\beta,\mu_1,\mu_2,N_1,N_2$
while
Conditions \ref{c5} and \ref{c15}
are given for 
the observed values $\bm{M}=(M_s,M_0,M_1,M_2,M_3)$
as well as 
the initial parameters $\beta,\mu_1,\mu_2,N_0,N_1,N_2$.
Hence, it is required to choose 
the initial parameters $\beta,\mu_1,\mu_2,N_1,N_2$ satisfying Condition \ref{c6}.
Further, we need to choose the initial parameters $\beta,\mu_1,\mu_2,N_0,N_1,N_2$
so that
Conditions \ref{c5} and \ref{c15} hold with high probability.

Now, we consider the case when there might exist an eavesdropper.
In this case, even if we choose $\mu_1$, $\mu_2$, $N_0$, $N_1$, $N_2$ suitably,
the eavesdropper might control the channel parameters
$\bar{q}^{(0)}$, $\bar{\bm{a}}$ and $\bar{\bm{b}}$ 
so that
Conditions \ref{c5} and \ref{c15} do not hold.
Hence, we need to prepare a method to smoothly decide whether Conditions \ref{c5} and \ref{c15} hold. 

We will show that 
the non-improved formula satisfies the condition (\ref{35i})
in Sections \ref{s4}, \ref{s7}, and \ref{s9}.
The following table (Table \ref{outline-zu-2}) explains 
which equations in Sections \ref{s4}, \ref{s7}, and \ref{s9}
correspond to the above steps in the non-improved formula.

\begin{table}[htb]
\caption{Detail descriptions of respective steps}
\Label{outline-zu-2}
\begin{center}
  \begin{tabular}{|l|l|l|} \hline
    Step & Subsection & Equation  \\ \hline
Step (1) & \ref{s7-3}&(\ref{10-4-0}), (\ref{10-4-1}), (\ref{10-4-2}), (\ref{10-4-3}) 
\\ \hline
Step (2) & \ref{s93} &(\ref{3-28-1}), (\ref{3-28-2}), (\ref{3-28-3}), (\ref{3-28-4}), (\ref{3-28-5}) 
\\ \hline
Step (3) & \ref{s5} &(\ref{1-23-4}), (\ref{1-23-4-2}), (\ref{10-8-20}) \\ \hline
Step (4) & \ref{s7-1} &(\ref{1-24-1}), (\ref{1-24-2}), (\ref{1-24-3}) 
\\ \hline
Step (5) & \ref{s5} &(\ref{10-26-1})
 \\ \hline
Step (6) & \ref{s93} &(\ref{3-28-6}) 
\\ \hline
Adjustment & \ref{s7-3} &(\ref{5-1-2}) (Remark \ref{r5-15})
\\ \hline
    \end{tabular}
\end{center}
\end{table}


\subsection{Improved formula}\Label{s-imp}
However, the above construction is too restrictive. 
We can replace Steps {\bf (1)}, {\bf (2)}, {\bf (4)}, Condition \ref{c6}, 
and the definition of the set $\Omega_1$
as follows.
That is, Conditions \ref{c5} and \ref{c15} are replaced by the conditions based on 
the improved version of $\Omega$.
The formula given here for the sacrifice bit-length is called
the improved formula.
Section \ref{s9-5} explains why the improvement is possible. 
That is, Section \ref{s9-5} shows that the improved formula also guarantees the condition (\ref{35i}).

\begin{description}
\item[Step (1)]
We replace the estimated detection rates of decoy pulses by the following way:
\begin{eqnarray}
\hat{M}_0 &:=X_{\est}^-(N_0,M_0,2^{-\beta-6}),\Label{10-4-0-e} \\
\hat{M}_1 &:=X_{\est}^-(N_1,M_1,2^{-2\beta-7}),\Label{10-4-1-e} \\
\hat{M}_2 &:=X_{\est}^+(N_2,M_2,2^{-2\beta-7}),\Label{10-4-2-e} \\
\hat{M}_3 &:=X_{\est}^+(N_1,M_3,2^{-2\beta-7}).\Label{10-4-3-e} 
\end{eqnarray}
\item[Step (2)]
We replace the estimated partitions of several kinds of transmitted pulses 
by the following way:
\begin{eqnarray}
\hat{N}^{(0)}_1&:=X_{\per}^{-}(N_1,e^{-\mu_1},2^{-\beta-6})\Label{3-28-10}\\
\hat{N}^{(1)}_1&:=X_{\per}^{-}(N_1,\mu_1 e^{-\mu_1},2^{-\beta-6})\Label{3-28-11}\\
\hat{N}^{(0)}_2&:=X_{\per}^{-}(N_2,e^{-\mu_2},2^{-\beta-6})\Label{3-28-12}\\
\hat{N}^{(1)}_2&:=X_{\per}^{-}(N_2,\mu_2 e^{-\mu_2},2^{-\beta-6})\Label{3-28-13}\\
\hat{N}^{(2)}_2&:=X_{\per}^{-}(N_2,\omega_2 \mu_2^2 e^{-\mu_2},2^{-\beta-6}) .\Label{3-28-14}
\end{eqnarray}
\item[Step (4)]
We replace the estimated partition of detected pulses 
and the estimated phase error rate of the single photon
of raw keys by the following way:
\begin{eqnarray}
\fl \hat{J}^{(0)} 
&:= X_{\per}^-(N_s,e^{-\mu_s} \hat{q}^{(0)},2^{-\beta-6}) \Label{1-24-1-e}\\
\fl \hat{J}^{(1)} 
&:= X_{\per}^-(N_s,e^{-\mu_s} \mu_s (\hat{a}^{(1)}_{\times}+\hat{b}^{(1)}_{\times}),2^{-\beta-6}) \Label{1-24-2-e} \\
\fl \hat{r}^{(1)}_{\times} 
&:= p_{\per}^+\Bigl(\hat{J}^{(1)} 
, 
\frac{\hat{b}^{(1)}_{\times}}{\hat{a}^{(1)}_{\times}+\hat{b}^{(1)}_{\times}}
, 2^{-2\beta-7}\Bigr). \Label{1-24-3-e}
\end{eqnarray}
\end{description}
The definition of the set $\Omega_1$ is replaced as the set of $\vec{\bm{N}}$ satisfying 
\begin{eqnarray}
{N}^{(0)}_1&\in [X_{\per}^{-}(N_1,e^{-\mu_1},2^{-\beta-6}),X_{\per}^{+}(N_1,e^{-\mu_1},2^{-\beta-6})] 
\Label{11-12-1-d}\\
{N}^{(1)}_1&\in [X_{\per}^{-}(N_1,\mu_1 e^{-\mu_1},2^{-\beta-6}),X_{\per}^{+}(N_1,\mu_1 e^{-\mu_1},2^{-\beta-6})] 
\Label{11-12-2-d}\\
{N}^{(0)}_2&\in [X_{\per}^{-}(N_2,e^{-\mu_2},2^{-\beta-6}),X_{\per}^{+}(N_2,e^{-\mu_2},2^{-\beta-6})] 
\Label{11-12-3-d}\\
{N}^{(1)}_2&\in [X_{\per}^{-}(N_2,\mu_2 e^{-\mu_2},2^{-\beta-6}),X_{\per}^{+}(N_2,\mu_2 e^{-\mu_2},2^{-\beta-6})] 
\Label{11-12-4-d}\\
{N}^{(2)}_2&\in [X_{\per}^{-}(N_2,\omega_2 \mu_2^2 e^{-\mu_2},2^{-\beta-6}), X_{\per}^{+}(N_2,\omega_2 \mu_2^2 e^{-\mu_2},2^{-\beta-6})] .
\Label{11-12-5-d}
\end{eqnarray}

Condition \ref{c6} is replaced as follows.
\setcounter{condition}{0}
\begin{condition}
Any element $\vec{\bm{N}}\in\Omega_1$ satisfies
\begin{eqnarray*}
\fl & X_{\per}^{-}(N_1,\mu_1 e^{-\mu_1},2^{-\beta-6})
X_{\per}^{-}(N_2,\omega_2 \mu_2^2 e^{-\mu_2},2^{-\beta-6}) \\
\fl >& 
(N_1- X_{\per}^{-}(N_1,e^{-\mu_1},2^{-\beta-6}) -X_{\per}^{-}(N_1,\mu_1 e^{-\mu_1},2^{-\beta-6}))
X_{\per}^{+}(N_2,\mu_2 e^{-\mu_2},2^{-\beta-6}), \\
\fl & X_{\per}^{-}(N_2,\omega_2 \mu_2^2 e^{-\mu_2},2^{-\beta-6})
X_{\per}^{-}(N_1,e^{-\mu_1},2^{-\beta-6}) \\
\fl >& 
(N_1- X_{\per}^{-}(N_1,e^{-\mu_1},2^{-\beta-6}) -X_{\per}^{-}(N_1,\mu_1 e^{-\mu_1},2^{-\beta-6}))
X_{\per}^{+}(N_2,e^{-\mu_2},2^{-\beta-6}), \\
\fl &
\frac{X_{\per}^{-}(N_2,\omega_2 \mu_2^2 e^{-\mu_2},2^{-\beta-6})}{N_1- X_{\per}^{-}(N_1,e^{-\mu_1},2^{-\beta-6}) -X_{\per}^{-}(N_1,\mu_1 e^{-\mu_1},2^{-\beta-6})}
+
\frac{X_{\per}^{-}(N_2,e^{-\mu_2},2^{-\beta-6})}{X_{\per}^{+}(N_1,e^{-\mu_1},2^{-\beta-6})} \\
\fl  >&
\frac{2 X_{\per}^{+}(N_2,\mu_2 e^{-\mu_2},2^{-\beta-6})}{X_{\per}^{-}(N_1,\mu_1 e^{-\mu_1},2^{-\beta-6})}.
\end{eqnarray*}
\end{condition}
\setcounter{condition}{3}

\begin{rem}[Adjustment of $\hat{q}^{(0)}$ for improved formula]
When the vacuum pulse has a possibility to contain a non-vacuum state,
we cannot apply the above formula $\hat{q}^{(0)}$.
Hence, we need its adjustment.
Assume that 
the vacuum pulse becomes a non-vacuum state with a probability $q$.
In this case, we replace $\hat{q}^{(0)}$ by 
\begin{eqnarray}
\fl
\hat{q}^{(0)}(M_0):=
p_{\est}^+(N_0-X_{\per}^+(N_0, q, 2^{-\beta-6}), M_0-X_{\per}^+(N_0, q, 2^{-\beta-6}), 2^{-\beta-6}).
\Label{5-1-2ii}
\end{eqnarray}
\end{rem}

\subsection{Numerical analysis}\Label{s13-1}
Next, we treat numerical analysis with the improved formula of the sacrifice-bit length.
In the following, we consider only the case when
the perfect vacuum state is available and 
the signal intensity is $\mu_2$, the decoy intensity is $\mu_1$, 
i.e., 
$\mu_s=\mu_2$, $N_s=N_{s,2}$, and $M_s=M_{s,2}$.
This is because this case is better than the opposite case 
in the asymptotic case as is shown in the paper \cite{HM2013}.
We also choose the parameters as
$N_0=N_1=N_2=N_{s,2}/10$, and $\beta=80$, i.e., the trace norm is less than $2^{-80}$.

\begin{figure}[htbp]
	\begin{tabular}{cc}
	\begin{minipage}{.5\textwidth}
		\centering
\includegraphics[width=8.00cm, clip]{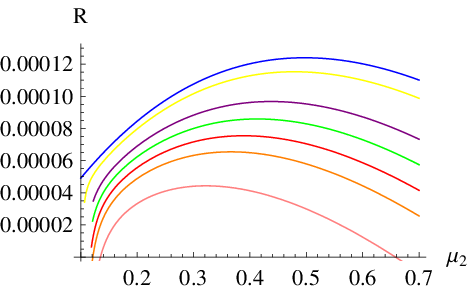}
\caption{The above graphs describe the key generation rate $R_{2,f}$ given in (\ref{12-8-1}) 
as functions of the signal intensity $\mu_2$ when the decoy intensity $\mu_1$ is $0.1$.
The pink line is the case when
the bit-length of raw keys $M_{s,2}$ is $10^{6}$.
The orange line is the case with $M_{s,2}=2 \times 10^6$.
The red line is the case with $M_{s,2}=3 \times 10^6$.
The green line is the case with $M_{s,2}=5 \times 10^6$.
The purple line is the case with $M_{s,2}=10^7$.
The yellow line is the case with $M_{s,2}=10^8$.
The blue line is the asymptotic case.}
\Label{f1}
	\end{minipage}
		\begin{minipage}{.5\textwidth}
\includegraphics[width=8.00cm, clip]{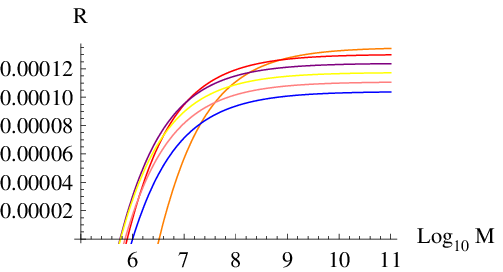}
\caption{The above graphs describe the key generation rate $R_{2,f}$ given in (\ref{12-8-1}) as 
functions of the bit-length of raw keys $M_{s,2}$ when the signal intensity $\mu_2$ is $0.5$.
The orange line is the case when the decoy intensity $\mu_1$ is $0.01$.
The red line is the case with $\mu_1=0.05$.
The purple line is the case with $\mu_1=0.1$.
The yellow line is the case with $\mu_1=0.15$.
The pink line is the case with $\mu_1=0.2$.
The blue line is the case with $\mu_1=0.25$.\vspace{6ex}}
\Label{f2}
	\end{minipage}
	\end{tabular}
\end{figure}

\begin{figure}[h]
\centering
\includegraphics[width=8.00cm, clip]{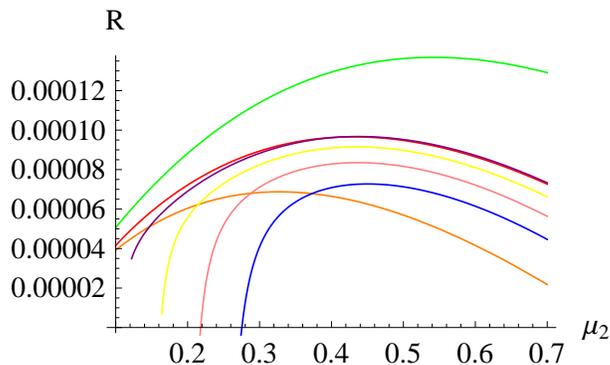}
\caption{The above graphs describe the key generation rate $R_{2,f}$ given in (\ref{12-8-1}) as functions of the signal intensity $\mu_2$ 
when the bit-length of raw keys $M_{s,2}$ is $10^{7}$.
The orange line is the case when the decoy intensity $\mu_1$ is $0.01$.
The red line is the case with $\mu_1=0.05$.
The purple line is the case with $\mu_1=0.1$.
The yellow line is the case with $\mu_1=0.15$.
The pink line is the case with $\mu_1=0.2$.
The blue line is the case with $\mu_1=0.25$.
The green line is the asymptotic case with $\mu_1 \to 0$.}
\Label{f3}
\end{figure}

It is natural to assume that the measured values $M_0$, $M_1$, $M_2$, $M_3$, and $N_{s,2}$
are given as functions of $M_{s,2}$ in the following way
\begin{eqnarray}
M_0&= p_0 N_0, \quad
M_1= (p_{1,\times}-s_{1,\times})N_1 \Label{12-8-2} \\
M_2&= (p_{2,\times}-s_{2,\times})N_2, \quad
M_3= s_{1,\times} N_1 ,\quad N_{s,2}=M_{s,2}/p_{2,+}.
\Label{12-8-3} 
\end{eqnarray}
We also assume that
the channel parameters,
i.e., the detection rates 
$p_{i,\times}$ and $p_{i,+}$
of $\mu_i$-intensity pulses 
with the bases $\times$ and $+$
and the rates $s_{i,\times}$ and $s_{i,+}$
of the detected $\mu_i$-intensity pulses having phase error 
to the transmitted $\mu_i$-intensity pulses 
with the bases $\times$ and $+$ as follows \cite{LB,GRTZ}.
\begin{eqnarray}
&p_{i,+}=p_{i,\times}= 1-e^{-\alpha \mu_i} + p_0 ,\nonumber \\
&s_{i,+} =s_{i,\times} = s(1-e^{-\alpha \mu_i}) + \frac{p_0}{2} ,
\Label{10-8-14-b}
\end{eqnarray}
where $\alpha$ is the total transmission including quantum efficiency of the detector,
and $s$ is the error due to the imperfection of the optical system.
In the following, we choose $\alpha =1.0 \times 10^{-3}$, $p_0 = 4.0 \times 10^{-7}$, $s=0.03$.
Then, we consider the key generation rate with finite-length:
\begin{eqnarray}
R_{2,f}:=\frac{M_{s,2}-S- \eta h(\frac{s_{1,+}}{p_{1,+}})M_{s,2}  }{N_{s,2}},
\Label{12-8-1}
\end{eqnarray}
where $S$ is the sacrifice bit-length and $\eta= 1.1$.

Since 
the required value $2^{-80}$ is too small and the sizes $N_0,N_1,N_2,M_s$ are too large,
the exact calculations of the values 
$X_{\per}^{\pm}(N,p,\alpha)$, 
$X_{\est}^{\pm}(N,k,\alpha)$, and $p_{\est}^{\pm}(N,k,\alpha)$ spend too much time.
So, instead of the exact calculation, 
we employ the bounds of these values based on
Chernoff bound, which require a smaller amount of calculations
and are summarized in Appendices A and B. 
Indeed, when $N$ is large enough,
the exact values of 
$X_{\per}^{\pm}(N,p,\alpha)$, 
$X_{\est}^{\pm}(N,k,\alpha)$, and $p_{\est}^{\pm}(N,k,\alpha)$
are close to the values based on Chernoff bound sufficiently for our purpose
because the difference between the exact values $X_{\per}^{\pm}(N,p,\alpha)$, $X_{\est}^{\pm}(N,k,\alpha)$
and their values based on Chernoff bound 
behaves with the order $\log N$.

As is illustrated in Figs. \ref{f1} and \ref{f2} with $\mu_1=0.1$, $\alpha=1/1000$, $p_0=0.0000004$, $\eta=1.1$,
the key generation rate
is close to the asymptotic key generation rate $R_2(\mu_1,\mu_2)$
when the length of the code $M_{s,2}$ is increasing.
As is shown in \cite{HM2013},
the asymptotic key generation rate is monotonically decreasing with respect to $\mu_1$. 
However, as is illustrated in Fig. \ref{f3} with $\alpha=1/1000$, $p_0=0.0000004$, $\eta=1.1$,
the key generation rate is not monotonically decreasing with respect to $\mu_1$
when the length of the code $M_{s,2}$ is not sufficiently large.
That is, too small $\mu_1$ does not give a good key generation rate.
This is because smaller $\mu_1$ yields a larger estimation error.

\section{Sacrifice bit-length when the intensities are not fixed with the finite-length case}\Label{s12}
\subsection{Derivation of modified formula}
Unfortunately, many realized quantum key distribution systems have
fluctuation for the intensities.
The formulas of the secure sacrifice bit-length given in Section \ref{s5-5}
can guarantee the security (\ref{35i})
when the partitions of $N_1$ pulses and $N_2$ pulses
obey the Poisson distribution with a fixed intensity.
However, when the intensities have fluctuation,
we have to derive 
the sacrifice bit-length 
by taking into account this factor.
That is, we need to discuss
the distribution for $\vec{\bm{N}}$
in the different way. 
In this section, we discuss the sacrifice bit-length 
by taking into account the statistical fluctuation for the intensities.
Since the definition of $\rho_2$ given in (\ref{1-23-1})
depends on the intensity $\mu_1$,
we need to modify the definition of $\rho_2$ properly.


\subsubsection{Modifications of $\rho_2$, $\rho_3$, $\omega_2$, and $\omega_3$}
In the following, 
we assume that the intensities $\mu_1$ and $\mu_2$ 
independently obey independent and identical distributions
of the distributions $P_1$ and $P_2$
satisfying the following condition.
For any integer $n \ge 3$,
the relation
\begin{eqnarray*}
\rE[ e^{-\mu_2} \mu_2^n ]
\rE[ e^{-\mu_1} \mu_1^2 ]
\ge
\rE[ e^{-\mu_1} \mu_1^n ]
\rE[ e^{-\mu_2} \mu_2^2 ]
\end{eqnarray*}
holds, where
$\rE$ denotes the expectation under the distributions $P_1$ and $P_2$.
Under the above assumption,
we have expansions for two kinds of pulses.
\begin{eqnarray}
\fl \sum_{n=0}^\infty  \frac{\rE [e^{-\mu_1} \mu_1^n] }{n!}
|n \rangle\langle n|
=&
\rE [e^{-\mu_1}] |0\rangle\langle 0|
+\rE [e^{-\mu_1}\mu_1 ] |1\rangle\langle 1| 
+\rE [e^{-\mu_1}\mu_1^2 ] \omega_2\rho_2 \Label{1-23-1-c}, \\
\fl \sum_{n=0}^\infty \frac{\rE [e^{-\mu_2} \mu_2^n ] }{n!}
|n \rangle\langle n|
=&
\rE [e^{-\mu_2}]|0\rangle\langle 0|
+\rE [e^{-\mu_2}\mu_2 ]|1\rangle\langle 1| 
+\rE [e^{-\mu_2}\mu_2^2 ] \omega_2 \rho_2
+ \omega_3' \rho_3,
\Label{4-9-4-c}
\end{eqnarray}
where
\begin{eqnarray}
\rho_2 &:= 
\frac{1}{\omega_2}
\sum_{n=2}^\infty 
\frac{\rE [e^{-\mu_1} \mu_1^n ]}{n! \rE [e^{-\mu_1} \mu_1^2 ] }
|n \rangle\langle n| \Label{10-5-1}\\
\rho_3 &:= 
\frac{1}{\omega_3'}
\sum_{n=3}^\infty 
\frac{
\rE[ e^{-\mu_2} \mu_2^n ]
\rE[ e^{-\mu_1} \mu_1^2 ]
-
\rE[ e^{-\mu_1} \mu_1^n ]
\rE[ e^{-\mu_2} \mu_2^2 ]
}{n! \rE[ e^{-\mu_1} \mu_1^2 ]}
|n \rangle\langle n| \Label{10-5-2}\\
\omega_2 &:=
\sum_{n=2}^\infty 
\frac{\rE [e^{-\mu_1} \mu_1^n ]}{n! \rE [e^{-\mu_1} \mu_1^2 ] }
\Label{10-5-3} \\
\omega_3' &:=
\sum_{n=3}^\infty 
\frac{
\rE[ e^{-\mu_2} \mu_2^n ]
\rE[ e^{-\mu_1} \mu_1^2 ]
-
\rE[ e^{-\mu_1} \mu_1^n ]
\rE[ e^{-\mu_2} \mu_2^2 ]
}{n! \rE[ e^{-\mu_1} \mu_1^2 ]}.
\Label{10-5-4}
\end{eqnarray}
Indeed, our analysis in the previous sections uses 
the 
expansions (\ref{1-23-1}) and (\ref{4-9-4})
and their coefficients.
Hence, replacing expansions (\ref{1-23-1}) and (\ref{4-9-4}) by expansions (\ref{1-23-1-c}) and (\ref{4-9-4-c}),
we can apply the discussion with suitable modifications 
in the following way.
(A similar idea was used in Wang \cite{wang2,wang3}.)

\subsubsection{Modifications of the set $\Omega_1$ and the estimate $\hat{\bm{N}}_1$ and $\hat{\bm{N}}_2$}
We redefine the set $\Omega_1$ as the set of 
$\vec{\bm{N}}$ satisfying 
\begin{eqnarray}
{N}^{(0)}_1&\in [X_{\per}^{-}(N_1,\rE [e^{-\mu_1}],2^{-\beta-6}),X_{\per}^{+}(N_1,\rE [e^{-\mu_1}],2^{-\beta-6})] \nonumber \\
{N}^{(1)}_1&\in [X_{\per}^{-}(N_1,\rE [\mu_1 e^{-\mu_1}],2^{-\beta-6}),X_{\per}^{+}(N_1,\rE [\mu_1 e^{-\mu_1}], 2^{-\beta-6})] \nonumber \\
{N}^{(0)}_2&\in [X_{\per}^{-}(N_2,\rE [e^{-\mu_2}],2^{-\beta-6}),X_{\per}^{+}(N_2, \rE[ e^{-\mu_2}],2^{-\beta-6})] \nonumber \\
{N}^{(1)}_2&\in [X_{\per}^{-}(N_2,\rE [\mu_2 e^{-\mu_2}],2^{-\beta-6}),X_{\per}^{+}(N_2,\rE[ \mu_2 e^{-\mu_2}],2^{-\beta-6})] \nonumber \\
{N}^{(2)}_2&\in [X_{\per}^{-}(N_2,\rE [e^{-\mu_2}\mu_2^2 ] \omega_2,2^{-\beta-6}), 
X_{\per}^{+}(N_2,\rE [e^{-\mu_2}\mu_2^2 ] \omega_2,2^{-\beta-6})] .\nonumber 
\end{eqnarray}
We also redefine $\hat{\bm{N}}_1$ and $\hat{\bm{N}}_2$ in the following way.
\begin{eqnarray*}
\hat{N}^{(0)}_1&:=X_{\per}^{-}(N_1,\rE [e^{-\mu_1}],2^{-\beta-6})\\
\hat{N}^{(1)}_1&:=X_{\per}^{-}(N_1,\rE [\mu_1 e^{-\mu_1}],2^{-\beta-6})\\
\hat{N}^{(0)}_2&:=X_{\per}^{-}(N_2,\rE [e^{-\mu_2}],2^{-\beta-6})\\
\hat{N}^{(1)}_2&:=X_{\per}^{-}(N_2,\rE [\mu_2 e^{-\mu_2}],2^{-\beta-6})\\
\hat{N}^{(2)}_2&:=X_{\per}^{-}(N_2,\rE [e^{-\mu_2}\mu_2^2 ] \omega_2,2^{-\beta-6}) .
\end{eqnarray*}

\subsubsection{Modifications of Conditions \ref{c6}, \ref{c5}, and \ref{c15}}
Under the above modification, we change Condition \ref{c6} as follows.
\setcounter{condition}{0}
\begin{condition}
Any element $\vec{\bm{N}}$
in the modified set $\Omega_1$
satisfies
\begin{eqnarray*}
\fl & X_{\per}^{-}(N_1,\rE [\mu_1 e^{-\mu_1}],2^{-\beta-6})
X_{\per}^{-}(N_2, \omega_2 \rE [ \mu_2^2 e^{-\mu_2}],2^{-\beta-6}) \\
\fl >& 
(N_1- X_{\per}^{-}(N_1,\rE [ e^{-\mu_1}],2^{-\beta-6}) 
-X_{\per}^{-}(N_1,\rE [\mu_1 e^{-\mu_1}],2^{-\beta-6}))
X_{\per}^{+}(N_2,\rE [\mu_2 e^{-\mu_2}],2^{-\beta-6}),\\
\fl & X_{\per}^{-}(N_2,\omega_2 \rE [\mu_2^2 e^{-\mu_2}],2^{-\beta-6})
X_{\per}^{-}(N,\rE [e^{-\mu_1}],2^{-\beta-6}) \\
\fl > &
(N_1- X_{\per}^{-}(N_1,\rE [e^{-\mu_1}],2^{-\beta-6}) 
-X_{\per}^{-}(N_1,\rE [\mu_1 e^{-\mu_1}],2^{-\beta-6}))
X_{\per}^{+}(N_2,\rE [e^{-\mu_2}],2^{-\beta-6}),\\
\fl &
\frac{X_{\per}^{-}(N_2,\omega_2 \rE [ \mu_2^2 e^{-\mu_2}],2^{-\beta-6})}
{N_1- X_{\per}^{-}(N_1,\rE [e^{-\mu_1}],2^{-\beta-6}) -X_{\per}^{-}(N_1,\rE [\mu_1 e^{-\mu_1}],2^{-\beta-6})}
+
\frac{X_{\per}^{-}(N_2,\rE [e^{-\mu_2}],2^{-\beta-6})}{X_{\per}^{+}(N_1,\rE [e^{-\mu_1}],2^{-\beta-6})} \\
\fl  >&
\frac{2 X_{\per}^{+}(N_2,\rE [\mu_2 e^{-\mu_2}],2^{-\beta-6})}{X_{\per}^{-}(N_1,\rE [\mu_1 e^{-\mu_1}],2^{-\beta-6})}.
\end{eqnarray*}
\end{condition}

Conditions \ref{c5} and \ref{c15} are redefined in the term of $\Omega_1$
defined above.

\begin{condition}
For any element $\vec{\bm{N}}$ in the modified set $\Omega_1$, 
all of 
$A^{(0)}_1$, $A^{(1)}_1$, $A^{(0)}_2$, $A^{(1)}_2$, and $A^{(2)}_2$
are negative.
\end{condition}

\begin{condition}
Any element $\vec{\bm{N}}$ in the modified set $\Omega_1$ satisfies 
the conditions in original Condition \ref{c15}. 
\end{condition}

\subsubsection{Modifications of sacrifice bit-length $S$}
Next, in order to modify the sacrifice bit-length $S$, 
we modify $\hat{J}^{(0)}$, $\hat{J}^{(1)}$, $\hat{r}^{(1)}_{\times}$,
and $\hat{\phi}_1$
as follows.
\begin{eqnarray*}
\fl \hat{J}^{(0)}(\bar{q}^{(0)},N_s) 
&:=& X_{\per}^-(N_s,\rE [e^{-\mu_s}] \bar{q}^{(0)},2^{-2\beta-8}) 
\\
\fl \hat{J}^{(1)}(\bar{a}^{(1)}_{\times},\bar{b}^{(1)}_{\times},N_s) 
&:=& X_{\per}^-(N_s,\rE [e^{-\mu_s} \mu_s] (\bar{a}^{(1)}_{\times}+\bar{b}^{(1)}_{\times}),2^{-2\beta-8})
\\ 
\fl \hat{r}^{(1)}_{\times}(\bar{a}^{(1)}_{\times},\bar{b}^{(1)}_{\times},N_s)  
&:=& p_{\per}^+(\hat{J}^{(1)}(\bar{a}^{(1)}_{\times},\bar{b}^{(1)}_{\times},N_s) , 
\frac{\bar{b}^{(1)}_{\times}}{\bar{a}^{(1)}_{\times}+\bar{b}^{(1)}_{\times}}
, 2^{-2\beta-8}) 
\\
\fl
\hat{\phi}_1(\bar{q}^{(0)},\bar{a}^{(1)}_{\times},\bar{b}^{(1)}_{\times},N_s) 
&:=& M_s-\hat{J}^{(0)}(\bar{q}^{(0)},N_s)  \\
\fl & &- \hat{J}^{(1)}(\bar{a}^{(1)}_{\times},\bar{b}^{(1)}_{\times},N_s)  
(1- h(\min \{\hat{r}^{(1)}_{\times}(\bar{a}^{(1)}_{\times},\bar{b}^{(1)}_{\times},N_s)  ,1/2\}) ) .
\end{eqnarray*}
Then, using the same functions 
$\hat{q}^{(0)}$, $\hat{a}^{(1)}_{\times}$, and $\hat{b}^{(1)}_{\times}$,
we define
$\hat{\phi}_2$ 
by (\ref{10-26-1i}).

Finally, we define the sacrifice bit-length 
$S$ by (\ref{3-28-6i}) 
with modified Conditions \ref{c6}, \ref{c5}, and \ref{c15}.
Then, the relation (\ref{35i}) holds.
This fact can be shown by replacing the definitions of 
$\rho_2$ and $\rho_3$ and related parameters in the security proofs given in 
Sections \ref{s4} - \ref{s9-5}.


\subsubsection{Extension to Case when the distributions of $\mu_2$ and $\mu_1$ are unknown}
Next, we treat the case
when there are several candidates for the distribution of
$\mu_2$ and $\mu_1$
while $\mu_2$ and $\mu_1$
obey independent and identical distributions.
The possible distributions is denoted by
$P_{\theta,1}$ and $P_{\theta,2}$,
and the expectation is written by $\rE_\theta$.
Then, we denote the set $\Omega_1 $ under the distribution $P_\theta$ by $\Omega_{1,\theta}$.

In this case, Conditions \ref{c6} and \ref{c5} are needed to be satisfied
for any $\theta$.
Hence, Condition \ref{c6} is redefined as follows.
That is, the following relations hold for any $\theta$.
\begin{eqnarray*}
\fl & 
X_{\per}^{-}(N_1,\rE_{\theta} [\mu_1 e^{-\mu_1}],2^{-\beta-6})
X_{\per}^{-}(N_2,\omega_{2|\theta} \rE_{\theta} [\mu_2^2 e^{-\mu_2}],2^{-\beta-6}) \\
\fl >& 
(N_1- X_{\per}^{-}(N_1,\rE_{\theta} [ e^{-\mu_1}],2^{-\beta-6}) 
-
X_{\per}^{-}(N_1,\rE_{\theta} [\mu_1 e^{-\mu_1}],2^{-\beta-6}))
X_{\per}^{+}(N_2,\rE_{\theta} [\mu_2 e^{-\mu_2}],2^{-\beta-6}),\\
\fl & 
X_{\per}^{-}(N_2,\omega_{2|\theta} \rE_{\theta} [\mu_2^2 e^{-\mu_2}],2^{-\beta-6})
X_{\per}^{-}(N,\rE_{\theta} [e^{-\mu_1}],2^{-\beta-6}) \\
\fl >& 
(N_1- X_{\per}^{-}(N_1,\rE_{\theta} [e^{-\mu_1}],2^{-\beta-6}) 
-
X_{\per}^{-}(N_1,\rE_{\theta} [\mu_1 e^{-\mu_1}],2^{-\beta-6}))
X_{\per}^{+}(N_2,\rE_{\theta} [e^{-\mu_2}],2^{-\beta-6}),\\
\fl &
\frac{X_{\per}^{-}(N_2,\omega_{2|\theta} \rE_{\theta} [ \mu_2^2 e^{-\mu_2}],2^{-\beta-6})}
{N_1- X_{\per}^{-}(N_1,\rE_{\theta} [e^{-\mu_1}],2^{-\beta-6}) -X_{\per}^{-}(N_1,\rE_{\theta} [\mu_1 e^{-\mu_1}],2^{-\beta-6})}
+
\frac{X_{\per}^{-}(N_2,\rE_{\theta} [e^{-\mu_2}],2^{-\beta-6})}{X_{\per}^{+}(N_1,\rE_{\theta} [e^{-\mu_1}],2^{-\beta-6})} \\
\fl  >&
\frac{2 X_{\per}^{+}(N_2,\rE_{\theta} [\mu_2 e^{-\mu_2}],2^{-\beta-6})}{X_{\per}^{-}(N_1,\rE_{\theta} [\mu_1 e^{-\mu_1}],2^{-\beta-6})},
\end{eqnarray*}
where
$\omega_{2|\theta}$
is $\omega_2$ with the distribution $P_{\theta,1}$.

Further,
we redefine Condition \ref{c5}
as the condition that 
all of 
$A^{(0)}_1$, $A^{(1)}_1$, $A^{(0)}_2$, $A^{(1)}_2$, and $A^{(2)}_2$
are negative for $\vec{\bm{N}} \in \cup_{\theta}\Omega_{1,\theta}$.
We define 
$\hat{\phi}_{2,\theta}$ 
to be 
$\hat{\phi}_2$ given in (\ref{10-26-1i})
when the true distributions are $P_{\theta,1}$ and $P_{\theta,2}$.
Finally, we define the sacrifice bit-length 
$S$ 
by $\sup_\theta \hat{\phi}_{2,\theta}+2\beta+5$
when modified Conditions \ref{c6}, \ref{c5}, and \ref{c15} hold.
Otherwise, we set $S$ to be $\dim C_1$.
Then, letting 
$\rho_{A,E|\theta}$ be
the final state with the true distributions
$P_{\theta,1}$ and $P_{\theta,2}$, $\rho_{\ideal|\theta}$ be the ideal state,
we obtain
\begin{eqnarray}
\| \rho_{A,E|\theta}-\rho_{\ideal|\theta} \|_1
\le 2^{-\beta}.
\end{eqnarray}
That is, the inequality holds for any $\theta$.

In the following, we consider the case when the pulses are generated with the mixture of the plural independent and identical
distributions $P_{\theta,1}$ and $P_{\theta,2}$, respectively.
In this case, we define 
Conditions \ref{c6}, \ref{c5}, and \ref{c15} in the above way.
Then,
the intensities of $N_{s,1}+N_1$ pulses
are described by
$(\mu_{1,1}, \ldots, \mu_{1,N_{s,1}+N_1})$
and 
are subject to the distribution
$\sum_{\theta} \lambda_{\theta} P_{\theta,1}^{\times (N_{s,1}+N_1)}$,
where
$P^{\times N_{s,1}}$ is the $N_{s,1}$-fold independent and identical distribution
of $P$.
Similarly,
the intensities of $N_{s,2}+N_2$ pulses
are described by
$(\mu_{2,1}, \ldots, \mu_{2,N_{s,2}+N_2})$
and 
are subject to the distribution
$\sum_{\theta} \lambda_{\theta} P_{\theta,2}^{\times (N_{s,2}+N_2)}$.
Then, we choose the sacrifice bit-length 
$S$ 
to be $\sup_\theta \hat{\phi}_{2,\theta}(\bm{M})+2\beta+5$
when modified Conditions \ref{c6}, \ref{c5}, and \ref{c15} hold.
Otherwise, we set $S$ to be $\dim C_1$.
Since the final state is $\sum_\theta \lambda_\theta \rho_{A,E|\theta}$,
we obtain
\begin{eqnarray}
\fl \|
(\sum_\theta \lambda_\theta \rho_{A,E|\theta}) -
(\sum_\theta \lambda_\theta \rho_{\ideal|\theta})
\|_1
\le
\sum_\theta \lambda_\theta
\| \rho_{A,E|\theta} -\rho_{\ideal|\theta}
\|_1
\le 2^{-\beta}.
\end{eqnarray}
Hence, 
the universal composability criterion is upper bounded by
$2^{-\beta}$.

\subsection{Numerical analysis with Gaussian distribution}\Label{s13-2}
Next, we treat numerical analysis
when two intensities $\mu_1$ and $\mu_2$
independently and identically
obey the Gaussian distributions with the averages 
$\bar{\mu_1}$ and $\bar{\mu_2}$
and the standard deviations 
$\bar{\mu_1}t$ and $\bar{\mu_2}t$, respectively
because 
these fluctuations usually are caused  by the thermal noise.
That is, we assume the value $t$ is independent of the intensity.
This assumption holds, if the weak pulses are obtained
from strong light pulses with a well-calibrated attenuator;
the error originates mainly from the intensity fluctuation of the light source.
In the following, we consider only the case when
the signal intensity is $\mu_2$, the decoy intensity is $\mu_1$, 
i.e., 
$\mu_s=\mu_2$, $N_s=N_{s,2}$, and $M_s=M_{s,2}$.
We also choose the parameters as
$N_0=N_1=N_2=N_{s,2}/10$, and $\beta=80$, i.e., the trace norm is less than $2^{-80}$.

\begin{figure}[htbp]
	\begin{tabular}{cc}
	\begin{minipage}{.5\textwidth}
\centering
\includegraphics[width=8.00cm, clip]{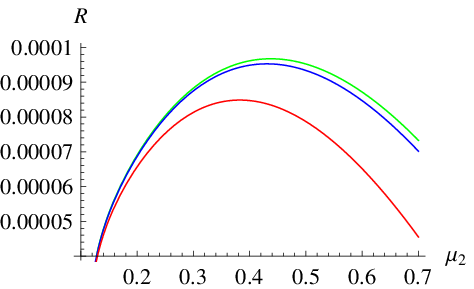}
\caption{All graphs give the key generation rates $R_{2,f}$ 
when the bit-length of raw keys $M_{s,2}$ is $10^7$ and the decoy intensity $\mu_1$ is $0.1$
and is smaller than the signal intensity $\mu_2$.
The horizontal axis describes the signal intensity $\mu_2$.
The green line is the rate $R_{2,f}$ with $t=0\%$.
The blue line is the rate $R_{2,f}$ with $t=10\%$.
The red line is the rate $R_{2,f}$ with $t=30\%$.}
\Label{Gauss1}
	\end{minipage}
		\begin{minipage}{.5\textwidth}
\centering
\includegraphics[width=8.00cm, clip]{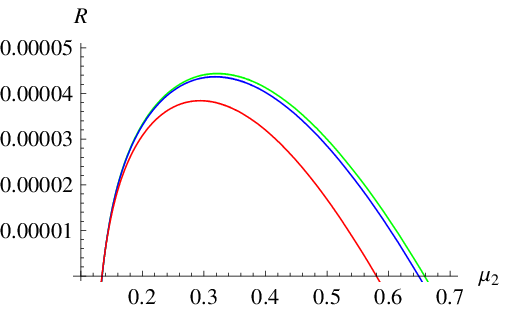}
\caption{
All graphs give the key generation rate $R_{2,f}$ with the bit-length of raw keys $M_{s,2}=10^6$ 
when the decoy intensity $\mu_1$ is $0.1$ and is smaller than the signal intensity $\mu_2$.
The horizontal axis describes the signal intensity $\mu_2$.
The green line is the rate $R_{2,f}$ with $t=0\%$.
The blue line is the rate $R_{2,f}$ with $t=10\%$.
The red line is the rate $R_{2,f}$ with $t=30\%$.}
\Label{Gauss2}
	\end{minipage}
	\end{tabular}
\end{figure}


In order to calculate the sacrifice bit-length given above,
we need $\rE[e^{\mu_i}]$, $\rE[\mu_i e^{\mu_i}]$, $\rE[\mu_i^{2} e^{\mu_i}]$, and $\omega_2$,
which can be easily calculated from the formulas given in \ref{as4}.
In this case, due to (\ref{D6}), it is natural to assume that
the measured values $M_0$, $M_1$, $M_2$, and $M_3$ are
given by (\ref{12-8-2}) and (\ref{12-8-3}) 
when 
$p_{i,+}$, $p_{i,\times}$, $s_{i,+}$, and $s_{i,\times}$ are given as
\begin{eqnarray}
p_{i,+}&= p_{i,\times}= 1-\rE [e^{-\alpha \mu_i}] + p_0 
=1-e^{\frac{2\alpha \bar{\mu_i}-\alpha^{2} t^{2}\bar{\mu_1}^{2}}{2}} + p_0 
\Label{12-8-4}
\\
s_{i,+}&= s_{i,\times}= s(1-\rE [e^{-\alpha \mu_i}]) + \frac{p_0}{2} 
=s(1-e^{\frac{2\alpha \bar{\mu_i}-\alpha^{2} t^{2}\bar{\mu_1}^{2}}{2}}) + \frac{p_0}{2} 
\Label{12-8-5}.
\end{eqnarray}
Hence, 
we choose $N_{s,2}$ to be $M_{s,2}/p_{2,+}$.
Under this assumption,
substituting the sacrifice bit-length given above 
into the key generation rate $R_{f,2}$ given in (\ref{12-8-1}),
we obtain the numerical calculation in Figs. \ref{Gauss1} and \ref{Gauss2}.
These numerical results suggest that
when the variance is less than 10\% of the average,
the fluctuations of intensities do not cause serious decrease of the key generation rate.
Here, similar to Subsection \ref{s13-1},
Here, similar to Subsection \ref{s13-1},
we employ the bounds of $X_{\per}^{\pm}(N,p,\alpha)$, 
$X_{\est}^{\pm}(N,k,\alpha)$, and $p_{\est}^{\pm}(N,k,\alpha)$
given in Appendices A and B.


\section{Preparation for behavior of random variables}\Label{s6}
In this section, we explain that 
we can use the binomial distribution even when 
the true distribution is the hypergeometric distribution.
In this paper, we also treat the hypergeometric distribution
$HG(L,K,N)$ with $N$ draws and $L$ samples containing $K$ success.
In fact, 
the outcome obeys the binary distribution 
in the case of sampling with replacement,
and 
the outcome obeys the hypergeometric distribution
in the case of sampling without replacement.

Then, we study the stochastic behavior of the measured values
$\bm{M}=(M_s, M_0, M_1, M_2, M_3)$ 
under the assumption that the parameters $\bar{q}^{(0)}, \bar{\bm{a}}$, and $\bar{\bm{b}}$ 
are unknown, but are fixed to certain values.
For this purpose, we introduce 
the random variables $\check{M}_s, \check{M}_0, \check{M}_1, \check{M}_2, \check{M}_3$
subject to the binary distributions 
with the same draws and the same successful probabilities 
as $M_s, M_0, M_1, M_2, M_3$ by sampling with replacement.
The number of vacuum pulses is $N_0+N^{(0)}_1+N^{(0)}_2+N_s^{(0)}$.
The detection rate in Bob's side among 
$N_0+N^{(0)}_1+N^{(0)}_2+N_s^{(0)}$ vacuum pulses
is fixed.
$N_0$ vacuum pulses are randomly chosen from $N_0+N^{(0)}_1+N^{(0)}_2+N_s^{(0)}$ vacuum pulses.
Then,
the number $M_0$ of detected pulses among these $N_0$ vacuum pulses
obeys the hypergeometric distribution $HG(N_0+N^{(0)}_1+N^{(0)}_2+N_s^{(0)},\bar{q}^{(0)}(N_0+N^{(0)}_1+N^{(0)}_2+N_s^{(0)}),N_0)$.
For a real number $R > \bar{q}^{(0)}$,
the probability $\Pr \{ \frac{M_0}{N_0} >R \}$ 
is smaller than
the probability $\Pr \{ \frac{\check{M}_0}{N_0} >R \}$. 
The reason is as follows.
Let $L$ be an arbitrary integer less than $N_0-1$.
If the observed detection rate of the initial $L$ transmitted pulses
is greater than $R$,
the detection probability of the $L+1$-th pulse
is less than $\bar{q}^{(0)}$
in the case of sampling without replacement.
Thus, we obtain
\begin{eqnarray}
\Pr_{\bm{M},\bm{J}|\bar{q}^{(0)}, \bar{\bm{a}},\bar{\bm{b}},\vec{\bm{N}}} 
\{ M_0 < X_{\per}^-(N_0,\bar{q}^{(0)}, \epsilon) \} \le \epsilon \Label{10-13-1}\\
\Pr_{\bm{M},\bm{J}|\bar{q}^{(0)}, \bar{\bm{a}},\bar{\bm{b}},\vec{\bm{N}}} 
\{ M_0 > X_{\per}^+(N_0,\bar{q}^{(0)}, \epsilon) \} \le \epsilon \Label{10-13-2},
\end{eqnarray}
where
$\Pr_{\bm{M},\bm{J}|\bar{q}^{(0)}, \bar{\bm{a}},\bar{\bm{b}},\vec{\bm{N}}} $
is the distribution of the random variables $\bm{M},\bm{J}$
when $\bar{q}^{(0)}, \bar{\bm{a}},\bar{\bm{b}}$ and $\vec{\bm{N}}$ are fixed.
That is,
we obtain 
\begin{eqnarray}
\Pr_{\bm{M},\bm{J}|\bar{q}^{(0)}, \bar{\bm{a}},\bar{\bm{b}},\vec{\bm{N}}} 
\{ \bar{M}_0 < X_{\est}^-(N_0,M_0, \epsilon) \} \le \epsilon \Label{10-13-10}\\
\Pr_{\bm{M},\bm{J}|\bar{q}^{(0)}, \bar{\bm{a}},\bar{\bm{b}},\vec{\bm{N}}} 
\{ \bar{M}_0 > X_{\est}^+(N_0,M_0, \epsilon) \} \le \epsilon ,
\end{eqnarray}
where $\bar{M}_0$ is the expectation of $M_0$, which equals $\bar{q}^{(0)} N_0$.

\begin{rem}
Here, we should remark that
the above analysis does not imply that
the non-replacement case can be reduced to the replacement case perfectly.
Let $M_1^{(0)}$ be the number of detected pulses among $N_1^{(0)}$ transmitted vacuum pulses
and
$\check{M}_1^{(0)}$ be the random variable
subject to the binary distribution
with the same draws and the same successful probability 
as $M_1^{(0)}$ by sampling with replacement.
Since 
$\check{M}_0$ and $\check{M}_1^{(0)}$ are
independent of each other due to sampling with replacement,
we have
\begin{eqnarray}
& \Pr \{
\check{M}_1^{(0)} > N_1^{(0)}\frac{\check{M}_0}{N_0}+ a
\}
\le
\sum_{k=0}^{N_1^{(0)}}
\Pr 
(\{ \check{M}_1^{(0)} \ge k \} \cap \{ N_1^{(0)}\frac{\check{M}_0}{N_0} \le  a-k \})
\nonumber
\\
= &
\sum_{k=0}^{N_1^{(0)}}
\Pr \{ \check{M}_1^{(0)} \ge k \} \cdot \Pr \{ N_1^{(0)}\frac{\check{M}_0}{N_0} \le  a-k \} .
\Label{3-2-1}
\end{eqnarray}
However, 
since ${M}_0$ and ${M}_1^{(0)}$ have no replacement,
${M}_0$ and ${M}_1^{(0)}$ are not independent of each other.
Hence, the relation (\ref{3-2-1}) does not hold for ${M}_0$ and ${M}_1^{(0)}$.
That is,
the probability $\Pr \{
{M}_1^{(0)} > N_1^{(0)}\frac{{M}_0}{N_0}+ a
\}$ cannot be bounded by RHS of (\ref{3-2-1}).
Instead of RHS of (\ref{3-2-1}), we have a weaker evaluation,
\begin{eqnarray}
\fl & \Pr \{
{M}_1^{(0)} > N_1^{(0)}\frac{{M}_0}{N_0}+ a
\}
\le
\Pr
\{{M}_1^{(0)} \ge N_1^{(0)} \bar{q}^{(0)} + \frac{a}{2}\}
+
\Pr
\{N_1^{(0)} \bar{q}^{(0)} - \frac{a}{2} \ge N_1^{(0)}\frac{{M}_0}{N_0} \} 
\nonumber
\\
\fl \le &
\Pr
\{\check{M}_1^{(0)} \ge N_1^{(0)} \bar{q}^{(0)} + \frac{a}{2}\}
+
\Pr
\{N_1^{(0)} \bar{q}^{(0)} - \frac{a}{2} \ge N_1^{(0)}\frac{\check{M}_0}{N_0} \} .
\Label{3-2-2}
\end{eqnarray}
because
$
\{{M}_1^{(0)} > N_1^{(0)}\frac{{M}_0}{N_0}+ a\}
\subset 
\{{M}_1^{(0)} \ge N_1^{(0)} \bar{q}^{(0)} + \frac{a}{2}\}
\cup
\{N_1^{(0)} \bar{q}^{(0)} - \frac{a}{2} \ge N_1^{(0)}\frac{{M}_0}{N_0} \}$.
That is,
the above discussion cannot yield a better bound (RHS of (\ref{3-2-1}))
but can yield a weaker bound (RHS of (\ref{3-2-2})).
\end{rem}

Next, we consider a more complicated case, i.e.,
focus on $N_1$ $\mu_1$-intensity pulses,
which contain
$N^{(0)}_1$ vacuum pulses,
$N^{(1)}_1$ pulses with the single-photon state,
and
$N^{(2)}_1$ pulses with the state $\rho_2$.
Then, the expectation $\bar{M}_1$ of $M_1$ is 
$\frac{\bar{q}^{(0)}}{2} N^{(0)}_1+ \bar{a}^{(1)}_{\times} N^{(1)}_1+ \bar{a}^{(2)}_{\times} N^{(2)}_1$.
Assume that $N^{(2)}_1=0$ and $\frac{\bar{q}^{(0)}}{2} <\bar{a}^{(1)}_{\times}$.
For a real number $R > \bar{M}_1/N_1$,
the probability $\Pr \{ \frac{M_1}{N_1} >R \}$ 
is smaller than 
the probability $\Pr \{ \frac{\check{M}_1}{N_1} >R \}$. 
This fact can be shown as follows.
Assume that the detection rate among the initial $L$ pulses
is greater than $R$.
Under the above condition,
the rate of the single-photon pulses among initial $L$ pulses
is higher than
$\frac{N^{(1)}_1}{N_1}$ with probability more than $1/2$.
Conversely,
under the above condition,
the rate of the single-photon pulses among remaining $N_1-L$ pulses is smaller than
$\frac{N^{(1)}_1}{N_1}$ with probability more than $1/2$.
Hence, 
the detecting probability of the $L+1$-th pulse
is less than $\bar{M}_1/N_1$.
Therefore,
\begin{eqnarray}
\Pr_{\bm{M},\bm{J}|\bar{q}^{(0)}, \bar{\bm{a}},\bar{\bm{b}},\vec{\bm{N}}} 
\{ M_1 < X_{\per}^-(N_1,\frac{\bar{M}_1}{N_1}, \epsilon) \} \le \epsilon \\
\Pr_{\bm{M},\bm{J}|\bar{q}^{(0)}, \bar{\bm{a}},\bar{\bm{b}},\vec{\bm{N}}} 
\{ M_1 > X_{\per}^+(N_1,\frac{\bar{M}_1}{N_1}, \epsilon) \} \le \epsilon ,
\end{eqnarray}
which implies that
\begin{eqnarray}
\Pr_{\bm{M},\bm{J}|\bar{q}^{(0)}, \bar{\bm{a}},\bar{\bm{b}},\vec{\bm{N}}} 
\{ \bar{M}_1 < X_{\est}^-(N_1,M_1, \epsilon) \} \le \epsilon \Label{10-13-11} \\
\Pr_{\bm{M},\bm{J}|\bar{q}^{(0)}, \bar{\bm{a}},\bar{\bm{b}},\vec{\bm{N}}} 
\{ \bar{M}_0 > X_{\est}^+(N_1,M_1, \epsilon) \} \le \epsilon .
\end{eqnarray}
Repeating a similar discussion,
we can show the above relations without the condition $N^{(2)}_1=0$.

Similarly, 
the expectations 
$\overline{M}_2$ and $\overline{M}_3$
of $M_2$ and $M_3$
are calculated to
$\frac{\bar{q}^{(0)}}{2} N^{(0)}_2+ \bar{a}^{(1)}_{\times} N^{(1)}_2+ \bar{a}^{(2)}_{\times} N^{(2)}_2+ \bar{a}^{(3)}_{\times} N^{(3)}_2$
and
$\frac{\bar{q}^{(0)}}{2} N^{(0)}_1+ \bar{b}^{(1)}_{\times} N^{(1)}_1+ \bar{b}^{(2)}_{\times} N^{(2)}_1$,
and are denoted by $\bar{M}_2$ and $\bar{M}_3$,
respectively.
Then, we obtain 
\begin{eqnarray}
\Pr_{\bm{M},\bm{J}|\bar{q}^{(0)}, \bar{\bm{a}},\bar{\bm{b}},\vec{\bm{N}}} 
\{ \bar{M}_2 < X_{\est}^-(N_2,M_2, \epsilon) \} \le \epsilon \\
\Pr_{\bm{M},\bm{J}|\bar{q}^{(0)}, \bar{\bm{a}},\bar{\bm{b}},\vec{\bm{N}}} 
\{ \bar{M}_2 > X_{\est}^+(N_2,M_2, \epsilon) \} \le \epsilon \Label{10-13-12}\\
\Pr_{\bm{M},\bm{J}|\bar{q}^{(0)}, \bar{\bm{a}},\bar{\bm{b}},\vec{\bm{N}}} 
\{ \bar{M}_3 < X_{\est}^-(N_1,M_3, \epsilon) \} \le \epsilon \\
\Pr_{\bm{M},\bm{J}|\bar{q}^{(0)}, \bar{\bm{a}},\bar{\bm{b}},\vec{\bm{N}}} 
\{ \bar{M}_3 > X_{\est}^+(N_1,M_3, \epsilon) \} \le \epsilon \Label{10-13-13}.
\end{eqnarray}
For a detail discussion, see \cite{Hari}.

\section{Outlines of security proof}\Label{s4}
\subsection{Requirement for a function estimating the leaked information $\phi$ from $\bm{M}$ and $\vec{\bm{N}}$}
In this section, we give the outline of the security proof of the formula of the sacrifice bit-length given 
in Subsection \ref{s-non-imp}
while their partitions will be given in latter sections.
First, we fix the partition $\vec{\bm{N}}$ of transmitted pulses.
The aim of this subsection is to give our requirement for a function estimating 
the leaked information $\phi$ from the measured values $\bm{M}=(M_s, M_0, M_1, M_2, M_3)$ and the partition $\vec{\bm{N}}$.
For this purpose,
we introduce three conditions for the partition $\vec{\bm{N}}$.
\begin{condition}\Label{c1}
\begin{eqnarray}
N^{(1)}_1 N^{(2)}_2- N^{(1)}_2 N^{(2)}_1 >0.
\end{eqnarray}
\end{condition}
\begin{condition}\Label{c2}
\begin{eqnarray}
N^{(2)}_1 N^{(0)}_2-N^{(0)}_1 N^{(2)}_2 <0.
\end{eqnarray}
\end{condition}

\begin{condition}\Label{c10}
\begin{eqnarray}
-\frac{N_1^{(2)} N_2^{(0)}-N_2^{(2)} N_1^{(0)}}{2(N_1^{(1)} N_2^{(2)}-N_1^{(2)} N_2^{(1)})}-\frac{N_1^{(0)}}{N_1^{(1)}}
<0.
\end{eqnarray}
\end{condition}

If the all values take their expectation,
the left hand side of Condition \ref{c1} is 
$\frac{1}{2}e^{-\mu_1-\mu_2}\mu_1\mu_2(\mu_2-\mu_1)\omega_2 N_1 N_2 $, 
and is positive.
In the same assumption,
the left hand side of Condition \ref{c2} is 
$\frac{1}{2}N_1 N_2 e^{-(\mu_1+\mu_2)} \omega_2(\mu_1^2-\mu_2^2)$,
and is negative.

Condition \ref{c10} is equivalent to
\begin{eqnarray*}
\fl N_2^{(2)}-\frac{N_1^{(2)} N_2^{(0)}}{N_1^{(0)}}
=
\frac{N_2^{(2)} N_1^{(0)}- N_1^{(2)} N_2^{(0)}}{N_1^{(0)}}
<
\frac
{2(N_1^{(1)} N_2^{(2)}-N_1^{(2)} N_2^{(1)})}
{N_1^{(1)}} 
=2 N_2^{(2)}
-2\frac{N_1^{(2)} N_2^{(1)}}{N_1^{(1)}}.
\end{eqnarray*}
The above condition is equivalent to
\begin{eqnarray}
2 \frac{N_1^{(2)} N_2^{(1)}}{N_1^{(1)}}
-
\frac{N_1^{(2)} N_2^{(0)}}{N_1^{(0)}}
<
N_2^{(2)}.
\end{eqnarray}
Then, this condition is converted to 
\begin{eqnarray}
2 \frac{N_2^{(1)}}{N_1^{(1)}}
<
\frac{N_2^{(0)}}{N_1^{(0)}} +\frac{N_2^{(2)}}{N_1^{(2)}}.
\end{eqnarray}
When the all values take their expectation,
the left hand side is 
$2 \frac{\mu_2}{\mu_1}e^{-\mu_2+\mu_1} $,
and
the right hand side is 
$(\frac{\mu_2^2}{\mu_1^2}+1)e^{-\mu_2+\mu_1} $.
Then, the above condition holds.
Hence, these three assumptions are natural.

In the following, in order to give an upper bound of $\phi(\bm{J})$,
for a given real number $\beta>0$, 
we assume that there exists a function $\hat{\phi}_a$
of $\vec{\bm{N}}$ and measured values $\bm{M}=(M_s, M_0, M_1, M_2, M_3)$
satisfying that
\begin{eqnarray}
\Pr_{\bm{M},\bm{J}|\bar{q}^{(0)}, \bar{\bm{a}},\bar{\bm{b}},\vec{\bm{N}}} 
\{ \hat{\phi}_a(\bm{M},\vec{\bm{N}}) < \phi(\bm{J}) \}
\le 8 \cdot 2^{-2\beta-8}
\Label{1-24-6a}.
\end{eqnarray}
Indeed, in Section \ref{s7},
we will give its concrete example $\hat{\phi}_3$.
The relation (\ref{1-24-6a}) implies the relations
\begin{eqnarray*}
& \Pr_{\bm{M},\bm{J}|\vec{\bm{N}}} 
\{ \hat{\phi}_a(\bm{M},\vec{\bm{N}}) < 
\phi(\bm{J}) \} \\
= &
\sum_{\bar{q}^{(0)}, \bar{\bm{a}},\bar{\bm{b}}}
Q_e(\bar{q}^{(0)}, \bar{\bm{a}},\bar{\bm{b}}|\vec{\bm{N}})
\Pr_{\bm{M},\bm{J}|\bar{q}^{(0)}, \bar{\bm{a}},\bar{\bm{b}},\vec{\bm{N}}} 
\{ \hat{\phi}_a(\bm{M},\vec{\bm{N}}) < 
\phi(\bm{J}) \} 
\le 8 \cdot 2^{-2\beta-8},
\end{eqnarray*}
where $Q_e$ is the conditional distribution of 
$\bar{q}^{(0)}, \bar{\bm{a}},\bar{\bm{b}}$ conditioned with the partition$\vec{\bm{N}}$.

\subsection{Requirement for a function estimating the leaked information $\phi$ from $\bm{M}$}
This subsection has three aims.
The first aim is to give a requirement for a function estimating the leaked information $\phi$ from 
the measured values $\bm{M}=(M_s, M_0, M_1, M_2, M_3)$.
The second aim is to give the sacrifice bit-length by using a function satisfying this requirement.
The final aim is to figure out the structure of our formula of the sacrifice bit-length,
which gives the detail of Figs. \ref{outline-zu2} and \ref{outline-zu}.

Now, we remember the definition of the set $\Omega_1$ and Condition \ref{c6}.
Condition \ref{c6} is equivalent to the condition that
any element $\vec{\bm{N}}\in\Omega_1$ satisfies Conditions \ref{c1}, \ref{c2}, and \ref{c10}.
In the following, we assume that there exists a real-valued 
function $\hat{\phi}_b$ of the measured value $\bm{M}$
that satisfies that
\begin{eqnarray}
\{
\hat{\phi}_b(\bm{M}) < \hat{\phi}_a(\bm{M},\vec{\bm{N}}) \}
\subset \Omega_1^c
\Label{1-25-1}
\end{eqnarray}
under Condition \ref{c6}.
In Section \ref{s9}, 
we will give a concrete function 
$\hat{\phi}_4$ satisfying the above condition.
Note that the value $\hat{\phi}_b(\bm{M})$ does not depend on the partition $\vec{\bm{N}}$.
Then,
we can show the following theorem.
\begin{theorem}\Label{th1-15}
When Condition \ref{c6} holds
and the function $\hat{\phi}_b$ satisfies (\ref{1-25-1}),
we obtain
\begin{eqnarray}
\Pr_{\bm{J},\bm{M}
}
\{
\hat{\phi}_b(\bm{M})
< \phi(\bm{J})
\}
\le 3 \cdot 2^{-2\beta-5}.
\end{eqnarray}
\end{theorem}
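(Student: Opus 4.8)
The plan is to combine the two hypotheses with a case analysis on whether the event $\Omega_1$ (for the realized partition $\vec{\bm N}$) occurs, and to control the probability of the bad complementary event by the percent-point bounds from Step (2). Concretely, I would write
\begin{eqnarray*}
\Pr_{\bm J,\bm M}\{\hat\phi_b(\bm M)<\phi(\bm J)\}
\le \Pr\{\vec{\bm N}\notin\Omega_1\}
+\Pr\bigl(\{\hat\phi_b(\bm M)<\phi(\bm J)\}\cap\{\vec{\bm N}\in\Omega_1\}\bigr),
\end{eqnarray*}
and then bound the two terms separately.

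For the first term, I would use that $\Omega_1$ is defined by the five two-sided interval conditions (\ref{11-12-1})--(\ref{11-12-5}), where $N^{(0)}_1,N^{(1)}_1$ are functions of the Poisson/multinomial partition of the $N_1$ $\mu_1$-pulses and $N^{(0)}_2,N^{(1)}_2,N^{(2)}_2$ of the $N_2$ $\mu_2$-pulses, each component marginally binomial with the success probabilities $e^{-\mu_i},\mu_i e^{-\mu_i},\omega_2\mu_2^2 e^{-\mu_2}$ appearing in the expansions (\ref{1-23-1}), (\ref{4-9-4}). By the definition of $X_{\per}^{\pm}(N,p,\alpha)$, each of the ten one-sided tail events has probability at most $2^{-2\beta-8}$, so a union bound gives $\Pr\{\vec{\bm N}\notin\Omega_1\}\le 10\cdot 2^{-2\beta-8}$. (I should double-check the bookkeeping here: some of these components are components of a single multinomial and the relevant "remainder'' bound $N^{(2)}_1=N_1-N^{(0)}_1-N^{(1)}_1$ may need a slightly different counting, but the total will be at most $10\cdot 2^{-2\beta-8}$, which is well inside the target.)

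For the second term, I would invoke hypothesis (\ref{1-25-1}): under Condition \ref{c6}, the event $\{\hat\phi_b(\bm M)<\hat\phi_a(\bm M,\vec{\bm N})\}$ is contained in $\Omega_1^c$, equivalently, on the event $\{\vec{\bm N}\in\Omega_1\}$ we have $\hat\phi_b(\bm M)\ge\hat\phi_a(\bm M,\vec{\bm N})$. Hence
\begin{eqnarray*}
\{\hat\phi_b(\bm M)<\phi(\bm J)\}\cap\{\vec{\bm N}\in\Omega_1\}
\subset\{\hat\phi_a(\bm M,\vec{\bm N})<\phi(\bm J)\},
\end{eqnarray*}
and the right-hand event has probability at most $8\cdot 2^{-2\beta-8}$ by the averaged form of (\ref{1-24-6a}) displayed immediately after it (summing the conditional bound over $\bar q^{(0)},\bar{\bm a},\bar{\bm b}$ against $Q_e(\cdot|\vec{\bm N})$, then over $\vec{\bm N}$ against its distribution). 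Adding the two contributions gives $\le 18\cdot 2^{-2\beta-8}=9\cdot 2^{-2\beta-7}\le 3\cdot 2^{-2\beta-5}$, since $9\le 12$.

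**The main obstacle** I anticipate is getting the constant in the union bound for $\Pr\{\vec{\bm N}\notin\Omega_1\}$ exactly right, because the five partition components are not independent binomials but two multinomial vectors, and because the "remainder'' component $N^{(2)}_1$ is handled only implicitly through the conditions on $N^{(0)}_1,N^{(1)}_1$; one has to be careful that the one-sided percent-point guarantee really applies to each of the ten inequalities (some of which bound a sum of multinomial cells rather than a single cell). A secondary subtlety is that Condition \ref{c6} is a deterministic condition on $\beta,\mu_1,\mu_2,N_1,N_2$ and must be invoked precisely where (\ref{1-25-1}) requires it; once that is in place the containment argument is purely set-theoretic and the probabilistic estimates reduce to the already-established percent-point and interval-estimation bounds.
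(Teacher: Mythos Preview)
Your proposal is correct and follows essentially the same route as the paper's proof: decompose via $\Omega_1$, bound $\Pr(\Omega_1^c)\le 10\cdot 2^{-2\beta-8}$ by a union bound over the ten one-sided percent-point events, and use (\ref{1-25-1}) together with (\ref{1-24-6a}) to bound the remaining piece by $8\cdot 2^{-2\beta-8}$, giving $18\cdot 2^{-2\beta-8}\le 24\cdot 2^{-2\beta-8}=3\cdot 2^{-2\beta-5}$. Your worry about the multinomial bookkeeping is unfounded: $\Omega_1$ constrains only the five components $N_1^{(0)},N_1^{(1)},N_2^{(0)},N_2^{(1)},N_2^{(2)}$, each of which is marginally binomial regardless of the joint multinomial structure, so the ten one-sided percent-point bounds apply directly and $N_1^{(2)}$ never enters the count.
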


\begin{proof}
The definition of $\Omega_1$ yields that
\begin{eqnarray*}
\Pr_{\vec{\bm{N}}} \Omega_1^c 
\le 2 \cdot 5 \cdot 2^{-2\beta-8} .
\end{eqnarray*}
Since
\begin{eqnarray*}
\fl & \{
\hat{\phi}_b(\bm{M})
< \phi(\bm{J})
\}
\subset
\{\hat{\phi}_a(\bm{M},\vec{\bm{N}}) < \phi(\bm{J}) \}
\cup
\{\hat{\phi}_b(\bm{M}) < \hat{\phi}_a(\bm{M},\vec{\bm{N}}) \} \\
\fl \subset &
\{\hat{\phi}_a(\bm{M},\vec{\bm{N}}) < \phi(\bm{J}) \}
\cup \Omega_1^c 
\subset 
(\{\hat{\phi}_a(\bm{M},\vec{\bm{N}}) < \phi(\bm{J}) \}
\cap \Omega_1)
\cup \Omega_1^c,
\end{eqnarray*}
we have
\begin{eqnarray*}
\fl & \Pr_{\bm{J},\bm{M}
}
\{
\hat{\phi}_b(\bm{M})
< \phi(\bm{J})
\} \le
\Pr_{\bm{M},\bm{J},\vec{\bm{N}}} 
(\{\hat{\phi}_a(\bm{M},\vec{\bm{N}}) < \phi(\bm{J}) \}
\cap \Omega_1)
+\Pr_{\bm{M},\bm{J},\vec{\bm{N}}} 
 \Omega_1^c \\
\fl 
\le & 8 \cdot 2^{-2\beta-8} + 10 \cdot 2^{-2\beta-8} 
\le  24 \cdot 2^{-2\beta-8}
=3 \cdot 2^{-2\beta-5},
\end{eqnarray*}
which implies the desired argument.
\end{proof}

Therefore, 
when 
$\rho_{A,E}$ is the final state 
with the sacrifice bit-length
\begin{eqnarray}
S(\bm{M}):= \hat{\phi}_b(\bm{M})+2\beta+ 5,
\Label{eq-1-15-1}
\end{eqnarray}
(\ref{9-17-5}) implies that
\begin{eqnarray*}
P_{ph} 
 &\le 
& 2^{-2\beta-5}
\Pr\{ \hat{\phi}_b(\bm{M})+2\beta+ 5 < \phi(\bm{J}) +2\beta+ 5\}^c \\
 &&+
\Pr
\{ \hat{\phi}_b(\bm{M})+2\beta+ 5 < 
\phi(\bm{J}) +2\beta+ 5\} \\
 &=& 2^{-2\beta-5}+3 \cdot 2^{-2\beta-5}
= 2^{-2\beta-3}.
\end{eqnarray*}
Thus, the relation (\ref{9-17-6}) implies
\begin{eqnarray}
\| \rho_{A,E}-\rho_{\ideal} \|_1
\le 2\sqrt{2} 2^{(-2\beta-3)/2} =2^{-\beta}\Label{35}.
\end{eqnarray}

In summary, since Theorem \ref{th1-15} requires Condition \ref{c6},
we need to choose the parameters
$\mu_1$, $\mu_2$, $N_0$, $N_1$, and $N_2$
so that Condition \ref{c6} holds.
That is, we need to choose sufficiently large
integers $N_0$, $N_1$, and $N_2$.
Otherwise, 
we cannot apply Theorem \ref{th1-15}, i.e., we cannot guarantee the security.

The latter sections give a formula of the sacrifice bit-length $S$ 
as a function of $\beta,\mu_s,\mu_1,\mu_2,N_s,N_0,N_1,N_2$, and $\bm{M}$
by giving a concrete example of $\hat{\phi}_b$.
In order to apply interval estimation and percent point,
we have to decide which upper or lower bound to be used in the respective steps.
These decisions will be done based on derivatives for respective variables.
Hence, the calculations of these derivatives are the main issues in the latter sections.

\section{Derivation of upper bound of leaked information}\Label{s7}
\subsection{Case when the channel parameters are given}\Label{s7-1}
The purpose of this section is 
to derive an upper bound $\hat{\phi}_3(\bm{M}, \vec{\bm{N}})$ of the leaked information $\phi$
as an example of $\hat{\phi}_a$.
For this purpose, we describe the leaked information $\phi$
as a function of $J^{(0)}$, $J^{(1)}$, and $r^{(1)}:=J^{(1)}_e/J^{(1)}$:
\begin{eqnarray}
\phi=M_s-J^{(0)}- J^{(1)}(1- h(\min\{r^{(1)},1/2\}) ).
\end{eqnarray}
That is, $\phi$ is monotonically decreasing with respect to $J^{(0)}$ and $J^{(1)}$, and monotonically increasing with respect to $r^{(1)}$.
Due to the same reason as (\ref{10-13-1}),
the channel parameters
$\bar{q}^{(0)}$, $\bar{a}^{(1)}_{\times}$,
and $\bar{b}^{(1)}_{\times}$ satisfy
\begin{eqnarray}
\Pr_{\bm{J}|\bar{q}^{(0)}, \bar{\bm{a}},\bar{\bm{b}},\vec{\bm{N}}} 
\{ J^{(0)} \le X_{\per}^-(N_s,e^{-\mu_s} \bar{q}^{(0)},2^{-2\beta-8}) \} \le 2^{-2\beta-8} \Label{10-13-5}\\
\Pr_{\bm{J}|\bar{q}^{(0)}, \bar{\bm{a}},\bar{\bm{b}},\vec{\bm{N}}} 
\{ J^{(1)} \le X_{\per}^-(N_s,e^{-\mu_s} \mu_s (\bar{a}^{(1)}_{\times}+\bar{b}^{(1)}_{\times}),2^{-2\beta-8}) \} \le 2^{-2\beta-8} \Label{10-13-3}.
\end{eqnarray}
Using this fact, we give estimates of $J^{(0)}$ and $J^{(1)}$ as
\begin{eqnarray}
\hat{J}^{(0)}(\bar{q}^{(0)},N_s,\mu_s) 
&:= X_{\per}^-(N_s,e^{-\mu_s} \bar{q}^{(0)},2^{-2\beta-8}) \Label{1-24-1}\\
\hat{J}^{(1)}(\bar{a}^{(1)}_{\times},\bar{b}^{(1)}_{\times},N_s,\mu_s) 
&:= X_{\per}^-(N_s,e^{-\mu_s} \mu_s (\bar{a}^{(1)}_{\times}+\bar{b}^{(1)}_{\times}),2^{-2\beta-8}) \Label{1-24-2},
\end{eqnarray}
which provide Step (4) in Figs. \ref{outline-zu2} and \ref{outline-zu}.
Since $p_{\per}^+(j^{(1)},\frac{\bar{b}^{(1)}_{\times}}{\bar{a}^{(1)}_{\times}+\bar{b}^{(1)}_{\times}},2^{-2\beta-8})
\le
p_{\per}^+(\tilde{j}^{(1)},\frac{\bar{b}^{(1)}_{\times}}{\bar{a}^{(1)}_{\times}+\bar{b}^{(1)}_{\times}},2^{-2\beta-8})$
holds for $j^{(1)} \ge \tilde{j}^{(1)}$,
(\ref{10-13-3}) implies that
\begin{eqnarray}
\fl & \Pr_{\bm{J}|\bar{q}^{(0)}, \bar{\bm{a}},\bar{\bm{b}},\vec{\bm{N}}} 
\Bigl\{  
p_{\per}^+\Bigl( J^{(1)},\frac{\bar{b}^{(1)}_{\times}}{\bar{a}^{(1)}_{\times}+\bar{b}^{(1)}_{\times}},2^{-2\beta-8} \Bigr) 
\ge p_{\per}^+\Bigl(\hat{J}^{(1)}(\bar{a}^{(1)}_{\times},\bar{b}^{(1)}_{\times},N_s,\mu_s ) ,
\frac{\bar{b}^{(1)}_{\times}}{\bar{a}^{(1)}_{\times}+\bar{b}^{(1)}_{\times}},2^{-2\beta-8}\Bigr) \Bigr\} \nonumber \\
\fl =& \Pr_{\bm{J}|\bar{q}^{(0)}, \bar{\bm{a}},\bar{\bm{b}},\vec{\bm{N}}} 
\{ J^{(1)} \le X_{\per}^-(N_s,e^{-\mu_s} \mu_s (\bar{a}^{(1)}_{\times}+\bar{b}^{(1)}_{\times}),2^{-2\beta-8}) \} 
\le 2^{-2\beta-8} \nonumber .
\end{eqnarray}
Using the relation
\begin{eqnarray}
\Pr_{\bm{J}|\bar{q}^{(0)}, \bar{\bm{a}},\bar{\bm{b}},\vec{\bm{N}}} 
\Bigl\{ \frac{J^{(1)}_e}{J^{(1)}} \ge 
p_{\per}^+\Bigl(J^{(1)},\frac{\bar{b}^{(1)}_{\times}}{\bar{a}^{(1)}_{\times}+\bar{b}^{(1)}_{\times}},2^{-2\beta-8}
\Bigr) \Bigr\} \le 2^{-2\beta-8} 
\Label{10-13-6},
\end{eqnarray}
we obtain
\begin{eqnarray}
\fl &
\Bigl\{ \frac{J^{(1)}_e}{J^{(1)}} \ge 
p_{\per}^+\Bigl(\hat{J}^{(1)}(\bar{a}^{(1)}_{\times},\bar{b}^{(1)}_{\times},N_s,\mu_s),
\frac{\bar{b}^{(1)}_{\times}}{\bar{a}^{(1)}_{\times}+\bar{b}^{(1)}_{\times}},2^{-2\beta-8}\Bigr) \Bigr\} \nonumber \\
\fl \subset &
\Biggl(
\Bigl\{  p_{\per}^+\Bigl(J^{(1)},\frac{\bar{b}^{(1)}_{\times}}{\bar{a}^{(1)}_{\times}+\bar{b}^{(1)}_{\times}},2^{-2\beta-8}\Bigr) 
\ge p_{\per}^+\Bigl(\hat{J}^{(1)}(\bar{a}^{(1)}_{\times},\bar{b}^{(1)}_{\times},N_s,\mu_s) ,
\frac{\bar{b}^{(1)}_{\times}}{\bar{a}^{(1)}_{\times}+\bar{b}^{(1)}_{\times}},2^{-2\beta-8}\Bigr) \Bigr\} \nonumber \\
\fl & \quad
\cup
\Bigl\{ \frac{J^{(1)}_e}{J^{(1)}} \ge 
p_{\per}^+\Bigl(J^{(1)},\frac{\bar{b}^{(1)}_{\times}}{\bar{a}^{(1)}_{\times}+\bar{b}^{(1)}_{\times}},2^{-2\beta-8}\Bigr) 
\Bigr\} 
\Biggr)
\nonumber \\
\fl \subset &
\{ J^{(1)} \le X_{\per}^-(N_s,e^{-\mu_s}\mu_s (\bar{a}^{(1)}_{\times}+\bar{b}^{(1)}_{\times}),2^{-2\beta-8}) \} 
\cup
\Bigl\{ \frac{J^{(1)}_e}{J^{(1)}} \ge 
p_{\per}^+\Bigl(J^{(1)},\frac{\bar{b}^{(1)}_{\times}}{\bar{a}^{(1)}_{\times}+\bar{b}^{(1)}_{\times}},2^{-2\beta-8}\Bigr) 
\Bigr\}.\nonumber \\
\fl &
\Label{10-13-4}
\end{eqnarray}
Therefore, we give an estimate of $r^{(1)}:=\frac{J^{(1)}_e}{J^{(1)}}$ by
\begin{eqnarray}
\fl \hat{r}^{(1)}_{\times}(\bar{a}^{(1)}_{\times},\bar{b}^{(1)}_{\times},N_s,\mu_s)  
&:= p^+_{\per}\Bigl(\hat{J}^{(1)}(\bar{a}^{(1)}_{\times},\bar{b}^{(1)}_{\times},N_s,\mu_s) , 
\frac{\bar{b}^{(1)}_{\times}}{\bar{a}^{(1)}_{\times}+\bar{b}^{(1)}_{\times}}
, 2^{-2\beta-8}\Bigr).
\Label{1-24-3}
\end{eqnarray}

Using the above relations,
we give an estimate of $\phi$ by
\begin{eqnarray}
\fl &
\hat{\phi}_1(\bar{q}^{(0)},\bar{a}^{(1)}_{\times},\bar{b}^{(1)}_{\times},N_s,\mu_s) \nonumber \\
\fl
:= & M_s-\hat{J}^{(0)}(\bar{q}^{(0)},N_s,\mu_s)  
- \hat{J}^{(1)}(\bar{a}^{(1)}_{\times},\bar{b}^{(1)}_{\times},N_s,\mu_s)  
(1- h(\min \{\hat{r}^{(1)}_{\times}(\bar{a}^{(1)}_{\times},\bar{b}^{(1)}_{\times},N_s,\mu_s)  ,1/2\}) ) .\nonumber \\
\fl & \Label{9-24-3}
\end{eqnarray}
Due to (\ref{10-13-5}), (\ref{10-13-3}), (\ref{10-13-4}), and (\ref{10-13-6}),
the estimate $\hat{\phi}_1(\bar{q}^{(0)},\bar{a}^{(1)}_{\times},\bar{b}^{(1)}_{\times},N_s,\mu_s) $ 
satisfies
\begin{eqnarray}
\fl & 
\{\phi (\bm{J}) >
\hat{\phi}_1(\bar{q}^{(0)},\bar{a}^{(1)}_{\times},\bar{b}^{(1)}_{\times},N_s,\mu_s) 
\} \nonumber \\
\fl \subset & 
\Biggl ( \{ J^{(0)} \le X_{\per}^-(N_s,e^{-\mu_s} \bar{q}^{(0)},2^{-2\beta-8}) \} 
\cup
\{ J^{(1)} \le X_{\per}^-(N_s,e^{-\mu_s} \mu_s (\bar{a}^{(1)}_{\times}+\bar{b}^{(1)}_{\times}),2^{-2\beta-8}) \} \nonumber \\
\fl &\cup 
\Bigl\{ \frac{J^{(1)}_e}{J^{(1)}} \ge 
p_{\per}^+\Bigl(\hat{J}^{(1)}(\bar{a}^{(1)}_{\times},\bar{b}^{(1)}_{\times},N_s,\mu_s),
\frac{\bar{b}^{(1)}_{\times}}{\bar{a}^{(1)}_{\times}+\bar{b}^{(1)}_{\times}}
,2^{-2\beta-8}\Bigr) \Bigr\} \Biggr) \nonumber \\
\fl \subset & 
\Biggl(
\{ J^{(0)} \le X_{\per}^-(N_s,e^{-\mu_s} \bar{q}^{(0)},2^{-2\beta-8}) \} 
\cup
\{ J^{(1)} \le X_{\per}^-(N_s,e^{-\mu_s}\mu_s( \bar{a}^{(1)}_{\times}+\bar{b}^{(1)}_{\times}),2^{-2\beta-8}) \} \nonumber \\
\fl & \quad \cup 
\Bigl\{ \frac{J^{(1)}_e}{J^{(1)}} \ge 
p_{\per}^+\Bigl(J^{(1)},\frac{\bar{b}^{(1)}_{\times}}{\bar{a}^{(1)}_{\times}+\bar{b}^{(1)}_{\times}}
,2^{-2\beta-8}\Bigr) \Bigr\} 
\Biggr)
\Label{9-24-1-b}.
\end{eqnarray}
Hence, we obtain
\begin{eqnarray}
\Pr 
\{\phi (\bm{J}) >
\hat{\phi}_1(\bar{q}^{(0)},\bar{a}^{(1)}_{\times},\bar{b}^{(1)}_{\times},N_s,\mu_s)
\} 
\le 3\cdot 2^{-2\beta-8} \Label{9-24-1}.
\end{eqnarray}

\subsection{Estimation of channel parameters $\bar{q}^{(0)}$, $\bar{a}^{(1)}_{\times}$, and $\bar{r}^{(1)}_{\times}$}\Label{s5}
Next, in order to treat an upper bound of leaked information $\phi$,
we will give estimates of channel parameters
$\bar{q}^{(0)},\bar{a}^{(1)}_{\times}$, and $\bar{r}^{(1)}_{\times}$
based on the measured values $\bm{M}$ 
and the partition $\vec{\bm{N}}$ of pulses.
For this estimation, we employ the one-sided interval estimation.
We have to decide which 
the lower one-sided interval estimator or
the upper one-sided interval estimator is used for the respective channel parameters.
These decisions will be done by 
the signs of 
the partial derivatives of 
$\hat{\phi}_1(\bar{q}^{(0)},\bar{a}^{(1)}_{\times},\bar{b}^{(1)}_{\times},N_s,\mu_s)$
with respect to 
$\bar{q}^{(0)},\bar{a}^{(1)}_{\times}$, and $\bar{b}^{(1)}_{\times}$, respectively.

While the partial derivatives of 
$\hat{J}^{(0)}(\bar{q}^{(0)},N_s,\mu_s)$,
$\hat{J}^{(1)}(\bar{a}^{(1)}_{\times},\bar{b}^{(1)}_{\times},N_s,\mu_s)$,
and
$\hat{r}^{(1)}_{\times}(\bar{a}^{(1)}_{\times},\bar{b}^{(1)}_{\times},N_s,\mu_s) $ 
are needed, their calculations are not easy.
When $N_s$ is sufficiently large,
these values take the almost same values as
$\bar{q}^{(0)} e^{-\mu_s} N_s$,
$(\bar{a}^{(1)}_{\times}+\bar{b}^{(1)}_{\times}) e^{-\mu_s} \mu_s N_s$,
and 
$\frac{\bar{b}^{(1)}_{\times}}{\bar{a}^{(1)}_{\times}+\bar{b}^{(1)}_{\times}}$,
and 
the variations due to the fluctuations of
$\bar{q}^{(0)},\bar{a}^{(1)}_{\times}$, and $\bar{b}^{(1)}_{\times}$ are negligible.

Hence, we can regard the derivatives of
$\hat{J}^{(0)}(\bar{q}^{(0)},N_s,\mu_s)$,
$\hat{J}^{(1)}(\bar{a}^{(1)}_{\times},\bar{b}^{(1)}_{\times},N_s,\mu_s) $,
and 
$\hat{r}^{(1)}_{\times}(\bar{a}^{(1)}_{\times},\bar{b}^{(1)}_{\times},N_s,\mu_s) $
as 
the same as those of
$\bar{q}^{(0)} e^{-\mu_s} N_s $,
$(\bar{a}^{(1)}_{\times}+\bar{b}^{(1)}_{\times}) e^{-\mu_s} \mu_s N_s$,
and 
$\frac{\bar{b}^{(1)}_{\times}}{\bar{a}^{(1)}_{\times}+\bar{b}^{(1)}_{\times}}$.
Thus, we obtain
\begin{eqnarray*}
\fl
\frac{\partial \hat{J}^{(0)}(\bar{q}^{(0)},N_s,\mu_s)}{\partial \bar{q}^{(0)}}
&=e^{-\mu_s} N_s , &\\
\fl
\frac{\partial \hat{J}^{(1)}(\bar{a}^{(1)}_{\times},\bar{b}^{(1)}_{\times},N_s,\mu_s)}{\partial \bar{a}^{(1)}_{\times}}
&=e^{-\mu_s} \mu_s N_s  ,\quad
&\frac{\partial \hat{J}^{(1)}(\bar{a}^{(1)}_{\times},\bar{b}^{(1)}_{\times},N_s,\mu_s)}{\partial \bar{b}^{(1)}_{\times}}
=e^{-\mu_s} \mu_s N_s  ,\\
\fl
\frac{\partial \hat{r}^{(1)}_{\times}(\bar{a}^{(1)}_{\times},\bar{b}^{(1)}_{\times},N_s,\mu_s) }{\partial \bar{a}^{(1)}_{\times}}
&=-\frac{\bar{b}^{(1)}_{\times}}{(\bar{a}^{(1)}_{\times}+\bar{b}^{(1)}_{\times})^2}
,\quad
&\frac{\partial \hat{r}^{(1)}_{\times}(\bar{a}^{(1)}_{\times},\bar{b}^{(1)}_{\times},N_s,\mu_s) }{\partial \bar{b}^{(1)}_{\times}}
=\frac{\bar{a}^{(1)}_{\times}}{(\bar{a}^{(1)}_{\times}+\bar{b}^{(1)}_{\times})^2}.
\end{eqnarray*}
Under this assumption,
we have
\begin{eqnarray}
\fl \frac{\partial \hat{\phi}_1(\bar{q}^{(0)},\bar{a}^{(1)}_{\times},\bar{b}^{(1)}_{\times},N_s,\mu_s)}{\partial \bar{q}^{(0)}} 
&=  
\frac{\partial \phi}{\partial J^{(0)}}
\frac{\partial \hat{J}^{(0)}}{\partial \bar{q}^{(0)}}
= -e^{-\mu_s} N_s  < 0 \Label{10-25-1}
\\
\fl \frac{\partial \hat{\phi}_1(\bar{q}^{(0)},\bar{a}^{(1)}_{\times},\bar{b}^{(1)}_{\times},N_s,\mu_s)}{\partial \bar{a}^{(1)}_{\times}} 
&= 
\frac{\partial \phi}{\partial J^{(1)}}
\frac{\partial \hat{J}^{(1)}}{\partial \bar{a}^{(1)}_{\times}}
+
\frac{\partial \phi}{\partial \bar{r}^{(1)}_{\times}}
\frac{\partial \hat{r}^{(1)}_{\times}}{\partial \bar{a}^{(1)}_{\times}}\nonumber \\
&=
-e^{-\mu_s} \mu_s N_s  (1+\log \frac{\bar{a}^{(1)}_{\times}}{\bar{a}^{(1)}_{\times}+\bar{b}^{(1)}_{\times}})
< 0 \Label{10-25-2}
\\
\fl \frac{\partial \hat{\phi}_1(\bar{q}^{(0)},\bar{a}^{(1)}_{\times},\bar{b}^{(1)}_{\times},N_s,\mu_s)}{\partial \bar{b}^{(1)}_{\times}} 
&= 
\frac{\partial \phi}{\partial J^{(1)}}
\frac{\partial \hat{J}^{(1)}}{\partial \bar{b}^{(1)}_{\times}}
+
\frac{\partial \phi}{\partial \bar{r}^{(1)}_{\times}}
\frac{\partial \hat{r}^{(1)}_{\times}}{\partial \bar{b}^{(1)}_{\times}} \nonumber \\
&=
-e^{-\mu_s} \mu_s N_s  (1+\log \frac{\bar{b}^{(1)}_{\times}}{\bar{a}^{(1)}_{\times}+\bar{b}^{(1)}_{\times}})
> 0 \Label{10-25-3}
\end{eqnarray}
because 
$\frac{\bar{b}^{(1)}_{\times}}{\bar{a}^{(1)}_{\times}+\bar{b}^{(1)}_{\times}}<\frac{1}{2}$.
Therefore, smaller $\bar{q}^{(0)}$ and $\bar{a}^{(1)}_{\times}$ 
yield larger $\hat{\phi}_1$,
and larger $\bar{b}^{(1)}_{\times}$ yields larger $\hat{\phi}_1$.
That is,
it is needed to estimate 
$\bar{q}^{(0)}$ and $\bar{a}^{(1)}_{\times}$ to be smaller
and $\bar{b}^{(1)}_{\times}$ to be larger.

In the following,
we treat the estimation of $\bar{q}^{(0)}$, $\bar{a}^{(1)}_{\times}$, and $\bar{b}^{(1)}_{\times}$.
Here, we should remark that
we have two kinds of channel parameters.
The first kind of parameters are 
$\bar{q}^{(0)}$, $\bar{a}^{(1)}_{\times}$, and $\bar{b}^{(1)}_{\times}$,
which are directly linked to the eavesdropping
and cannot be measured directly.
The second kind of parameters are
the detection rates $p_{1,\times}$ and $p_{2,\times}$ of pulses 
of the phase basis with intensities $\mu_1$ and $\mu_2$,
which can be measured directly.
Similarly, 
as the second latter kind of parameters, we have the rates $s_{1,\times}$ and $s_{2,\times}$
of the detected pulses having phase error with intensities $\mu_1$ and $\mu_2$
to the transmitted pulses 
with intensities $\mu_1$ and $\mu_2$, respectively.
The rates $s_{1,\times}$ and $s_{2,\times}$ also can be measured directly.
Hence, 
the expectations 
$\bar{M}_0$,
$\bar{M}_1$,
$\bar{M}_2$,
and 
$\bar{M}_3$
of
$M_0$, $M_1$, $M_2$, and $M_3$
are characterized by
$\bar{M}_0=p_0 N_0$,
$\bar{M}_1=(p_{1,\times}-s_{1,\times}) N_1$,
$\bar{M}_2=(p_{2,\times}-s_{2,\times}) N_2$,
and 
$\bar{M}_3=s_{1,\times} N_1$. 
Thus, we can regard 
$\bar{M}_0$,
$\bar{M}_1$,
$\bar{M}_2$,
and $\bar{M}_3$ 
as the second kind of channel parameters.
Since 
$\bar{M}_0$,
$\bar{M}_1$,
$\bar{M}_2$,
and $\bar{M}_3$ 
are easier to treat than 
$p_0$, $p_{1,\times}$, $p_{2,\times}$, $s_{1,\times}$, and $s_{2,\times}$,
we can estimate an upper bound of
the leaked information $\phi$
via the estimation of the channel parameters
$\bar{q}^{(0)}$, $\bar{a}^{(1)}_{\times}$, and $\bar{b}^{(1)}_{\times}$ 
when 
the channel parameters $\bar{\bm{M}}:=(\bar{M}_0$, $\bar{M}_1$, $\bar{M}_2$, $\bar{M}_3)$ 
and the partition $\vec{\bm{N}}$ of pulses
are given.
Then, using the expansion formula (\ref{4-9-4}), we obtain
\begin{eqnarray}
\bar{M}_0= & \bar{q}^{(0)} N_0 \Label{9-17-1}\\
\bar{M}_1= & \frac{\bar{q}^{(0)}}{2} N^{(0)}_1 + \bar{a}^{(1)}_{\times} N^{(1)}_1 + 
\bar{a}^{(2)}_{\times} N^{(2)}_1 \Label{9-17-2} \\
\bar{M}_2= & \frac{\bar{q}^{(0)}}{2} N^{(0)}_2 + \bar{a}^{(1)}_{\times} N^{(1)}_2 + 
\bar{a}^{(2)}_{\times} N^{(2)}_2 + \bar{a}^{(3)}_{\times} N^{(3)}_2,
\Label{9-17-3}
\end{eqnarray}
which imply the matrix equation
\begin{eqnarray}
\fl \left(
\begin{array}{l}
\bar{M}_0 \\
\bar{M}_1 \\
\bar{M}_2 - \bar{a}^{(3)}_{\times} N^{(3)}_2
\end{array}
\right)
=
\left(
\begin{array}{ccc}
N_0   & 0     & 0 \\
N^{(0)}_1/2 & N^{(1)}_1 & N^{(2)}_1\\
N^{(0)}_2/2 & N^{(1)}_2 & N^{(2)}_2 
\end{array}
\right)
\left(
\begin{array}{l}
\bar{q}^{(0)} \\
\bar{a}^{(1)}_{\times} \\
\bar{a}^{(2)}_{\times}
\end{array}
\right).
\end{eqnarray}
Solving the above, we obtain
\begin{eqnarray}
\fl \bar{q}^{(0)}=& \hat{q}^{(0)}(\bar{M}_0):=\frac{\bar{M}_0}{N_0} 
\Label{1-23-4} \\
\fl \bar{a}^{(1)}_{\times}=&
\tilde{a}^{(1)}_{\times}(\bar{\bm{M}},\vec{\bm{N}})+ A_1 \bar{a}^{(3)}_{\times}\\
\fl \tilde{a}^{(1)}_{\times}(\bar{\bm{M}},\vec{\bm{N}})
:=&
\frac{ N^{(2)}_2(\hat{M}_1 - \hat{q}^{(0)} (\bar{M}_0) \hat{N}^{(0)}_1 / 2 )
-\hat{N}^{(2)}_1 (\hat{M}_2 - \hat{q}^{(0)} (\bar{M}_0) \hat{N}^{(0)}_2 / 2 )
}{\hat{N}^{(1)}_1 \hat{N}^{(2)}_2- \hat{N}^{(1)}_2 \hat{N}^{(2)}_1} 
\Label{1-23-4-2}
\\
\fl A_1:=&
\frac{N^{(2)}_1  N^{(3)}_2}{N^{(1)}_1 N^{(2)}_2- N^{(1)}_2 N^{(2)}_1}.
\end{eqnarray}
Since $A_1$ and $\bar{a}^{(3)}_{\times}$ are non-negative,
we obtain a lower bound of $\bar{a}^{(1)}_{\times}$.
\begin{eqnarray}
\bar{a}^{(1)}_{\times}
\ge 
\hat{a}^{(1)}(\bar{\bm{M}},\vec{\bm{N}})
:=
[\tilde{a}^{(1)}(\bar{\bm{M}},\vec{\bm{N}})]_+
.
\end{eqnarray}

Similarly, we have
\begin{eqnarray}
\bar{M}_3= & \bar{q}^{(0)} N^{(0)}_1/2 + \bar{b}^{(1)}_{\times} N^{(1)}_1 
+ \bar{b}^{(2)}_{\times} N^{(2)}_1 \Label{9-17-2-2} .
\end{eqnarray}
Since $\bar{b}^{(2)}_{\times}$ is non-negative,
we obtain an upper bound of $\bar{b}^{(1)}_{\times}$
as
\begin{eqnarray}
\bar{b}^{(1)}_{\times}
\le
\hat{b}^{(1)}_{\times}(\bar{\bm{M}},\vec{\bm{N}}) 
:=
\left[\frac{\bar{M}_3 - \frac{1}{2} \hat{q}^{(0)} (\bar{M}_0) N^{(0)}_1}{N^{(1)}_1}
\right]_+
\Label{10-8-20}.
\end{eqnarray}
That is, (\ref{1-23-4}), (\ref{1-23-4-2}), and (\ref{10-8-20}) give Step (3) in Figs. \ref{outline-zu2} and \ref{outline-zu}.
Then, we can define an upper bound of $\hat{\phi}_1$ as
\begin{eqnarray}
\fl & \hat{\phi}_2(\hat{\bm{M}}(\bm{M}),\vec{\bm{N}})
:=
\hat{\phi}_1(\hat{q}^{(0)}(\bar{M}_0),
\hat{a}^{(1)}_{\times}(\hat{\bm{M}}(\bm{M}),\vec{\bm{N}}),
\hat{b}^{(1)}_{\times}(\hat{\bm{M}}(\bm{M}),\vec{\bm{N}}),
N_s,\mu_s) \nonumber\\
\fl 
=& M_s-\hat{J}^{(0)}(\hat{q}^{(0)}(\bar{M}_0),N_s,\mu_s) \nonumber\\
\fl &- \hat{J}^{(1)}(
\hat{a}^{(1)}_{\times}(\hat{\bm{M}}(\bm{M}),\vec{\bm{N}})
,
\hat{b}^{(1)}_{\times}(\hat{\bm{M}}(\bm{M}),\vec{\bm{N}})
,N_s,\mu_s) \nonumber\\
\fl & \quad \times (1- h(\min \{\hat{r}^{(1)}_{\times}(\hat{a}^{(1)}_{\times}(\hat{\bm{M}}(\bm{M}),\vec{\bm{N}}),
\hat{b}^{(1)}_{\times}(\hat{\bm{M}}(\bm{M}),\vec{\bm{N}}),N_s,\mu_s) ,1/2\}) ) ,
\Label{10-26-1}
\end{eqnarray}
where $\hat{\bm{M}}(\bm{M})$ is the estimate of $\bar{\bm{M}}$ when $\bm{M}$ is observed.

Indeed, 
when 
$\bar{a}^{(1)}_{\times}
=\hat{a}^{(1)}_{\times}(\bar{\bm{M}},\vec{\bm{N}})$
and
$\bar{b}^{(1)}_{\times}
=\hat{b}^{(1)}_{\times}(\bar{\bm{M}},\vec{\bm{N}})$,
the relations
\begin{eqnarray}
\bar{M}_1= & \frac{\bar{M}_0}{2N_0} N^{(0)}_1 + \hat{a}^{(1)}_{\times} N^{(1)}_1 + 
\hat{a}^{(2)}_{\times} N^{(2)}_1 \Label{9-17-2-a} \\
\bar{M}_2= & \frac{\bar{M}_0}{2N_0} N^{(0)}_2 + \hat{a}^{(1)}_{\times} N^{(1)}_2 + 
\hat{a}^{(2)}_{\times} N^{(2)}_2 \Label{9-17-3-a} \\
\bar{M}_3= & \frac{\bar{M}_0}{2N_0} N^{(0)}_1
+ \hat{b}^{(1)}_{\times} N^{(1)}_1 
\Label{9-17-2-b}
\end{eqnarray}
hold.

\begin{rem}
When we extend the existing method\cite{Ma05,Wang05,H1,decoy2,decoy4} to our finite length setting,
we obtain the following evaluation.
In this case, we employ the parameter $\bar{q}^{(1)}$ instead of 
$\bar{a}^{(1)}_{\times}$.
Because smaller $\bar{q}^{(1)} $ yields larger $\hat{\phi}_1$,
similar to $\bar{a}^{(1)}_{\times}$, $\bar{q}^{(1)}$ can be estimated as
\begin{eqnarray}
\fl \hat{q}^{(1)}(\bar{\bm{M}},\vec{\bm{N}})
:=
\frac{ N^{(2)}_2(\bar{M}_1+\bar{M}_3 - \bar{M}_0 N^{(0)}_1 / N_0)
-N^{(2)}_1 (\bar{M}_2 +\bar{M}_4- \bar{M}_0 N^{(0)}_2 / N_0)
}{N^{(1)}_1 N^{(2)}_2- N^{(1)}_2 N^{(2)}_1} .
\end{eqnarray}
Then,
in the upper bound $\hat{\phi}_1$
of the sacrifice bit size,
$\hat{a}^{(1)}_{\times}+\hat{b}^{(1)}_{\times}$
is replaced by 
$\hat{q}^{(1)}(\bar{\bm{M}},\vec{\bm{N}})$.
That is, we obtain 
an upper bound 
\begin{eqnarray}
\hat{\phi}_1(\hat{q}^{(0)}(\bar{M}_0),
\hat{q}^{(1)}(\bar{\bm{M}},\vec{\bm{N}})-
\hat{b}^{(1)}_{\times}(\bar{\bm{M}},\vec{\bm{N}}),
\hat{b}^{(1)}_{\times}(\bar{\bm{M}},\vec{\bm{N}}),N_s,\mu_s),
\end{eqnarray}
which is larger than
$\hat{\phi}_2(\hat{\bm{M}}(\bm{M}),\vec{\bm{N}})$
because
$\hat{q}^{(1)}(\bar{\bm{M}},\vec{\bm{N}})-
\hat{b}^{(1)}_{\times}(\bar{\bm{M}},\vec{\bm{N}})
\le \hat{a}^{(1)}_{\times}(\bar{\bm{M}},\vec{\bm{N}})$.
\end{rem}

\subsection{Estimation of another kind of channel parameters $\bar{\bm{M}}$}\Label{s7-3}
In this subsection, we treat the estimation of the channel parameters $\bar{\bm{M}}$
that is required to estimate the channel parameters
$\bar{q}^{(0)}$, $\bar{a}^{(1)}_{\times}$, and $\bar{b}^{(1)}_{\times}$
when the partition $\vec{\bm{N}}=(\bm{N}_1$, $\bm{N}_2)$
of pulses is known.
That is, we consider the method to estimate
$\bar{M}_0$, $\bar{M}_1$, $\bar{M}_2$, and $\bar{M}_3$
from the measured value
$M_0,M_1,M_2,$ and $M_3$.

For this purpose, we introduce the following assumption for $\bar{\bm{M}}$.
\begin{condition}\Label{c11}
Any element $\vec{\bm{N}} \in \Omega_1$ satisfies
\begin{eqnarray}
\frac{\hat{b}^{(1)}_{\times}(\bar{\bm{M}},\vec{\bm{N}})}{\hat{a}^{(1)}_{\times}(\bar{\bm{M}},\vec{\bm{N}})
+\hat{b}^{(1)}_{\times}(\bar{\bm{M}},\vec{\bm{N}})}
\le \frac{1}{8}.
\end{eqnarray}
\end{condition}

The partial derivatives of
$\hat{q}^{(0)}(\bar{M}_0)$,
$\hat{a}^{(1)}_{\times}(\bar{\bm{M}},\vec{\bm{N}})$, 
and $\hat{b}^{(1)}_{\times}(\bar{\bm{M}},\vec{\bm{N}})$
are calculated as
\begin{eqnarray}
\frac{\partial \hat{q}^{(0)}}{\partial \bar{M}_0}=&\frac{1}{N_0}>0
\Label{54} \\
\frac{\partial \hat{q}^{(0)}}{\partial \bar{M}_1}=&
\frac{\partial \hat{q}^{(0)}}{\partial \bar{M}_2}=
\frac{\partial \hat{q}^{(0)}}{\partial \bar{M}_3}=
\frac{\partial \hat{q}^{(0)}}{\partial \bar{M}_4}=0, 
\Label{60-2} \\
\frac{\partial \hat{a}^{(1)}_{\times}}{\partial \bar{M}_0}
=&\frac{(N_1^{(2)} N^{(0)}_2-N^{(2)}_2 N^{(0)}_1)}{2N_0(N^{(1)}_1 N^{(2)}_2-N^{(1)}_2 N_1^{(2)})}
\Label{55} \\
\frac{\partial \hat{a}^{(1)}_{\times}}{\partial \bar{M}_1}=&
\frac{N^{(2)}_2}{N^{(1)}_1 N^{(2)}_2-N^{(1)}_2 N_1^{(2)} }
\Label{58} \\
\frac{\partial \hat{a}^{(1)}_{\times}}{\partial \bar{M}_2}=&
-\frac{N_1^{(2)} }{N^{(1)}_1 N^{(2)}_2-N^{(1)}_2 N_1^{(2)}}
\Label{61} \\
\frac{\partial \hat{a}^{(1)}_{\times}}{\partial \bar{M}_3}=& 0 ,\\
\frac{\partial \hat{b}^{(1)}_{\times}}{\partial \bar{M}_0}=&
-\frac{N_1^{(0)}}{2 N_0 N^{(1)}_1} <0\\
\frac{\partial \hat{b}^{(1)}_{\times}}{\partial \bar{M}_3}=&
\frac{1}{N^{(1)}_1} >0 \\
\frac{\partial \hat{b}^{(1)}_{\times}}{\partial \bar{M}_1}=&
\frac{\partial \hat{b}^{(1)}_{\times}}{\partial \bar{M}_2}=0.
\end{eqnarray}
Hence, applying Conditions \ref{c1} and \ref{c2} to (\ref{55}),(\ref{58}),(\ref{61}),
we obtain
\begin{eqnarray*}
\frac{\partial \hat{a}^{(1)}_{\times}}{\partial \bar{M}_0}<0 ,~
\frac{\partial \hat{a}^{(1)}_{\times}}{\partial \bar{M}_1}>0 ,~
\frac{\partial \hat{a}^{(1)}_{\times}}{\partial \bar{M}_2}<0.
\end{eqnarray*}
Thus, due to (\ref{10-25-1}), (\ref{10-25-2}), (\ref{10-25-3}),
and Conditions \ref{c10} and \ref{c11},
we obtain
\begin{eqnarray*}
\fl \frac{\partial \hat{\phi}_1}{\partial \bar{M}_1} &<&0 ,~
\frac{\partial \hat{\phi}_1}{\partial \bar{M}_2}>0 ,~
\frac{\partial \hat{\phi}_1}{\partial \bar{M}_3}>0 ,\\
\fl \frac{\partial \hat{\phi}_1}{\partial \bar{M}_0}
&=& 
-\frac{N_s e^{-\mu_s}}{N_0}
-\frac{N_s e^{-\mu_s} \mu_s(N_1^{(2)} N_2^{(0)}-N_2^{(2)} N_1^{(0)})(1+\log \frac{\hat{a}^{(1)}_{\times}}{\hat{a}^{(1)}_{\times}+\hat{b}^{(1)}_{\times}})}{2N_0(N_1^{(1)} N_2^{(2)}-N_1^{(2)} N_2^{(1)})}\\
\fl &&+\frac{N_s e^{-\mu_s} \mu_s N_1^{(0)} (1+\log \frac{\hat{b}^{(1)}_{\times}}{\hat{a}^{(1)}_{\times}+\hat{b}^{(1)}_{\times}})}{2N_0 N_1^{(1)}} \\
\fl &\le &
-\frac{N_s e^{-\mu_s} }{N_0}
-\frac{N_s e^{-\mu_s} \mu_s (N_1^{(2)} N_2^{(0)}-N_2^{(2)} N_1^{(0)})
}{2N_0(N_1^{(1)} N_2^{(2)}-N_1^{(2)} N_2^{(1)})}
-\frac{N_s e^{-\mu_s} \mu_s N_1^{(0)} }{N_0 N_1^{(1)}} 
<0.
\end{eqnarray*}
Hence, we need to estimate 
$\bar{M}_0$ and $\bar{M}_1$ to be smaller, and  
$\bar{M}_2$ and $\bar{M}_3$ to be larger.

In the following, we employ
\begin{eqnarray}
\hat{M}_0(M_0) &:=X_{\est}^-(N_0,M_0,2^{-2\beta-8}),\Label{10-4-0} \\
\hat{M}_1(M_1) &:=X_{\est}^-(N_1,M_1,2^{-2\beta-8}),\Label{10-4-1} \\
\hat{M}_2(M_2) &:=X_{\est}^+(N_2,M_2,2^{-2\beta-8}),\Label{10-4-2} \\
\hat{M}_3(M_3) &:=X_{\est}^+(N_1,M_3,2^{-2\beta-8}),\Label{10-4-3} 
\end{eqnarray}
as estimates of 
$\bar{M}_0$,
$\bar{M}_1$,
$\bar{M}_2$,
and 
$\bar{M}_3$, which give Step (1) in Figs. \ref{outline-zu2} and \ref{outline-zu}.
Then,
we define the function $\hat{\phi}_3$
\begin{eqnarray*}
\hat{\phi}_3(\bm{M},\vec{\bm{N}})
:=
\hat{\phi}_2(\hat{\bm{M}}(\bm{M}),\vec{\bm{N}}),\Label{3-7-1}
\end{eqnarray*}
which satisfies the condition (\ref{1-24-6a}) for 
$\hat{\phi}_a$,
as is guaranteed by the following theorem.
\begin{theorem}\Label{thm5-1-1}
When the partition $\vec{\bm{N}}$ belongs to $\Omega_1$ and satisfies Conditions \ref{c1} and \ref{c2},
and when $\hat{\bm{M}}(\bm{M})$ satisfies Condition \ref{c11},
the relation
\begin{eqnarray*}
\Pr_{\bm{J},\bm{M}|\bar{q}^{(0)}, \bar{\bm{a}},\bar{\bm{b}},\vec{\bm{N}}} 
\{ \hat{\phi}_3(\bm{M},\vec{\bm{N}}) < \phi(\bm{J}) \}
\le 8 \cdot 2^{-2\beta-8}
\end{eqnarray*}
holds.
\end{theorem}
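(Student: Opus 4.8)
The plan is to reduce the statement to the already-established bound (\ref{9-24-1}) on $\hat{\phi}_1$ evaluated at the true channel parameters, by showing that the observable quantity $\hat{\phi}_3(\bm{M},\vec{\bm{N}})=\hat{\phi}_2(\hat{\bm{M}}(\bm{M}),\vec{\bm{N}})$ dominates $\hat{\phi}_1(\bar{q}^{(0)},\bar{a}^{(1)}_{\times},\bar{b}^{(1)}_{\times},N_s,\mu_s)$ off an event of small probability. First I would introduce the good event
\begin{eqnarray*}
G:=\{\hat{M}_0\le\bar{M}_0\}\cap\{\hat{M}_1\le\bar{M}_1\}\cap\{\bar{M}_2\le\hat{M}_2\}\cap\{\bar{M}_3\le\hat{M}_3\},
\end{eqnarray*}
with $\hat{M}_i$ the one-sided interval estimators (\ref{10-4-0})--(\ref{10-4-3}) at level $\epsilon=2^{-2\beta-8}$. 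By the interval-estimation inequalities (\ref{10-13-10}), (\ref{10-13-11}), (\ref{10-13-12}) and (\ref{10-13-13}), each of the four events defining $G$ fails with probability at most $\epsilon$, so $\Pr_{\bm{J},\bm{M}|\bar{q}^{(0)},\bar{\bm{a}},\bar{\bm{b}},\vec{\bm{N}}}(G^c)\le 4\cdot2^{-2\beta-8}$.

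The heart of the proof is the deterministic inequality ``on $G$, $\hat{\phi}_3(\bm{M},\vec{\bm{N}})\ge\hat{\phi}_1(\bar{q}^{(0)},\bar{a}^{(1)}_{\times},\bar{b}^{(1)}_{\times},N_s,\mu_s)$'', which I would split into two steps. In Step~1 I show $\hat{\phi}_2(\bar{\bm{M}},\vec{\bm{N}})\ge\hat{\phi}_1(\bar{q}^{(0)},\bar{a}^{(1)}_{\times},\bar{b}^{(1)}_{\times},N_s,\mu_s)$: by (\ref{1-23-4}) one has $\hat{q}^{(0)}(\bar{M}_0)=\bar{q}^{(0)}$ exactly, while non-negativity of $A_1\bar{a}^{(3)}_{\times}$ and of $\bar{b}^{(2)}_{\times}$ forces $\hat{a}^{(1)}_{\times}(\bar{\bm{M}},\vec{\bm{N}})\le\bar{a}^{(1)}_{\times}$ and $\hat{b}^{(1)}_{\times}(\bar{\bm{M}},\vec{\bm{N}})\ge\bar{b}^{(1)}_{\times}$; the monotonicity of $\hat{\phi}_1$ in the channel parameters ((\ref{10-25-1})--(\ref{10-25-3}): decreasing in $\bar{q}^{(0)}$ and $\bar{a}^{(1)}_{\times}$, increasing in $\bar{b}^{(1)}_{\times}$) then gives the inequality, the relevant sign conditions persisting along the connecting segment because the ratio $\bar{b}^{(1)}_{\times}/(\bar{a}^{(1)}_{\times}+\bar{b}^{(1)}_{\times})$ is non-increasing there and stays $\le 1/8$ by Condition \ref{c11}. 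In Step~2 I show $\hat{\phi}_2(\hat{\bm{M}}(\bm{M}),\vec{\bm{N}})\ge\hat{\phi}_2(\bar{\bm{M}},\vec{\bm{N}})$ on $G$: the sign computation of Subsection \ref{s7-3} (using Conditions \ref{c1}, \ref{c2}, \ref{c10} and \ref{c11}) shows $\hat{\phi}_2(\cdot,\vec{\bm{N}})$ is non-increasing in $\bar{M}_0$ and $\bar{M}_1$ and non-decreasing in $\bar{M}_2$ and $\bar{M}_3$, so the four inequalities defining $G$ move $\hat{\phi}_2$ upward. Combining the two steps with (\ref{9-24-1}),
\begin{eqnarray*}
\{\hat{\phi}_3(\bm{M},\vec{\bm{N}})<\phi(\bm{J})\}\subset G^c\cup\{\hat{\phi}_1(\bar{q}^{(0)},\bar{a}^{(1)}_{\times},\bar{b}^{(1)}_{\times},N_s,\mu_s)<\phi(\bm{J})\},
\end{eqnarray*}
and (\ref{9-24-1}) bounds the probability of the second event by $3\cdot2^{-2\beta-8}$, so the total is at most $7\cdot2^{-2\beta-8}\le 8\cdot2^{-2\beta-8}$.

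The main obstacle is Step~2: the partial-derivative signs of $\hat{\phi}_2$ in $\bar{\bm{M}}$ hold only while the estimated single-photon phase-error ratio stays below the threshold of Condition \ref{c11} and while the $[\,\cdot\,]_+$ clippings in $\hat{a}^{(1)}_{\times}$ and $\hat{b}^{(1)}_{\times}$ remain inactive. To make the monotonicity rigorous I would exploit that $\hat{q}^{(0)}$, $\hat{a}^{(1)}_{\times}$ and $\hat{b}^{(1)}_{\times}$ are piecewise affine in $\bar{\bm{M}}$, so $\hat{b}^{(1)}_{\times}/(\hat{a}^{(1)}_{\times}+\hat{b}^{(1)}_{\times})$ restricted to the segment joining $\bar{\bm{M}}$ to $\hat{\bm{M}}(\bm{M})$ is a M\"obius function of one variable, hence monotone, hence bounded by its two endpoint values, both $\le 1/8$ by Condition \ref{c11} at $\bar{\bm{M}}$ and by the hypothesis at $\hat{\bm{M}}(\bm{M})$; Condition \ref{c5} keeps the clippings inactive along that segment. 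A secondary technical point is that the derivatives of $\hat{J}^{(0)}$, $\hat{J}^{(1)}$ and $\hat{r}^{(1)}_{\times}$ in the channel parameters were treated only to leading order in Subsection \ref{s7-1}, so I would either carry that linearization explicitly or replace the derivative argument by the exact monotonicity of percent points in the underlying success probability.
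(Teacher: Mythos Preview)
Your argument is essentially the paper's own proof: the paper also decomposes $\{\hat{\phi}_3<\phi(\bm{J})\}$ into $\{\hat{\phi}_1(\bar{q}^{(0)},\bar{a}^{(1)}_{\times},\bar{b}^{(1)}_{\times})<\phi(\bm{J})\}$ (handled by (\ref{9-24-1}), contributing $3\cdot 2^{-2\beta-8}$) together with $\{\hat{\phi}_3<\hat{\phi}_2(\bar{\bm{M}},\vec{\bm{N}})\}$, then bounds the latter by the union of the same four one-sided interval-estimation failures, contributing $4\cdot 2^{-2\beta-8}$. Your Steps~1 and~2 correspond exactly to the paper's inclusions (\ref{10-13-8}) and (\ref{11-13-2}); you are if anything more explicit than the paper about why the monotonicity in $\bar{\bm{M}}$ persists along the whole segment (the M\"obius argument for the phase-error ratio, the role of Condition~\ref{c5} in keeping the clippings inactive, and the caveat about leading-order derivatives of percent points), points the paper treats only at the level of signs of partial derivatives.
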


\begin{proof}
The definition of $\hat{\phi}_2$ 
given in (\ref{10-26-1})
yields
\begin{eqnarray}
\fl & \{ \hat{\phi}_3(\bm{M},\vec{\bm{N}}) <\phi(\bm{J}) \}
\subset
\{ \hat{\phi}_2(\bar{\bm{M}},\vec{\bm{N}}) < \phi (\bm{J}) \}
\cup
\{ \hat{\phi}_3(\bm{M},\vec{\bm{N}}) < \hat{\phi}_2(\bar{\bm{M}},\vec{\bm{N}}) \} \nonumber \\
\fl \subset &
\{ \hat{\phi}_1(\bar{q}^{(0)},\bar{a}^{(1)}_{\times},\bar{b}^{(1)}_{\times},N_s,\mu_s)< \phi(\bm{J}) \} 
\cup
\{ \hat{\phi}_3(\bm{M},\vec{\bm{N}}) < \hat{\phi}_2(\bar{\bm{M}},\vec{\bm{N}}) \} 
\Label{10-13-8}.
\end{eqnarray}
Hence,
the above calculations of the partial derivatives
and Conditions \ref{c1} and \ref{c2} imply
\begin{eqnarray*}
\fl & \{ \hat{\phi}_3(\bm{M},\vec{\bm{N}}) < \hat{\phi}_2(\bar{\bm{M}},\vec{\bm{N}}) \} \\
\fl \subset &
\{ \bar{q}_0 < \hat{q}^{(0)}(M_0) \}
\cup \{ \bar{M}_1 < \hat{M}_1(M_1) \}
\cup \{ \hat{M}_2(M_2) < \bar{M}_2 \}
 \cup \{ \hat{M}_3(M_3) < \bar{M}_3 \}.
\Label{11-13-2}
\end{eqnarray*}
Thus,
it follows from the relations
(\ref{10-13-10}),
(\ref{10-13-11}),
(\ref{10-13-12}),
and (\ref{10-13-13})
that
\begin{eqnarray}
\Pr_{\bm{J},\bm{M}|\bar{q}^{(0)}, \bar{\bm{a}},\bar{\bm{b}},\vec{\bm{N}}} 
\{ \hat{\phi}_3(\bm{M},\vec{\bm{N}}) < \hat{\phi}_2(\bar{\bm{M}},\vec{\bm{N}}) \} 
\le 4\cdot 2^{-2\beta-8}.
\Label{5-1-1}
\end{eqnarray}
Using (\ref{10-13-8}) and (\ref{9-24-1}),
we obtain
\begin{eqnarray*}
\Pr_{\bm{J},\bm{M}|\bar{q}^{(0)}, \bar{\bm{a}},\bar{\bm{b}},\vec{\bm{N}}} 
\{ \hat{\phi}_3(\bm{M},\vec{\bm{N}}) < \phi(\bm{J}) \}
\le 
3\cdot 2^{-2\beta-8}+4\cdot 2^{-2\beta-8}
\le 8 \cdot 2^{-2\beta-8}.
\end{eqnarray*}
\end{proof}


\begin{rem}\Label{r5-15}
When the vacuum pulse has a possibility to contain a non-vacuum state,
we adjust the estimate $\hat{q}^{(0)}(M_0)$ as follows
\begin{eqnarray}
\fl
\hat{q}^{(0)}(M_0):=
p_{\est}^+(N_0-X_{\per}^+(N_0, q, 2^{-2\beta-8}), M_0-X_{\per}^+(N_0, q, 2^{-2\beta-8}), 2^{-2\beta-8})
\Label{5-1-2}
\end{eqnarray}
where
the vacuum pulse becomes a non-vacuum state with a probability $q$.
Let $N_0^{(1)}$ be the number of non-vacuum pulses among $N_0$ pulses.
Then,
\begin{eqnarray*}
\fl
\{ \bar{q}_0 < \hat{q}^{(0)}(M_0) \}
\subset 
\{ N_0^{(1)} > X_{\per}^+(N_0, q, 2^{-2\beta-8})\}
\cup 
\{
\bar{q}_0 <
p_{\est}^+(N_0-N_0^{(1)}, M_0-N_0^{(1)}, 2^{-2\beta-8})\}.
\end{eqnarray*}
Hence, the probability of 
$\{ \bar{q}_0 < \hat{q}^{(0)}(M_0) \}$ is less than 
$2\cdot 2^{-2\beta-8}$.
Thus, 
in the proof of Theorem \ref{thm5-1-1},
we replace the right hand side of (\ref{5-1-1}) 
by $5\cdot 2^{-2\beta-8}$.
Then, 
since $3\cdot 2^{-2\beta-8}+5\cdot 2^{-2\beta-8} = 8 \cdot 2^{-2\beta-8}$,
we obtain Theorem \ref{thm5-1-1} in this adjustment.
\end{rem}

\section{Derivation of upper bound $\hat{\phi}_b$ of leaked information}\Label{s9}
\subsection{Characterizations of Conditions \ref{c5} and \ref{c15}}
In this section, we define the upper bound $\hat{\phi}_b(\bm{M})$ satisfying (\ref{1-25-1}) as a function of the measured value $\bm{M}$.
Note that the upper bound $\hat{\phi}_b(\bm{M})$ does not depend on the partition $\vec{\bm{N}}$.
In this subsection, for this purpose,
we recall Conditions \ref{c5} and \ref{c15},
which are conditions 
for $\mu_1$, $\mu_2$, $N_0, N_1, N_2$ and the observed data $\bm{M}$.
Condition \ref{c15} plays an alternative role of Condition \ref{c11}.

Substituting $\bar{\bm{M}}$ into $\hat{\bm{M}}$,
and applying the relations
(\ref{9-17-2-a}), (\ref{9-17-3-a}), and 
(\ref{9-17-2-b}),
we can calculate the above values as
\begin{eqnarray*}
\fl A^{(0)}_1 = (\hat{a}^{(1)}_{\times}-\hat{q}^{(0)}/2) N^{(2)}_2 ,\quad
A^{(1)}_1 = (\hat{a}^{(2)}_{\times}-\hat{a}^{(1)}_{\times}) N^{(2)}_2 \\
\fl A^{(1)}_2 = \hat{a}^{(1)}_{\times} N^{(2)}_1  ,\quad
A^{(2)}_2 = \hat{a}^{(2)}_{\times} N^{(2)}_2 ,\quad
B^{(1)}_1 = \hat{b}^{(1)}_{\times} N^{(1)}_1.
\end{eqnarray*}
Now, we show that these values take positive values, naturally.
Assume that there is no eavesdrop, i.e., 
That is, we adopt the following model \cite{LB,GRTZ}
\begin{eqnarray}
&p_{i,+}=p_{i,\times}= 1-e^{-\alpha \mu_i} + p_0 ,\nonumber \\
&s_{i,+} =s_{i,\times}= s(1-e^{-\alpha \mu_i}) + \frac{p_0}{2} ,
\Label{10-8-14}
\end{eqnarray}
where $\alpha$ is the total transmission including quantum efficiency of the detector,
and $s$ is the error due to the imperfection of the optical system.
This model implies $p_{i,\times}-s_{i,\times}= (1-s)(1-e^{-\alpha \mu_i}) + \frac{p_0}{2}$.
As was shown in the beginning of Section \ref{s4},
when all of $\vec{\bm{N}}$ are close to their expectations,
$\hat{a}^{(2)}_{\times}-\hat{q}^{(0)}/2$ and 
$\hat{a}^{(2)}_{\times}-\hat{a}^{(1)}_{\times}$ are
positive.
Thus, Condition \ref{c5} 
holds under 
the condition (\ref{10-8-14}).
Similarly, Condition \ref{c15} 
holds under the condition (\ref{10-8-14}) with small $s$.
Since the condition (\ref{10-8-14}) is a natural assumption,
Conditions \ref{c5} and \ref{c15} can be regarded as natural assumptions.
Hence, we need to choose
the initial parameters $\mu_1$, $\mu_2$, $N_0$, $N_1$, $N_2$
so that
Conditions \ref{c5} and \ref{c15} hold with high probability without the presence of an eavesdropper.

Here, we need to pay attention to the difference between
Condition \ref{c6}
and 
Conditions \ref{c5} and \ref{c6}.
Conditions \ref{c5} and \ref{c15} are assumptions for 
the initial parameters $\mu_1$, $\mu_2$, $N_0$, $N_1$, and $N_2$.
On the other hand, Condition \ref{c6} is an assumption for 
the measured values $\bm{M}$ and the initial parameters $\mu_1$, $\mu_2$, $N_0$, $N_1$, and $N_2$.
This is because the estimates $\hat{\bm{M}}$ 
are determined from the measured values $\bm{M}$
via the relations
(\ref{10-4-0}), (\ref{10-4-1}), (\ref{10-4-2}), 
and (\ref{10-4-3}).

\subsection{Derivation of $\vec{\bm{N}}=(\bm{N}_1$, $\bm{N}_2)$}\Label{s93}
Next, in order to derive the estimates of $\bm{N}_1$ and $\bm{N}_2$ 
giving an upper bound of $\hat{\phi}_3(\bm{M})$,
we calculate the partial derivatives of $\hat{\phi}_1$
with respect to $\bm{N}_1$ and $\bm{N}_2$.
For this purpose,
we calculate the partial derivatives of $\hat{a}^{(1)}_{\times}(\bar{\bm{M}},\vec{\bm{N}})$ 
with respect to $N^{(0)}_1,N^{(1)}_1,N^{(0)}_2,N^{(1)}_2,N^{(2)}_2$ as follows.
\begin{eqnarray*}
\fl & \frac{\partial \hat{a}^{(1)}_{\times}(\hat{\bm{M}},\hat{q}^{(0)},\vec{\bm{N}})}{\partial N^{(0)}_1} \\
\fl =&
\frac{ \hat{M}_2 - \hat{q}^{(0)} (N^{(0)}_2 +N^{(2)}_2)/ 2}
{N^{(1)}_1 N^{(2)}_2- N^{(1)}_2 N^{(2)}_1} 
- N_2^{(1)} \frac{ N^{(2)}_2(\hat{M}_1 - \hat{q}^{(0)} N^{(0)}_1 / 2)
-N^{(2)}_1 (\hat{M}_2 - \hat{q}^{(0)} N^{(0)}_2 / 2)
}{(N^{(1)}_1 N^{(2)}_2- N^{(1)}_2 N^{(2)}_1)^2} ,\\
\fl & \frac{\partial \hat{a}^{(1)}_{\times}(\hat{\bm{M}},\hat{q}^{(0)},\vec{\bm{N}})}{\partial N^{(1)}_1} \\
\fl =&
\frac{ \hat{M}_2 - \hat{q}^{(0)} N^{(0)}_2 / 2}{N^{(1)}_1 N^{(2)}_2- N^{(1)}_2 N^{(2)}_1} 
- (N_2^{(2)}+N_2^{(1)}) \frac{ N^{(2)}_2(\hat{M}_1 - \hat{q}^{(0)} N^{(0)}_1 / 2)
-N^{(2)}_1 (\hat{M}_2 - \hat{q}^{(0)} N^{(0)}_2 / 2)
}{(N^{(1)}_1 N^{(2)}_2- N^{(1)}_2 N^{(2)}_1)^2} , \\
\fl & \frac{\partial \hat{a}^{(1)}_{\times}(\hat{\bm{M}},\hat{q}^{(0)},\vec{\bm{N}})}{\partial N^{(0)}_2} 
=
\frac{
N^{(2)}_1 \hat{q}^{(0)} /2 
}{N^{(1)}_1 N^{(2)}_2- N^{(1)}_2 N^{(2)}_1} , \\
\fl & \frac{\partial \hat{a}^{(1)}_{\times}(\hat{\bm{M}},\hat{q}^{(0)},\vec{\bm{N}})}{\partial N^{(1)}_2} 
 =
\frac{
N^{(2)}_1(
N^{(2)}_2
(\hat{M}_1 - \frac{\hat{q}^{(0)}}{2} N^{(0)}_1)
-
N^{(2)}_1 
(\hat{M}_2 - \frac{\hat{q}^{(0)}}{2} N^{(0)}_2) 
)
}{(N^{(1)}_1 N^{(2)}_2- N^{(1)}_2 N^{(2)}_1)^2}, \\
\fl & \frac{\partial \hat{a}^{(1)}_{\times}(\hat{\bm{M}},\hat{q}^{(0)},\vec{\bm{N}})}{\partial N^{(2)}_2} \\
\fl =&
\frac{ \hat{M}_1 - \frac{\hat{q}^{(0)}(M_0)}{2} N^{(0)}_1}{N^{(1)}_1 N^{(2)}_2- N^{(1)}_2 N^{(2)}_1} 
-
\frac{
N^{(1)}_1(
N^{(2)}_2
(\hat{M}_1 - \frac{\hat{q}^{(0)}}{2} N^{(0)}_1)
-
N^{(2)}_1 
(\hat{M}_2 - \frac{\hat{q}^{(0)}}{2} N^{(0)}_2) 
)
}{(N^{(1)}_1 N^{(2)}_2- N^{(1)}_2 N^{(2)}_1)^2} .
\end{eqnarray*}
Here, we should remark that $N_1^{(2)}=N_1-N_1^{(0)}-N_1^{(1)}$.
That is, the variable $N_1^{(2)}$ is a dependent variable.
Due to Condition \ref{c5}, all of the above values are positive.

Next, under Condition \ref{c5},
we calculate the partial derivatives of
$\hat{b}^{(1)}_{\times}(\bar{\bm{M}},\vec{\bm{N}})$ 
with respect to $N^{(0)}_1,N^{(1)}_1,N^{(0)}_2,N^{(1)}_2,N^{(2)}_2$ as follows.
\begin{eqnarray*}
\fl & \frac{\partial \hat{b}^{(1)}_{\times}(\hat{\bm{M}},\hat{q}^{(0)},\vec{\bm{N}})}{\partial N^{(0)}_1} 
 =
- 
\frac{\hat{q}^{(0)}}{2 N^{(1)}_1}<0
\\
\fl & \frac{\partial \hat{b}^{(1)}_{\times}(\hat{\bm{M}},\hat{q}^{(0)},\vec{\bm{N}})}{\partial N^{(1)}_1} 
=
-\frac{ \hat{M}_3 - \hat{q}^{(0)} N^{(0)}_1 /2 }{(N_1^{(1)})^2}<0
\\
\fl & \frac{\partial \hat{b}^{(1)}_{\times}(\hat{\bm{M}},\hat{q}^{(0)},\vec{\bm{N}})}{\partial N^{(0)}_2} 
= \frac{\partial \hat{b}^{(1)}_{\times}(\hat{\bm{M}},\hat{q}^{(0)},\bm{N}_1,\bm{N}_2)}{\partial N^{(1)}_2} 
= \frac{\partial \hat{b}^{(1)}_{\times}(\hat{\bm{M}},\hat{q}^{(0)},\bm{N}_1,\bm{N}_2)}{\partial N^{(2)}_2} =0.
\end{eqnarray*}

Since
$\frac{\partial \hat{\phi}_1}{\partial \bar{a}^{(1)}_{\times}} <0$ 
and
$\frac{\partial \hat{\phi}_1}{\partial \bar{b}^{(1)}_{\times}} 
>0$,
due to (\ref{10-25-2}) and (\ref{10-25-3}),
any element $\vec{\bm{N}}\in \Omega_1$ satisfies
\begin{eqnarray*}
\frac{\partial \hat{\phi}_1}{\partial N^{(0)}_1} <0 ,\quad
\frac{\partial \hat{\phi}_1}{\partial N^{(1)}_1} <0 ,\quad
\frac{\partial \hat{\phi}_1}{\partial N^{(0)}_2} <0 ,\quad
\frac{\partial \hat{\phi}_1}{\partial N^{(1)}_2} <0 ,\quad
\frac{\partial \hat{\phi}_1}{\partial N^{(2)}_2} <0 .
\end{eqnarray*}
Thus, since $N^{(0)}_1,N^{(1)}_1,N^{(0)}_2,N^{(1)}_2,$ and $N^{(2)}_2$ 
obey the binomial distribution,
we decide 
$\hat{\bm{N}}_1=(\hat{N}^{(0)}_1,\hat{N}^{(1)}_1)$
and
$\hat{\bm{N}}_2=(\hat{N}^{(0)}_2,\hat{N}^{(1)}_2,\hat{N}^{(2)}_2)$
in the following way:
\begin{eqnarray}
\hat{N}^{(0)}_1&:=X_{\per}^{-}(N_1,e^{-\mu_1},2^{-2\beta-8})\Label{3-28-1}\\
\hat{N}^{(1)}_1&:=X_{\per}^{-}(N_1,\mu_1 e^{-\mu_1},2^{-2\beta-8})\Label{3-28-2}\\
\hat{N}^{(0)}_2&:=X_{\per}^{-}(N_2,e^{-\mu_2},2^{-2\beta-8})\Label{3-28-3}\\
\hat{N}^{(1)}_2&:=X_{\per}^{-}(N_2,\mu_2 e^{-\mu_2},2^{-2\beta-8})\Label{3-28-4}\\
\hat{N}^{(2)}_2&:=X_{\per}^{-}(N_2,\omega_2 \mu_2^2 e^{-\mu_2},2^{-2\beta-8}) ,\Label{3-28-5}
\end{eqnarray}
which give Step (2) in Figs. \ref{outline-zu2} and \ref{outline-zu}.
Then,
we define
\begin{eqnarray}
\fl
\hat{\phi}_4(\bm{M}):= 
\left\{
\begin{array}{ll}
\hat{\phi}_2(\hat{\bm{M}}(\bm{M}),\vec{\hat{\bm{N}}}) & \hbox{if Conditions \ref{c6}, \ref{c5} and \ref{c15} hold} \\
M & \hbox{otherwise.}
\end{array}
\right.
\Label{10-4-a}
\end{eqnarray}

Due to the definition,
any element $\vec{\bm{N}}\in \Omega_1$ satisfies
\begin{eqnarray}
\hat{\phi}_4(\bm{M}) \ge \hat{\phi}_3(\bm{M},\vec{\bm{N}}),\Label{11-13-1}
\end{eqnarray}
i.e., 
the function $\hat{\phi}_4$ satisfies the condition (\ref{1-25-1}) for $\hat{\phi}_b$
when $\hat{\phi}_a$ is $\hat{\phi}_3$.
In summary,
when the parameters $\mu_1$, $\mu_2$, $N_s$, $N_0$, $N_1$, and $N_2$
satisfy Condition \ref{c6},
and when
we choose the sacrifice bit-length
\begin{eqnarray}
\fl S(\bm{M})
&= \hat{\phi}_4(\bm{M}) +2\beta+ 5 \nonumber \\
\fl &=
\left\{
\begin{array}{ll}
\hat{\phi}_2(\hat{\bm{M}}(\bm{M}),\vec{\hat{\bm{N}}}) +2\beta+ 5& \hbox{if Conditions \ref{c6}, \ref{c5} and \ref{c15} hold} \\
M+2\beta+ 5 & \hbox{otherwise,}
\end{array}
\right.
\Label{3-28-6}
\end{eqnarray}
due to (\ref{35}),
we obtain 
$\| \rho_{A,E}-\rho_{\ideal} \|_1 \le 2^{-\beta}$.
Note that $\hat{\phi}_2(\hat{\bm{M}}(\bm{M}),\vec{\hat{\bm{N}}})$ is given in (\ref{10-26-1}), which is the same as $\hat{\phi}_2 $ given in (\ref{10-26-1i}).
Since $M+2\beta+ 5$ is greater than $\dim C_1$,
the formula (\ref{3-28-6}) implies the abort of the protocol when one of
Conditions \ref{c6}, \ref{c5} and \ref{c15} does not hold.
Hence, we obtain (\ref{35i}) under the formula given in Subsection \ref{s-non-imp}.

\section{Security proof of improved formula}\Label{s9-5}
Up to the previous section,
based on (\ref{35}) given in Section \ref{s4},
we evaluate the universal composability criterion with the finite-length setting.
However, the above given evaluation can be improved
by removing the square root for a part of probabilities.
The improved formula for the sacrifice bit-length given in Subsection \ref{s-imp}
is derived by removing the square root for a part of probabilities.
The purpose of this section is to show the following theorem.

\begin{theorem}\Label{th5-1-2}
The improved formula for the sacrifice bit-length given in Subsection \ref{s-imp}
satisfies 
\begin{eqnarray}
\| \rho_{A,E}-\rho_{\ideal } \|_1
\le 2^{-\beta}.
\end{eqnarray}
\end{theorem}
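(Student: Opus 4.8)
The plan is to re-run the security argument for the non-improved formula (Sections \ref{s4}, \ref{s7}, \ref{s9}) with the \emph{same} skeleton --- the phase-error bound (\ref{10-15-1}), the averaging (\ref{9-17-5}), and the left-over hashing inequality (\ref{9-17-6}) --- but to reorganise the union bound over failure events so that some of them are extracted \emph{before}, rather than \emph{inside}, the square root. In the non-improved proof the event $\{\hat{\phi}_b(\bm{M})<\phi(\bm{J})\}$ was split (see the proofs of Theorems \ref{th1-15} and \ref{thm5-1-1}) into $\Omega_1^c$, $\{J^{(0)}\text{ underestimated}\}$, $\{J^{(1)}\text{ underestimated}\}$, $\{M_0\text{ mis-estimated}\}$, $\{M_1,M_2,M_3\text{ mis-estimated}\}$ and $\{r^{(1)}_\times\text{ overestimated}\}$, and every piece was forced to have probability $O(2^{-2\beta})$ because (\ref{9-17-6}) turns $P_{ph}$ into $2\sqrt2\sqrt{P_{ph}}$. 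The key observation is that the pieces $\Omega_1^c$, $\{J^{(0)}\text{ underestimated}\}$, $\{J^{(1)}\text{ underestimated}\}$ and $\{M_0\text{ mis-estimated}\}$ --- which concern only Alice's Poissonian source statistics, the count of detected vacuum pulses, and the vacuum/single-photon contributions to the raw keys --- can be moved out of the square root, where they need only be of size $O(2^{-\beta})$; this is exactly why the improved formula controls them at level $2^{-\beta-6}$ in (\ref{10-4-0-e}), (\ref{3-28-10})--(\ref{3-28-14}), (\ref{1-24-1-e})--(\ref{1-24-2-e}) and (\ref{11-12-1-d})--(\ref{11-12-5-d}), while it keeps $\hat{M}_1,\hat{M}_2,\hat{M}_3$ (the phase-basis decoy statistics) and $\hat{r}^{(1)}_\times$ (the single-photon phase-error rate) --- the genuinely phase-error-related estimates --- at level $2^{-2\beta-7}$ in (\ref{10-4-1-e})--(\ref{10-4-3-e}) and (\ref{1-24-3-e}).

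First I would peel off the ``classical'' pieces at the level of states. Viewing $\rho_{A,E}$ and $\rho_{\ideal}$ in the virtual phase-error picture of Section \ref{s1} (in which Alice may be imagined to record the photon number of each pulse and Bob his detection pattern), both are mixtures over a common classical register $X$ carrying $\vec{\bm{N}}$, $\bm{N}_s$, $M_0$, $J^{(0)}$ and $J^{(1)}$. Convexity of the trace norm and the triangle inequality then give
\[
\|\rho_{A,E}-\rho_{\ideal}\|_1
\le
\sum_{x\in G}\Pr(x)\,\|\rho_{A,E|x}-\rho_{\ideal|x}\|_1
+2\Pr(X\notin G),
\]
where $G$ is the set of $x$ on which none of the four classical pieces occurs. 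Using the argument of Section \ref{s6} to replace hypergeometric by binomial distributions wherever needed, together with the definitions above, $\Pr(X\notin G)$ is bounded by the sum of the $2^{-\beta-6}$'s attached to these events, which is $O(2^{-\beta})$.

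Next, for each $x\in G$ I would apply (\ref{9-17-6}) to get $\|\rho_{A,E|x}-\rho_{\ideal|x}\|_1\le 2\sqrt2\sqrt{P_{ph|x}}$ and bound $P_{ph|x}$ via (\ref{9-17-5}). Conditioned on $x\in G$, the classical estimates all hold, so --- by the monotonicity/derivative computations of Subsections \ref{s5}, \ref{s7-3} and \ref{s93}, whose signs are unaffected by the change of error probabilities, together with the improved versions of Conditions \ref{c6}, \ref{c5}, \ref{c15} and the definition $S=\hat{\phi}_2+2\beta+5$ (cf. (\ref{3-28-6i})) --- the only remaining way to have $S<\phi(\bm{J})$ is failure of one of the four phase-error-related estimates $\hat{M}_1,\hat{M}_2,\hat{M}_3,\hat{r}^{(1)}_\times$, which has conditional probability $O(2^{-2\beta-7})$. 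Hence $P_{ph|x}\le 2^{-2\beta-5}+O(2^{-2\beta-7})=O(2^{-2\beta})$, and, $t\mapsto 2\sqrt2\sqrt t$ being concave, $\sum_{x\in G}\Pr(x)\,2\sqrt2\sqrt{P_{ph|x}}\le 2\sqrt2\sqrt{\,\bar{P}_{ph}\,}=O(2^{-\beta})$. With the explicit constants in (\ref{10-4-0-e})--(\ref{1-24-3-e}) and (\ref{11-12-1-d})--(\ref{11-12-5-d}) chosen so that the two contributions $2\Pr(X\notin G)$ and $2\sqrt2\sqrt{\bar{P}_{ph}}$ add up to at most $2^{-\beta}$, the theorem follows.

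The main obstacle is making the first step rigorous: one must exhibit a single classical register $X$ --- available in the virtual protocol before Eve's action on the phase-basis and raw-key data can affect the phase-error pattern --- with respect to which $\rho_{A,E}$ and $\rho_{\ideal}$ are the stated mixtures and $\Omega_1^c$, $\{J^{(0)}\text{ underestimated}\}$, $\{J^{(1)}\text{ underestimated}\}$, $\{M_0\text{ mis-estimated}\}$ are $X$-measurable, and then check that conditioning on $G$ leaves a well-posed phase-error-correction problem whose failure probability really is controlled by the remaining four estimates alone --- i.e.\ that the derivations of $\hat{\phi}_1$ in Subsection \ref{s7-1}, of the channel parameters $\hat{q}^{(0)},\hat{a}^{(1)}_\times,\hat{b}^{(1)}_\times$ in Subsection \ref{s5}, and of $\hat{\bm{N}}_1,\hat{\bm{N}}_2$ in Subsection \ref{s93} go through verbatim when the $2^{-\beta-6}$ levels are used for the classical quantities. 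Once these two points are settled, the rest is the same constant bookkeeping as in the proof of Theorem \ref{th1-15}.
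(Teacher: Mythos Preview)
Your proposal is correct and is essentially the paper's own argument. The paper's proof (Section \ref{s9-5}) realises your abstract ``classical register $X$'' concretely as the tuple $\bm{s}=(s_1,\ldots,s_{N_s+N_0+N_1+N_2})$ recording, for each pulse, which of $|0\rangle\langle 0|$, $|1\rangle\langle 1|$, $\rho_2$, $\rho_3$ it was; this $\bm{s}$ determines $\vec{\bm{N}}$, $\bm{N}_s$ and $(J^{(0)},J^{(1)},J^{(2)})$ but not $J^{(1)}_e$, and the paper then writes $\rho_{A,E}=\sum_{\bm{s}}P(\bm{s})|\bm{s}\rangle\langle\bm{s}|\otimes\rho_{A,E|\bm{s}}$, applies (\ref{9-17-6}) fibrewise to get (\ref{9-17-6-c}), and splits exactly as you do in (\ref{9-17-6-d}) with the good set $\Omega=\Omega_1\cap\{J^{(0)}\ge\cdots\}\cap\{J^{(1)}\ge\cdots\}\cap\{\hat q^{(0)}\text{ good}\}$ --- precisely your four ``classical'' pieces --- leaving the four phase-error pieces $\{\bar M_1<\hat M_1\}$, $\{\bar M_2>\hat M_2\}$, $\{\bar M_3>\hat M_3\}$, $\{J^{(1)}_e/J^{(1)}\ge p_{\per}^+(\cdots)\}$ inside the square root. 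The paper also makes explicit the criterion you flag as the obstacle: an event can be pulled out of the square root if and only if it is $\bm{s}$-measurable (its conditional probability given $\bm{s}$ is $0$ or $1$), which is why the $\hat M_1,\hat M_2,\hat M_3,\hat r^{(1)}_\times$ events cannot be extracted.
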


Now, we will show the above theorem.
For this purpose, we discuss the formula (\ref{9-17-6}) more deeply.
Let $\bm{s}=(s_1, \ldots, s_{N_s+N_0+N_1+N_2})$ be the indicators of the kinds of initial states of  
pulses received by Bob.
The indicators are decided as follows.
If the $i$-th received state is the vacuum state,
$s_i$ is $0$.
If the $i$-th received state is the single-photon state,
$s_i$ is $1$.
If the $i$-th received state is the state $\rho_2$,
$s_i$ is $2$.
Otherwise, 
$s_i$ is $3$.
The information $\bm{s}$ contains all of information for 
$\bm{N}$ and $(J^{(0)},J^{(1)},J^{(2)})$.
That is, $\bm{s}$ decides $\bm{N}$ and $(J^{(0)},J^{(1)},J^{(2)})$.
However, it cannot decide $J^{(1)}_e$.

Once we apply (\ref{9-17-6}) to the case when $\bm{s}$ is fixed,
we obtain
\begin{eqnarray}
\| \rho_{A,E|\bm{s}}-\rho_{\ideal |\bm{s}} \|_1
\le 2\sqrt{2}\sqrt{P_{ph|\bm{s}}}
\Label{9-17-6-c},
\end{eqnarray}
where
$\rho_{A,E|\bm{s}}$, $\rho_{\ideal|\bm{s}}$,
and
$P_{ph|\bm{s}}$ are 
the final true composite state, the ideal final state,
and the averaged virtual decoding phase error probability
conditioned with $\bm{s}$.
Hence, 
the final true composite state $\rho_{A,E}$ and
the ideal final state $\rho_{\ideal} $
are written as
\begin{eqnarray*}
\rho_{A,E}= \sum_{\bm{s}}
P(\bm{s})|\bm{s}\rangle \langle \bm{s}|
\otimes \rho_{A,E|\bm{s}},\\
\rho_{\ideal} = \sum_{\bm{s}}
P(\bm{s})|\bm{s}\rangle \langle \bm{s}|
\otimes \rho_{\ideal|\bm{s}} .
\end{eqnarray*}
Hence,
for a set $\Omega$ of $\bm{s}$, we have
\begin{eqnarray}
\fl \| \rho_{A,E}-\rho_{\ideal } \|_1
&=
\sum_{\bm{s}}
P(\bm{s})
\| \rho_{A,E|\bm{s}}-\rho_{\ideal |\bm{s}} \|_1 
\le 
\sum_{\bm{s}}
P(\bm{s})
\min \{2\sqrt{2}\sqrt{P_{ph|\bm{s}}},2\} \nonumber \\
\fl &\le 
2 P(\Omega^c)
+
2\sqrt{2}
\sqrt{
\sum_{\bm{s}\in \Omega}
P(\bm{s})P_{ph|\bm{s}}}
\Label{9-17-6-d}.
\end{eqnarray}
We choose the set $\Omega$ as
\begin{eqnarray*}
\fl \Omega:=& \Omega_1 \cap
\{ J^{(0)} \ge X_{\per}^-(N^{(0)},\bar{q}^{(0)},2^{-\beta-6}) \} 
\cap
\{ J^{(1)} \ge X_{\per}^-(N^{(1)},\bar{a}^{(1)}_{\times}+\bar{b}^{(1)}_{\times},2^{-\beta-6}) \}  \\
\fl & \cap
\{ \bar{q}_0 < \hat{q}^{(0)}(\hat{M}_0(M_0)) \}.
\end{eqnarray*}
Hence, using (\ref{9-17-6-d}), we obtain
\begin{eqnarray}
\fl & \| \rho_{A,E}-\rho_{\ideal } \|_1 
\le 
\max_{\rho,\sigma}\|\rho-\sigma \|_1
 \cdot (5 \cdot 2+ 3) \cdot  2^{-\beta-6}
+
2\sqrt{2}
\sqrt{
\sum_{\bm{s}\in \Omega}
P(\bm{s})P_{ph|\bm{s}}} \nonumber \\
\fl =&
2 \cdot 13 \cdot  2^{-\beta-2}
+
2\sqrt{2}
\sqrt{
\sum_{\bm{s}\in \Omega}
P(\bm{s})P_{ph|\bm{s}}} 
\Label{9-17-6-e}.
\end{eqnarray}

Since $\hat{\phi}_4(\bm{M}) \ge 
\hat{\phi}_3(\hat{\bm{M}}(\bm{M}),\vec{\bm{N}})$ 
for $\vec{\bm{N}} \in \Omega_1$, as was shown in (\ref{11-13-1}), 
the relations (\ref{10-13-8}), (\ref{11-13-2}), and (\ref{9-24-1-b})
guarantee that
\begin{eqnarray*}
\fl \Omega \cap 
\{ \hat{\phi}_4(\bm{M}) < \phi(\bm{J}) \}
\subset 
\Omega \cap 
\{ \hat{\phi}_3(\bm{M},\vec{\bm{N}}) < \phi(\bm{J}) \} \nonumber \\
\fl \subset 
\Omega \cap 
\Bigl(
\{ \hat{\phi}_1(\bar{q}^{(0)},\bar{a}^{(1)}_{\times},\bar{b}^{(1)}_{\times},N_s,\mu_s) < \phi (\bm{J})\} 
\cup
\{ \hat{\phi}_3(\bm{M},\vec{\bm{N}}) < \hat{\phi}_2(\bar{\bm{M}},\vec{\bm{N}}) \} \Bigr) \nonumber \\
\fl \subset 
\Omega \cap 
\Bigl(
\{ J^{(0)} \le X_{\per}^-(N_s,e^{-\mu_s} \bar{q}^{(0)},2^{-\beta-6}) \} 
\cup
\{ J^{(1)} \le X_{\per}^-(N_s,e^{-\mu_s}\mu_s( \bar{a}^{(1)}_{\times}+\bar{b}^{(1)}_{\times}),2^{-\beta-6}) \} \nonumber \\
\fl \quad \cup 
\Bigl\{ p_{\per}^+\Bigl(J^{(1)},\frac{\bar{b}^{(1)}_{\times}}{\bar{a}^{(1)}_{\times}+\bar{b}^{(1)}_{\times}}
,2^{-2\beta-7}\Bigr) \le \frac{J^{(1)}_e}{J^{(1)}} \Bigr\} \nonumber \\
\fl \quad \cup 
\{ \bar{q}_0 < \hat{q}^{(0)}(\hat{M}_0(M_0)) \} 
\cup \{ \bar{M}_1 < \hat{M}_1(M_1) \}
\cup \{ \bar{M}_2 > \hat{M}_2(M_2) \}
\cup \{ \bar{M}_3 > \hat{M}_3(M_3) \}
\Bigr ) \nonumber \\
\fl \subset 
\Omega \cap 
\Bigl(
\{ \frac{J^{(1)}_e}{J^{(1)}} \ge p_{\per}^+(J^{(1)},\frac{\bar{b}^{(1)}_{\times}}{\bar{a}^{(1)}_{\times}+\bar{b}^{(1)}_{\times}}
,2^{-2\beta-7}) \} \nonumber \\
\fl \quad 
\cup \{ \bar{M}_1 < \hat{M}_1(M_1) \}
\cup \{ \bar{M}_2 > \hat{M}_2(M_2) \}
\cup \{ \bar{M}_3 > \hat{M}_3(M_3) \}
\Bigr ) \nonumber\\
\fl \subset 
\Bigl\{ \frac{J^{(1)}_e}{J^{(1)}} \ge 
p_{\per}^+\Bigl(J^{(1)},\frac{\bar{b}^{(1)}_{\times}}{\bar{a}^{(1)}_{\times}+\bar{b}^{(1)}_{\times}}
,2^{-2\beta-7}\Bigr) \Bigr\} 
\cup \{ \bar{M}_1 < \hat{M}_1(M_1) \} \nonumber \\
\fl \quad
\cup \{ \bar{M}_2 > \hat{M}_2(M_2) \}
\cup \{ \bar{M}_3 > \hat{M}_3(M_3) \}.
\end{eqnarray*}
Hence, due to (\ref{3-28-6}),
\begin{eqnarray*}
\sum_{\bm{s}\in \Omega}
P(\bm{s})P_{ph|\bm{s}} 
\le
\Pr
\Omega \cap 
\{ \hat{\phi}_4(\bm{M}) < \phi(\bm{J}) \}
+2^{-2\beta-5} \\
\le
4\cdot 2^{-2\beta-7}+2^{-2\beta-5} 
=2^{-2\beta-4} .
\end{eqnarray*}
Thus, using (\ref{9-17-6-e}), we obtain
\begin{eqnarray}
\| \rho_{A,E}-\rho_{\ideal } \|_1 
\le 
2 \cdot 13 \cdot  2^{-\beta-6}
+
2\sqrt{2}\cdot 2^{\frac{-2\beta-4}{2}}
\le 2^{-\beta}\Label{5-1-3}
\end{eqnarray}
because $2 \sqrt{2}+\frac{13}{16}(\cong 3.64)\le 4$.
Indeed, in order to put out a probability from the square root,
the event corresponding to the probability must be defined by $\bm{s}$,
i.e.,
the probability conditioned with $\bm{s}$ must take the value $1$ or $0$.
Hence,
the probabilities corresponding to the sets
$\{ \frac{J^{(1)}_e}{J^{(1)}} \ge p_{\per}^+(J^{(1)},
\frac{\bar{b}^{(1)}_{\times}}{\bar{a}^{(1)}_{\times}+\bar{b}^{(1)}_{\times}}
,2^{-2\beta-7}) \}, 
\{ \bar{M}_1 < \hat{M}_1(M_1) \},
\{ \bar{M}_2 > \hat{M}_2(M_2) \}$, and $\{ \bar{M}_3 > \hat{M}_3(M_3) \}$
cannot be put out from the square root.


In summary,
when the parameters $\mu_1$, $\mu_2$, $N_0$, $N_1$, and $N_2$
satisfy Condition \ref{c6} modified in Subsection \ref{s-imp},
and when we choose the sacrifice bit-length
$S(\bm{M})= \hat{\phi}_4(\bm{M})
+2\beta+ 5$ 
by using the choice of $\hat{\phi}_4(\bm{M})$ given in (\ref{10-4-a})
with the modification given in Subsection \ref{s-imp},
we obtain 
$\| \rho_{A,E}-\rho_{\ideal} \|_1 \le 2^{-\beta}$.

\begin{rem}
When the vacuum pulse has a possibility to contain a non-vacuum state,
we adjust the estimate $\hat{q}^{(0)}(\hat{M}_0(M_0))$
as (\ref{5-1-2ii}).
Then,
\begin{eqnarray*}
\fl
\{ \bar{q}_0 < \hat{q}^{(0)}(M_0) \}
\subset 
\{ N_0^{(1)} > X_{\per}^+(N_0, q, 2^{-\beta-6})\}
\cup 
\{
\bar{q}_0 <
p_{\est}^+(N_0-N_0^{(1)}, M_0-N_0^{(1)}, 2^{-\beta-6})\}.
\end{eqnarray*}
Hence, the probability of 
$\{ \bar{q}_0 < \hat{q}^{(0)}(M_0) \}$ is less than 
$2\cdot 2^{-\beta-6}$.
Hence, 
in the above proof,
we replace the right hand side of (\ref{9-17-6-e}) by 
$2 \cdot 14 \cdot  2^{-\beta-2}
+
2\sqrt{2}\sqrt{\sum_{\bm{s}\in \Omega}P(\bm{s})P_{ph|\bm{s}}}$.
Then, we replace (\ref{5-1-3}) by
\begin{eqnarray}
\| \rho_{A,E}-\rho_{\ideal } \|_1 
\le 
2 \cdot 14 \cdot  2^{-\beta-6}
+
2\sqrt{2}\cdot 2^{\frac{-2\beta-4}{2}}
\le 2^{-\beta}
\end{eqnarray}
because $2 \sqrt{2}+\frac{14}{16}(\cong 3.70)\le 4$.
Thus, we obtain Theorem \ref{th5-1-2} in the adjustment (\ref{5-1-2ii}).
\end{rem}

\section{Conclusion and further improvement}
In this paper, under the BB84 protocol with the decoy method, 
based on several observed values,
we have derived the required sacrifice bit-length 
$S(\bm{M})= \hat{\phi}_2(\bm{M}) +2\beta+ 5$,
where $\hat{\phi}_2(\bm{M})$ is given in {\bf Step (6)}.
Under the above sacrifice bit-length, we have shown that
the final keys satisfy the security condition 
$\| \rho_{A,E}-\rho_{\ideal} \|_1 \le 2^{-\beta}$
when the parameters $\mu_1$, $\mu_2$, $N_s$, $N_0$, $N_1$, and $N_2$
satisfy Condition \ref{c6}.
Hence, in order to apply our formula, 
we need to choose 
the parameters
$\mu_1$, $\mu_2$, $N_s$, $N_0$, $N_1$, and $N_2$
so that Condition \ref{c6} holds.
This is a definitive requirement for our analysis.
However, when we choose 
sufficiently large integers $N_s$, $N_0$, $N_1$, and $N_2$ for 
the two values $\mu_1$ and $\mu_2-\mu_1$,
Condition \ref{c6} holds.
Indeed, when the two positive values $\mu_1$ and $\mu_2-\mu_1$ are quite small,
we need to choose quite large integers $N_s$, $N_0$, $N_1$, and $N_2$.
As the second requirement,
we need to choose 
the parameters
$\mu_1$, $\mu_2$, $N_s$, $N_0$, $N_1$, and $N_2$
so that Conditions \ref{c5} and \ref{c15} hold with a high probability
when there is no eavesdropper.
This requirement is also satisfied when
the integers $N_s$, $N_0$, $N_1$, and $N_2$ are
sufficiently large and 
the noise in the channel is sufficiently small.
Indeed, it is not so difficult to realize sufficiently large $N_s$, $N_0$, $N_1$, and $N_2$ for these requirements
because a universal$_2$ hash function (or an $\varepsilon$-almost dual universal$_2$ hash function)
with a large size can be implemented with a small cost \cite{AT11}.

Since the decoy method has so many parameters,
it is quite difficult to derive tight evaluation.
The proposed method might be improved by modifying several points.
However, such a modification might make the protocol more complex.
For example, 
while we treat the decoding phase error probability
and the estimation error probability, separately,
The paper \cite{finite} treated them jointly.
In order to keep the simplicity, it is better to treat these terms separately.
Further, in Section \ref{s6}, we proposed to treat the probability based on the hypergeometric distribution by using the binomial distribution.
If we treat the probabilities given in Section \ref{s7} 
with the hypergeometric distribution, we obtain a better evaluation,
but our analysis becomes much harder.

Therefore, we have to consider the trade-off
between the complexity and the tightness of our evaluation.
This kind of trade-off cannot be ignored from an industrial view point. 
If the protocol is more complex, the cost for maintenance becomes higher.
In particular, when we change the arrangement of the total system
or we change the parameter of the system,
we have to rewrite the program for calculating the sacrifice bit-length.
If the protocol is simple, the change can be easily done.
Otherwise, it spends some additional cost.
Hence, we have to take into account this trade-off.
This paper has treated this trade-off heuristically.

However, its systematic treatment might be possible partially in the following sense.
Assume that we employ the Renner's formalism instead of the phase error correction formalism.
If we parametrize the channel with more parameters to be estimated,
the asymptotic key generation rate becomes better.
One might consider that, 
if the number of parameters describing the model increases,
we obtain a better estimation of the model.
However, it is considered that it is not true in statistics.
This is because if we do not have enough data to characterize so many parameters,
we obtain a larger error.
In order to resolve this problem, we have to treat the trade-off
between the error and the number of parameters.
Such a problem is called the model selection.
In order to treat this problem quantitatively,
we can use several information criteria, e.g.,
Akaike information criterion (AIC)\cite{AIC},
Takeuchi information criterion (TIC)\cite{TIC}, and 
minimum description length principle (MDL)\cite{MDL}.
If we employ the Renner's formalism,
and increase the number of channel parameters for precise description of channel,
we need to consider this kind of trade-off.
Currently, it is not known that what kind of information criterion is suitable for the above our trade-off.

\section*{Acknowledgment}
MH thanks 
Prof. Masahide Sasaki, Prof. Akihisa Tomita, Dr. Toyohiro Tsurumaru,
Prof. Ryutaroh Matsumoto, Dr. Kiyoshi Tamaki, and Dr. Wataru Kumagai 
for valuable comments.
He is partially supported by a MEXT Grant-in-Aid for Scientific Research (A) No. 23246071.
He is also partially supported by the National Institute of Information and Communication Technology (NICT), Japan.
The Center for Quantum Technologies is funded by the Singapore
Ministry of Education and the National Research Foundation
as part of the Research Centres of Excellence programme.

\appendix

\section{Chernoff inequality}\Label{as1}
In this section, we derive a lower bound of the lower percent point
$X_{\per}^-(N,p,\alpha)$ with probability $\alpha$
by using Chernoff inequality.
When the random variable $X$ obeys the 
binomial distribution $Bin(N,p)$,
Chernoff inequality
\begin{eqnarray}
P_p\{X \le Nq\} \le \exp(-ND(q\|p))
\end{eqnarray}
holds with $q<p$, 
where the relative entropy $D(q\|p)$
is defined as $q \log \frac{q}{p}+(1-q) \log \frac{1-q}{1-p}$,
where
$P_p$ is the distribution when the success probability with one trial is 
$p$.

Hence,  
letting $q^-$ be the solution of 
the equation $D(q\|p)=-\frac{\log \alpha}{N}$ with respect to $q$ with $q<p$,
we obtain
\begin{eqnarray}
P_p\{X \le Nq^{-}\} \le \exp(-ND(q^{-}\|p))=\alpha.
\end{eqnarray}
That is, we obtain
$X_{\per}^-(N,p,\alpha) \ge N q^{-}$.
Similarly,
letting $q^+$ be the solution of 
the equation $D(q\|p)=-\frac{\log \alpha}{N}$ with respect to $q$ with $q>p$,
we obtain
$X_{\per}^+(N,p,\alpha) \le N q^{+}$.

Further, combining 
Pinsker inequality
$D(q\|p)\ge 2 (\log e) (p-q)^2 $,
we obtain 
\begin{eqnarray}
P_p\{X \le Nq\} \le \exp(-2 (\log e) N(p-q)^2 ).
\end{eqnarray}
Hence,  
solving 
the equation $2 (\log e) (p-q)^2=-\frac{\log \alpha}{N}$ with respect to $q$,
we obtain two solutions
$\tilde{q}^-:=p- \sqrt{\frac{-\log \alpha}{2 (\log e)N}}$
and
$\tilde{q}^+:=p+ \sqrt{\frac{-\log \alpha}{2 (\log e)N}}$.
Then, we obtain
$X_{\per}^-(N,p,\alpha) \ge N \tilde{q}^{-}$
and $X_{\per}^+(N,p,\alpha) \le N \tilde{q}^{+}$.

Using the information geometry,
we have a better evaluation than
Pinsker inequality as follows.
The relative entropy can be written with an integral form as follows\cite{A-N}.
\begin{eqnarray}
\frac{D(q\|p)}{\log e}=\int_q^p \frac{t-p}{t(1-t)}dt.\Label{q-1-1}
\end{eqnarray}
We consider only the case $p < 1/2$.
When $q < p < 1/2 $,
we have
\begin{eqnarray}
\frac{D(q\|p)}{\log e}\ge \frac{(p-q)^2}{2p(1-p)} .
\end{eqnarray}
Hence,  
solving 
the equation $\frac{(p-q)^2}{2p(1-p)}=-\frac{\log \alpha}{N (\log e)}$ with respect to $q$,
we obtain the smaller solution
$\bar{q}^-:=p- \sqrt{\frac{-2 (\log \alpha)p(1-p)}{(\log e)N}}$.
Then, we obtain
$X_{\per}^-(N,p,\alpha) \ge N \bar{q}^{-}$.

The treatment for
$X_{\per}^+(N,p,\alpha)$ is a little complex.
When $p < q \le 1/2 $,
we have
\begin{eqnarray}
\frac{D(q\|p)}{\log e}\ge \frac{(p-q)^2}{2q(1-q)} .
\end{eqnarray}
Hence,  
solving 
the equation $\frac{(p-q)^2}{2q(1-q)}=-\frac{\log \alpha}{N (\log e)}$ with respect to $q$,
we obtain the larger solution
$\bar{q}^+:=
\frac{
p-\log \alpha/(N\log e)
+ \sqrt{
(-p^{2}+p -\log \alpha/(2N\log e))\cdot 
(-2 \log \alpha)/(N\log e)
}
}{1-2 \log \alpha/(N\log e) }$.
Then, 
when $\bar{q}^+\le 1/2$,
we obtain
$X_{\per}^+(N,p,\alpha) \le N \bar{q}^{+}$.
Indeed, 
since $\bar{q}^+$ is complicated, 
we introduce a simpler upper bound.
Since $\sqrt{a+b}\le \sqrt{a}+\sqrt{b}$,
\begin{eqnarray*}
\fl \bar{q}^+
\le
\hat{q}^+ &:=
\frac{
p-\log \alpha/(N\log e)
+ \sqrt{
(-p^{2}+p)(-2 \log \alpha)/(N\log e)
}
+
\sqrt{ (\log \alpha/(N\log e))^2}
}{1-2 \log \alpha/(N\log e) }\\
\fl &= \frac{
p-2 \log \alpha/(N\log e)
+ \sqrt{
p(1-p)(-2 \log \alpha)/(N\log e)
}
}{1-2 \log \alpha/(N\log e) }.
\end{eqnarray*}
Then,
when $\hat{q}^+\le 1/2$,
we obtain
$X_{\per}^+(N,p,\alpha) \le N \hat{q}^{+}$.

\section{One-sided interval estimation}\Label{as2}
\subsection{One-sided interval estimation based of F distribution}
We consider 
lower one-sided interval estimation with the confidential level $1-\alpha$
when we observe the value $k$ subject to the binomial distribution $Bin(N,p)$ with $N$ trials and probability $p$.

For this purpose, 
when we fix an integer $k$ and define the constants
\begin{eqnarray}
n_1:=2(N-k+1),~ n_2:=2 k,~
f_1:=\frac{n_2}{n_1}\frac{(1-p)}{p},
\end{eqnarray} 
it is known that the random variable $F(n_1,n_2)$ subject to 
F distribution with the freedom $(n_1,n_2)$ satisfies
\begin{eqnarray}
P \{F(n_1,n_2)>f_1\}=
P_p \{ X \ge k\}
=\sum_{i=k}^{N}{N \choose i} p^{i}(1-p)^{N-i}.
\end{eqnarray}
Our task is solving 
$P_p\{ X \ge k\}=1-\alpha$
with respect to $p$ with $p< \frac{k}{N}$ for a given $k$.
Define $f_1^*$ to be the solution of
$P\{F(n_1,n_2)>f_1\}=1-\alpha$ with respect to $f_1$.
Then, 
the solution $p= \frac{n_2}{n_1 f_1^* +n_2} $
satisfies the equation 
$\frac{n_2}{n_1}\frac{(1-p)}{p}=f_1^*$.
Thus, we obtain
\begin{eqnarray}
P_{\frac{n_2}{n_1 f_1^* +n_2}} \{ X \ge k\}
=1-\alpha.
\end{eqnarray}
That is, 
$\frac{n_2}{n_1 f_1^* +n_2}$ is 
the 
lower confidence limit
$p_{\est}^{-}(N,k,\alpha)$ 
of the lower one-sided interval estimation with the confidential level $1-\alpha$
when we observe the value $k$.

Similarly, when
we fix an integer $k$ and define the constants
\begin{eqnarray}
m_1=2(k+1),~m_2=2(N-k),~
f_2=\frac{m_1}{m_2}\frac{p}{(1-p)},
\end{eqnarray} 
it is known that the random variable $F(m_1,m_2)$ subject to 
F distribution with the freedom $(m_1,m_2)$ satisfies
\begin{eqnarray}
P\{ F(m_1,m_2)>f_2\}
=
P_p \{ X \ge k\}.
\end{eqnarray}
Our task is solving 
$P_p \{ X \ge k\}
=\alpha$
with respect to $p$ with $p< \frac{k}{N}$ for a given $k$.
Define $f_2^*$ to be the solution of
$P(F(m_1,m_2)>f_2)=\alpha$ with respect to $f_2$.
Then, 
the solution $p= \frac{m_1 f_2}{m_1 f_2+m_2} $
satisfies the equation 
$\frac{m_1}{m_2}\frac{p}{(1-p)}=f_2^*$.
Thus, we obtain
\begin{eqnarray}
P_{\frac{m_2}{m_1 f_2^* +m_2}} \{ X \ge k\}
(1-\frac{m_2}{m_1 f_2^* +m_2})^{N-i}
=\alpha.
\end{eqnarray}
That is, 
$\frac{m_2}{m_1 f_2^* +m_2}$ is the 
upper confidence limit
$p_{\est}^{+}(N,k,\alpha)$ 
of the upper one-sided interval estimation with the confidential level $1-\alpha$
when we observe the value $k$.

\subsection{Application of Chernoff inequality}
Assume that we observe the random variable $X$ subject to the binomial distribution $Bin(N,p)$ with $N$ trials and probability $p$.
For a fixed integer $k$,
we have
\begin{eqnarray}
P_p\{\frac{X}{N} \le  \frac{k}{N} \} \le \exp(-ND(\frac{k}{N}\|p))
\end{eqnarray}
with $\frac{k}{N}<p$.
Hence,  
letting $p^-$ be the solution of 
the equation $D(\frac{k}{N}\|p)=-\frac{\log \alpha}{N}$ with respect to $p$ with $\frac{k}{N}<p$,
we obtain
\begin{eqnarray}
P_{p^-}\{\frac{X}{N} \le \frac{k}{N} \} 
\le \exp(-ND(\frac{k}{N} \|p^{-}))=\alpha.
\end{eqnarray}
Thus, $p^-\le p_{\est}^{-}(N,k,\alpha)$. 
Similarly,
letting $q^+$ be the solution of 
the equation $D(\frac{k}{N}\|p)=-\frac{\log \alpha}{N}$ with respect to $p$ with 
$\frac{k}{N}>p$,
we obtain
$p^+ \ge p_{\est}^{+}(N,k,\alpha)$. 

Further, combining Pinsker inequality
$D(q\|p)\ge 2 (\log e) (p-q)^2 $,
we obtain 
\begin{eqnarray}
P_p\{\frac{X}{N} \le  \frac{k}{N} \} \le \exp(-2 (\log e) N(p-\frac{k}{N})^2 ).
\end{eqnarray}
Hence,  
solving 
the equation $2 (\log e) (p-\frac{k}{N})^2=-\frac{\log \alpha}{N}$ with respect to $p$,
we obtain two solutions
$\tilde{p}^-:=\frac{k}{N}- \sqrt{\frac{-\log \alpha}{2 (\log e)N}}$
and
$\tilde{p}^+:=\frac{k}{N}+ \sqrt{\frac{-\log \alpha}{2 (\log e)N}}$.
Then, we obtain
$\tilde{p}^- \le p_{\est}^{-}(N,k,\alpha)$
and
$\tilde{p}^+ \ge p_{\est}^{+}(N,k,\alpha)$. 

Using the relation (\ref{q-1-1}),
we consider better bounds only for the case $\frac{k}{N} < 1/2$.
Solving 
the equation $\frac{(p-\frac{k}{N})^2}{2\frac{k}{N}(1-\frac{k}{N})}=-\frac{\log \alpha}{N (\log e)}$ with respect to $p$
with $p < \frac{k}{N} < 1/2 $,
we obtain the smaller solution
$\bar{p}^-:=\frac{k}{N}- \sqrt{\frac{-2 (\log \alpha)\frac{k}{N}(1-\frac{k}{N})}{(\log e)N}}$.
Then, we obtain
$p_{\est}^{-}(N,k,\alpha) \ge \bar{q}^{-}$.
The treatment for
$p_{\est}^{+}(N,k,\alpha)$ is a little complex.
When $\frac{k}{N} < p \le 1/2 $,
we have
\begin{eqnarray}
\frac{D(\frac{k}{N}\|p)}{\log e}\ge \frac{(p-q)^2}{2q(1-q)} .
\end{eqnarray}
Hence,  
solving 
the equation $\frac{(p-\frac{k}{N})^2}{2p(1-p)}=-\frac{\log \alpha}{N (\log e)}$ with respect to $p$,
we obtain the larger solution
$\bar{p}^+:=
\frac{
k/N-\log \alpha/(N\log e)
+ \sqrt{
(-(k/N)^2+k/N -\log \alpha/(2N\log e))
(-2 \log \alpha)/(N\log e)
}
}{1-2 \log \alpha/(N\log e) }$.
Then, 
when $\bar{p}^+\le 1/2$,
we obtain
$p_{\est}^{+}(N,k,\alpha) \le \bar{p}^{+}$.
Indeed, 
since $\bar{p}^+$ is complicated, 
we introduce a simpler upper bound:
\begin{eqnarray*}
\fl \bar{p}^+
\le
\hat{p}^+ &:=
\frac{
k/N-2 \log \alpha/(N\log e)
+ \sqrt{
k/N(1-k/N)(-2 \log \alpha)/(N\log e)
}
}{1-2 \log \alpha/(N\log e) }.
\end{eqnarray*}
Then,
when $\hat{p}^+\le 1/2$,
we obtain
$p_{\est}^{+}(N,k,\alpha) \le \hat{p}^{+}$.

\section{Calculation with the Gaussian case}\Label{as3}
In order to calculate the sacrifice bit-length given in Section \ref{s12}, 
we need $\rE[e^{\mu_i}]$, $\rE[\mu_i e^{\mu_i}]$, $\rE[\mu_i^{2} e^{\mu_i}]$, and $\omega_2$.
For this purpose, we calculate $e^{-\frac{1}{2\sigma^2}(x-(\mu-\sigma^2))^2}$ as follows.
\begin{eqnarray}
\fl \frac{d e^{-\frac{1}{2\sigma^2}(x-(\mu-\sigma^2))^2}}{d x}&=&-\frac{1}{\sigma^2}(x-(\mu-\sigma^2))e^{-\frac{1}{2\sigma^2}(x-(\mu-\sigma^2))^2}  \\
\fl \frac{d^{2} e^{-\frac{1}{2\sigma^2}(x-(\mu-\sigma^2))^2}}{d x^{2}}&=&\frac{1}{\sigma^4}(x^{2} e^{-\frac{1}{2\sigma^2}(x-(\mu-\sigma^2))^2}-2(\mu-\sigma^2)xe^{-\frac{1}{2\sigma^2}(x-(\mu-\sigma^2))^2} \nonumber \\
&&+((\mu-\sigma^2)^2-\sigma^2)e^{-\frac{1}{2\sigma^2}(x-(\mu-\sigma^2))^2}).
\end{eqnarray}
Hence, $xe^{-\frac{1}{2\sigma^2}(x-(\mu-\sigma^2))^2},x^{2}e^{-\frac{1}{2\sigma^2}(x-(\mu-\sigma^2))^2}$
can be written by using $e^{-\frac{1}{2\sigma^2}(x-(\mu-\sigma^2))^2}$ and its first and second derivatives as follows.
\begin{eqnarray}
\fl xe^{-\frac{1}{2\sigma^2}(x-(\mu-\sigma^2))^2}&=&-\sigma^2\frac{d e^{-\frac{1}{2\sigma^2}(x-(\mu-\sigma^2))^2}}{d x}+(\mu-\sigma^2)e^{-\frac{1}{2\sigma^2}(x-(\mu-\sigma^2))^2} \Label{D3}\\
\fl x^{2}e^{-\frac{1}{2\sigma^2}(x-(\mu-\sigma^2))^2}&=&\sigma^4\frac{d^{2} e^{-\frac{1}{2\sigma^2}(x-(\mu-\sigma^2))^2}}{d x^{2}}
\nonumber \\
\fl 
&& +(2(\mu-\sigma^2)x-((\mu-\sigma^2)^2-\sigma^2))e^{-\frac{1}{2\sigma^2}(x-(\mu-\sigma^2))^2} .
\Label{D4}
\end{eqnarray}
We also prepare the following formula for $e^{-x}e^{-\frac{(x-\mu)^2}{2\sigma^2}}$.
\begin{eqnarray}
\fl e^{-x}e^{-\frac{(x-\mu)^2}{2\sigma^2}}&=&e^{-\frac{(x-\mu)^2}{2\sigma^2}-x}
=e^{-\frac{1}{2\sigma^2}(x^{2}-2(\mu-\sigma^2) x+\mu^{2})} \nonumber \\
\fl &=&e^{-\frac{1}{2\sigma^2}(x-(\mu-\sigma^2))^2} e^{-\frac{\mu^{2}}{2\sigma^2}+\frac{(\mu-\sigma^2)^2}{2\sigma^2}}
=e^{-\frac{1}{2\sigma^2}(x-(\mu-\sigma^2))^2}e^{\frac{(\sigma^2-2\mu)}{2}} \Label{D5}.
\end{eqnarray}

When $X$ obeys the Gaussian distribution with the average $\mu$ and the variance $\sigma^2$,
using (\ref{D3}), (\ref{D4}), and (\ref{D5}), 
we can calculate 
the expectations of $e^{-x},xe^{-x}$, and $x^{2}e^{-x}$ as follows.
\begin{eqnarray}
\fl \rE[e^{-x}]&=&\frac{1}{\sqrt{2\pi\sigma^2}}\int^{\infty}_{-\infty}e^{-x}e^{-\frac{(x-\mu)^2}{2\sigma^2}}dx
=\frac{e^{\frac{(\sigma^2-2\mu)}{2}}}{\sqrt{2\pi\sigma^2}}\int^{\infty}_{-\infty}e^{-\frac{1}{2\sigma^2}(x-(\mu-\sigma^2))^2}dx \nonumber \\
\fl &=&e^{\frac{(\sigma^2-2\mu)}{2}}\Label{D6} \\
\fl \rE[x e^{-x}]&=&\frac{1}{\sqrt{2\pi\sigma^2}}\int^{\infty}_{-\infty}x e^{-x}e^{-\frac{(x-\mu)^2}{2\sigma^2}}dx
=\frac{e^{\frac{(\sigma^2-2\mu)}{2}}}{\sqrt{2\pi\sigma^2}}\int^{\infty}_{-\infty}x e^{-\frac{1}{2\sigma^2}(x-(\mu-\sigma^2))^2}dx \nonumber \\
\fl &=&e^{\frac{(\sigma^2-2\mu)}{2}}(\mu-\sigma^2)\Label{D7}\\ 
\fl \rE[x^{2}e^{-x}]&=&\frac{1}{\sqrt{2\pi\sigma^2}}\int^{\infty}_{-\infty}x^{2} e^{-x}e^{-\frac{(x-\mu)^2}{2\sigma^2}}dx
=\frac{e^{\frac{(\sigma^2-2\mu)}{2}}}{\sqrt{2\pi\sigma^2}}\int^{\infty}_{-\infty}x^{2} e^{-\frac{1}{2\sigma^2}(x-(\mu-\sigma^2))^2}dx \nonumber \\
\fl &=&\frac{e^{\frac{(\sigma^2-2\mu)}{2}}}{\sqrt{2\pi\sigma^2}}(\sigma^4 \int^{\infty}_{-\infty} \frac{d^{2} e^{-\frac{1}{2\sigma^2}(x-(\mu-\sigma^2))^2}}{d x^{2}} dx+2(\mu-\sigma^2)\int^{\infty}_{-\infty} xe^{-\frac{1}{2\sigma^2}(x-(\mu-\sigma^2))^2} dx  \nonumber \\ 
\fl &&-((\mu-\sigma^2 )^2-\sigma^2 ) \int^{\infty}_{-\infty} e^{-\frac{1}{2\sigma^2}(x-(\mu-\sigma^2))^2} dx) \nonumber \\
\fl &=&e^{\frac{(\sigma^2-2\mu)}{2}}((\mu-\sigma^2)^2+\sigma^2 ).
\Label{D8} 
\end{eqnarray}
Next, we calculate the real number 
$\omega_2$ when 
$\mu_1$ obeys the Gaussian distribution with the average $\mu$ and the variance $\sigma^2$.
\begin{eqnarray}
\fl\omega_2 &:=&\sum^{\infty}_{n=2} \frac{\rE[e^{-\mu_1}\mu_1^n]}{n! \rE[e^{-\mu_1}\mu_1^2]}=\frac{1}{\rE[e^{-\mu_1}\mu_1^2]} \rE[e^{-\mu_1}\sum^{\infty}_{n=2}\frac{1}{n!}\mu_1^n]\nonumber\\
\fl &=&\frac{1}{\rE[e^{-\mu_1}\mu_1^2]}\rE[e^{-\mu_1}((\sum^{\infty}_{n=0}\frac{1}{n!}\mu_1^n) -1-\mu_1)] 
=\frac{1}{\rE[e^{-\mu_1}\mu_1^2]}\rE[e^{-\mu_1}(e^{\mu_1} -1-\mu_1)] \nonumber\\
\fl &=& \frac{1}{\rE[e^{-\mu_1}\mu_1^2]}(1-\rE[e^{-\mu_1}]-\rE[\mu_1 e^{-\mu_1}])
=\frac{e^{-\frac{(\sigma^2-2\mu)}{2}}-(\mu-\sigma^2) -1}{(\mu-\sigma^2)^2+\sigma^2 }.
\Label{D9}
\end{eqnarray}

\section{Relation with Eve's success probability}\Label{as4}
We consider the state $\rho_{AE}:=\sum_m P(m) |m \rangle \langle m| \otimes \rho_{AE|m}$, where $\rho_{AE|m}$
is the composite state on $(\C^{2})^{\otimes m} \otimes {\cal H}_E$.
Now, we consider a function $f$ from $\cup_m \{0,1\}^m$ to $\{0,1\}$.
Then, we have the state $\rho_{f(A),E}=\sum_m P(m) \rho_{f(A)E|m}$ on $\C^{2} \otimes {\cal H}_E$.
Due to the monotonicity of the trace norm,
the state $\rho_{f(A),E}$ satisfies 
\begin{eqnarray}
\| \rho_{f(A),E}- \rho_{f(A)} \otimes \rho_E \|_1
\le
\| \rho_{A,E}- \rho_{\ideal} \|_1.
\end{eqnarray}

When $\rho_{f(A),E}
= p_0 |0\rangle \langle 0| \otimes \rho_{0,E}+
p_1 |1\rangle \langle 1| \otimes \rho_{1,E}$,
due to the monotonicity of the trace norm,
any two-valued POVM $\{T,I-T\}$ on ${\cal H}_E$
satisfies
\begin{eqnarray*}
\fl & \| \rho_{f(A),E}- \rho_{f(A)} \otimes \rho_E \|_1 \\
\fl \ge &
p_0 
(| \Tr \rho_{0,E} T -\Tr (p_0 \rho_{0,E}+p_1 \rho_{1,E}) T|+
| \Tr \rho_{0,E} (I-T) -\Tr (p_0 \rho_{0,E}+p_1 \rho_{1,E}) (I-T)|) \\
\fl &+
p_1
(| \Tr \rho_{1,E} T -\Tr (p_0 \rho_{0,E}+p_1 \rho_{1,E}) T|+
| \Tr \rho_{1,E} (I-T) -\Tr (p_0 \rho_{0,E}+p_1 \rho_{1,E}) (I-T)|) \\
\fl =&
4 p_0 p_1 
| \Tr \rho_{0,E} T -\Tr \rho_{1,E} T|.
\end{eqnarray*}
When 
$T$ supports $f(A)=0$ and $I-T$ supports $f(A)=1$,
the success probability is bounded by
\begin{eqnarray*}
\fl & p_0 \Tr \rho_{0,E}T+
p_1 \Tr \rho_{1,E}(I-T) 
\le 
\max(p_0,p_1)
(\Tr \rho_{0,E}(I-T)+ \Tr \rho_{1,E}T ) \\
\fl =&
\max(p_0,p_1)
(1+| \Tr \rho_{0,E} T -\Tr \rho_{1,E} T|) 
\le
\min(p_0,p_1)
(1+\frac{1}{4 p_0 p_1 }
\| \rho_{f(A),E}- \rho_{f(A)} \otimes \rho_E \|_1)
.
\end{eqnarray*}

\section*{References}

\end{document}